\titleformat{\subsection}[runin]{\normalfont\bfseries}{\thesubsection.}{.5em}{}[.]\titlespacing{\subsection}{0pt}{2ex plus .1ex minus .2ex}{.8em}
\titleformat{\subsubsection}[runin]{\normalfont\itshape}{\thesubsubsection.}{.3em}{}[.]\titlespacing{\subsubsection}{0pt}{1ex plus .1ex minus .2ex}{.5em}
\titleformat{\paragraph}[runin]{\normalfont\itshape}{\theparagraph.}{.3em}{}[.]\titlespacing{\paragraph}{0pt}{1ex plus .1ex minus .2ex}{.5em}
\definecolor{darkred}{rgb}{0.9,0,0.3}
\definecolor{darkblue}{rgb}{0,0.3,0.9}
\definecolor{vdarkred}{rgb}{0.6,0,0.2}
\definecolor{vdarkblue}{rgb}{0,0.2,0.6}
\numberwithin{equation}{section}
\numberwithin{figure}{section}
\theoremstyle{plain} %plain, definition, remark
\newtheorem{theorem}{Theorem}[section]
\newtheorem*{theorem*}{Theorem}
\newtheorem{lemma}[theorem]{Lemma}
\newtheorem*{lemma*}{Lemma}
\newtheorem*{corollary*}{Corollary}
\newtheorem{proposition}[theorem]{Proposition}
\newtheorem*{proposition*}{Proposition}
\newtheorem*{conjecture*}{Conjecture}
\theoremstyle{definition} %plain, definition, remark
\newtheorem{definition}[theorem]{Definition}
\newtheorem*{definition*}{Definition}
\newtheorem*{example*}{Example}
\newtheorem{remark}[theorem]{Remark}
\newtheorem*{remark*}{Remark}
\newtheorem*{assumption*}{Assumption}
\renewcommand{\cal}{\mathcal} 
\newcommand{\fra}{\mathfrak} 
\renewcommand{\P}{\mathbb{P}}
\newcommand{\R}{\mathbb{R}}
\newcommand{\C}{\mathbb{C}}
\newcommand{\N}{\mathbb{N}}
\newcommand{\Z}{\mathbb{Z}}
\newcommand{\ee}{\mathrm{e}}
\newcommand{\ii}{\mathrm{i}}
\newcommand{\dd}{\mathrm{d}}
\newcommand*{\deq}{\mathrel{\vcenter{\baselineskip0.65ex \lineskiplimit0pt \hbox{.}\hbox{.}}}=}
\renewcommand{\leq}{\leqslant}
\renewcommand{\geq}{\geqslant}
\renewcommand{\epsilon}{\varepsilon}
\newcommand{\pb}[1]{\bigl({#1}\bigr)}
\newcommand{\abs}[1]{\lvert #1 \rvert}
\newcommand{\norm}[1]{\lVert #1 \rVert}
\newcommand{\scalar}[2]{\langle{#1} \mspace{2mu}, {#2}\rangle}
\DeclareMathOperator{\tr}{Tr}
\DeclareMathOperator{\re}{Re}
\DeclareMathOperator{\im}{Im}
\DeclareMathOperator{\spec}{spec}
\begin{document}

\title{A microscopic derivation of time-dependent correlation functions of the $1D$ cubic nonlinear Schr\"{o}dinger equation}
\author{
J\"urg Fr\"ohlich\footnote{ETH Z\"urich, Institute for Theoretical Physics, {\tt juerg@phys.ethz.ch}.}
\and Antti Knowles\footnote{University of Geneva, Department of Mathematics, {\tt antti.knowles@unige.ch}.}
\and Benjamin Schlein\footnote{University of Z\"urich, Department of Mathematics, {\tt benjamin.schlein@math.uzh.ch}.}
\and Vedran Sohinger\footnote{University of Z\"{u}rich, Department of Mathematics, {\tt vedran.sohinger@math.uzh.ch}.}
}

\maketitle

\begin{abstract}
We give a microscopic derivation of time-dependent correlation functions of the $1D$ cubic nonlinear Schr\"{o}dinger equation (NLS) from many-body quantum theory.
The starting point of our proof is \cite{FrKnScSo} on the time-independent problem and \cite{Knowles_Thesis} on the corresponding problem on a finite lattice. An important new obstacle in our analysis is the need to work with a cutoff in the number of particles, which breaks the Gaussian structure of the free quantum field and prevents the use of the Wick theorem. We overcome it by the use of means of complex analytic methods. Our methods apply to the nonlocal NLS with bounded convolution potential. In the periodic setting, we also consider the local NLS, arising from short-range interactions in the many-body setting. To that end, we need the dispersion of the NLS in the form of periodic Strichartz estimates in $X^{s,b}$ spaces.
\end{abstract}

\section{Setup and main result}

Let $\fra H$ be a Hilbert space, $H \in C^{\infty}(\fra H)$ a Hamiltonian function, and $\{\cdot,\cdot\}$ a Poisson bracket on $C^{\infty}(\fra H) \times C^{\infty}(\fra H)$. We can then define the Hamiltonian flow of $H$ on $\fra H$, which we denote by $u \mapsto S_t u$. Furthermore, we introduce the \emph{Gibbs measure} associated with the Hamiltonian $H$, defined as a probability measure $\P$ on $\fra H$ formally given by
\begin{equation}
\label{Gibbs measure}
\dd \P (u)\;\deq\; \frac{1}{Z} \ee^{-H(u)} \,\dd u\,,
\end{equation}
where $Z$ is a positive normalization constant and $\dd u$ is Lebesgue measure on $\fra H$ (whish is ill-defined if $\fra H$ is infinite-dimensional).
The problem of the construction of measures of the type \eqref{Gibbs measure} was first considered in the constructive quantum field theory literature, c.f.\ \cite{Glimm_Jaffe,Simon74} and the references therein, and later in \cite{LRS,McKean_Vaninsky1,McKean_Vaninsky2}.
In the context of nonlinear dispersive PDEs, the invariance of measures of the type \eqref{Gibbs measure} has been considered in the work of Bourgain \cite{B,Bourgain_ZS,B1,B2,B3} and Zhidkov \cite{Zhidkov}, and in the subsequent literature. An important application of the invariance is to obtain a substitute for a conservation law at low regularity which, in turn, allows us to construct solutions for random initial data of low regularity.
We refer the reader to the introduction of \cite{FrKnScSo} for a detailed overview and for further references.

Given $\P$ as in \eqref{Gibbs measure}, natural objects to consider are the associated \emph{time-dependent correlation functions}. More precisely, for $m \in \N$, times $t_1, \ldots,t_m \in \R$, and
functions $X^1,\ldots,X^m \in C^\infty(\fra H)$, we consider
\begin{equation}
\label{C_P}
Q_{\P}(X^1,\ldots,X^m;t_1,\ldots,t_m) \;\deq\; \int X^1(S_{t_1}u) \, \cdots \, X^m(S_{t_m}u)\,\dd \P(u)\,,
\end{equation}
the $m$-particle time-dependent correlation function associated with $H$.
The goal of this paper is a microscopic derivation of \eqref{C_P} from the corresponding many-body quantum objects in the case when the Hamiltonian flow is the flow of a cubic nonlinear Schr\"{o}dinger equation in one spatial dimension.
This is the time-dependent variant of the question previously considered in \cite{FrKnScSo,Lewin_Nam_Rougerie}.

We now set this up more precisely. Let us consider the spatial domain $\Lambda=\mathbb{T}^1$ or $\mathbb{R}$.
The \emph{one-particle space} is given by $\fra H \deq L^2(\Lambda; \C)$. The scalar product and norm on $\fra H$ are denoted by $\scalar{\cdot}{\cdot}_{\fra H}$ and $\|\cdot\|_{\fra H}$ respectively. We use the convention that $\scalar{\cdot}{\cdot}_{\fra H}$ is linear in the second argument.
We start from the \emph{one-body Hamiltonian}
\begin{equation} 
\label{Hamiltonian h1}
h \;\deq\; -\Delta + \kappa + v\,,
\end{equation}
for a \emph{chemical potential} $\kappa>0$ and a \emph{one-body potential} $v : \Lambda \to [0,+\infty)$.
This is a positive, self-adjoint densely defined operator on $\fra H$. Furthermore, we assume that $h$ has a compact resolvent and satisfies 
\begin{equation}
\label{Tr h^{s-1}}
\tr h^{-1} \;<\;\infty\,.
\end{equation}
In particular, we can take $v=0$ when $\Lambda=\mathbb{T}^1$. 
We write the spectral representation of $h$ as
\begin{equation}
\label{Hamiltonian h}
h \;=\; \sum_{k \in \N} \lambda_k u_k u_k^*\,.
\end{equation}
Here $\lambda_k>0$ are the eigenvalues and $u_k$ are the associated normalized eigenfunctions in $\fra H$ of the operator $h$. We consider an \emph{interaction potential} $w$ that satisfies
\begin{equation}
\label{definition_w}
w \in L^{\infty}(\Lambda)\,,\quad w \geq 0 \mbox{ pointwise}\,.
\end{equation}
The Hamilton function that we consider is
\begin{equation}
\label{H_definition} 
H(u) \;\deq\; \int_\Lambda \dd x \, \pb{\abs{\nabla u(x)}^2 + v(x) \abs{u(x)}^2} + \frac{1}{2} \int_\Lambda \dd x \, \dd y \, \abs{u(x)}^2 \, w(x - y) \, \abs{u(y)}^2\,,
\end{equation}
where $\dd x$ denotes the Lebesgue measure on $\Lambda$. We often abbreviate $\int_\Lambda \dd x \equiv \int \dd x$.
The space of fields $u: \Lambda \to \C$ generates a Poisson algebra where the Poisson bracket is given by
\begin{equation} \label{Poisson}
\{u(x),\bar{u}(y)\}\;=\;\ii \delta(x-y) \,, \qquad \{u(x),u(y)\}\;=\; \{\bar{u}(x),\bar{u}(y)\} \;=\;0\,.
\end{equation}
The Hamiltonian equation of motion associated with \eqref{H_definition}--\eqref{Poisson} is the nonlocal \emph{nonlinear Schr\"{o}dinger equation (NLS)}
\begin{equation}
\label{NLS1}
\ii \partial_t u(x) + (\Delta-\kappa)u(x) \;=\;v(x)\, u(x) +\int_\Lambda \dd y\, |u(y)|^2\,w(x-y)\,u(x)\,.
\end{equation}
In addition to \eqref{NLS1}, we also consider the local NLS
\begin{equation}
\label{localNLS1}
\ii \partial_t u(x) + (\Delta-\kappa)u(x) \;=\;|u(x)|^2u(x)\,,
\end{equation}
obtained from \eqref{NLS1} by setting $v=0$ and $w=\delta$.
This is the Hamiltonian equation of motion associated with the Hamiltonian obtained from \eqref{H_definition} by the analogous modifications. 

By the arguments of \cite{Bourgain_1993} we know that both \eqref{NLS1} and \eqref{localNLS1} are globally well-posed in $\fra H$.
Given initial data $u_0 \in \fra H$, we denote the solution at time $t$ by 
\begin{equation}
\label{S_t introduction}
u(t) \;=:\; S_t u_0\,.
\end{equation}

\subsection{The quantum problem}
We use the same conventions as in \cite[Section 1.4]{FrKnScSo}.
We work on the \emph{bosonic Fock space}
\begin{equation*}
\cal F \;\equiv\; \cal F(\fra H) \:\deq\: \bigoplus_{p \in \N} \fra H^{(p)}\,.
\end{equation*} 
Here, for $p \in \N$, the \emph{$p$-particle space} $\fra H^{(p)}$ is defined as the symmetric subspace of $\fra H^{\otimes p}$.
For $f \in \fra H$ let $b^*(f)$ and $b(f)$ denote the usual bosonic creation and annihilation operators on $\cal F$, defined by
\begin{align} \label{def_b1}
\pb{b^*(f) \Psi}^{(p)}(x_1, \dots, x_p) &\;=\; \frac{1}{\sqrt{p}} \sum_{i = 1}^p f(x_i) \Psi^{(p - 1)}(x_1, \dots, x_{i - 1}, x_{i+1}, \dots, x_p)\,,
\\ \label{def_b2}
\pb{b(f) \Psi}^{(p)}(x_1, \dots, x_p) &\;=\; \sqrt{p+1} \int \dd x \, \bar f(x) \, \Psi^{(p+1)} (x,x_1, \dots, x_p)\,,
\end{align}
where we denote vectors of $\cal F$ by $\Psi = (\Psi^{(p)})_{p \in \N}$. They satisfy the canonical commutation relations 
\begin{equation*}
[b(f),b^*(g)]\;=\;\langle f, g \rangle_{\fra H}\,,\quad \,[b(f),b(g)]\;=\;[b^*(f),b^*(g)]\;=\;0\,.
\end{equation*}
We define the rescaled creation and annihilation operators $\phi_\tau^*(f) \deq \tau^{-1/2} \, b^*(f)$ and $\phi_\tau(f) \deq \tau^{-1/2} \, b(f)$. We think of $\phi_\tau^*$ and $\phi_\tau$ as operator-valued distributions and we denote their distribution kernels as $\phi_\tau^*(x)$ and $\phi_\tau(x)$ respectively. In analogy to the classical field $\phi$ defined in \eqref{classical_free_field} below, we call $\phi_\tau$ the quantum field. For more details, we refer the reader to \cite[Section 1.4]{FrKnScSo}.

Let $p \in \N$ and $\xi$ a closed linear operator on $\fra H^{(p)}$, given by a Schwartz integral kernel that we denote by $\xi(x_1, \dots, x_p; y_1, \dots, y_p)$; see \cite[Corollary V.4.4]{RS1}. We define the 
\emph{lift} of $\xi$ to $\cal F$ by
\begin{equation}
\label{Theta_tau_xi}
\Theta_\tau(\xi) \;\deq\; \int \dd x_1 \cdots \dd x_p \, \dd y_1 \cdots \dd y_p \, \xi(x_1, \dots, x_p; y_1, \dots, y_p) \, \phi_\tau^*(x_1) \cdots \phi_\tau^*(x_p) \phi_\tau(y_1) \cdots \phi_\tau(y_p)\,.
\end{equation}
The \emph{quantum interaction} is defined as
\begin{equation}
\label{W_tau}
\cal W_\tau \;\deq\; \frac{1}{2}\,\Theta_\tau(W) \;=\; \frac{1}{2} \int \dd x \; \dd y\; \phi_\tau^*(x) \phi_\tau^*(y) \,w(x-y)\; \phi_\tau(x) \phi_\tau(y)\,.
\end{equation} 
Here $W \equiv W^{(2)}$ is the two particle operator on $\fra H^{(2)}$ given by multiplication by $w(x_1-x_2)$ for $w$ as in \eqref{definition_w}. 
The \emph{free quantum Hamiltonian} is given by
\begin{equation}
\label{H_tau_0}
H_{\tau,0}\;\deq\; \Theta_\tau(h) \;=\; \int \dd x \, \dd y \,  \phi^*_\tau(x) \,h(x;y) \,  \phi_\tau(y)\,.
\end{equation}
The \emph{interacting quantum Hamiltonian} is defined as
\begin{equation}
\label{H_tau}
H_\tau\;\deq\;H_{\tau,0}+\cal W_\tau\,.
\end{equation} 
The \emph{grand canonical ensemble} is defined as $P_\tau \deq \ee^{-H_\tau}$.
We define the \emph{quantum state} $\rho_\tau(\cdot)$ as
\begin{equation}
\label{rho_tau}
\rho_\tau(\cal A) \;\deq\; \frac{\tr (\cal A P_\tau)}{\tr (P_\tau)}\,
\end{equation}
for $\cal A$ a closed operator on $\cal F$. In what follows, it is helpful to work with the \emph{rescaled} version of the interacting quantum Hamiltonian given by $\tau H_\tau$. 

\begin{definition}
\label{Quantum_time_evolution}
Let $\mathbf{A}$ be an operator on the Fock space $\cal F$. 
We define its quantum time evolution as
\begin{equation*}
\Psi_{\tau}^t \, \mathbf{A} \;\deq\;\ee^{\ii t \tau H_\tau} \mathbf{A}\, \ee^{-\ii t \tau H_\tau}\,.
\end{equation*}
\end{definition}

\subsection{The classical problem}

For each $k \in \N$, let $\mu_k$ be a standard complex Gaussian measure, i.e.\ $\mu_k (\dd z) = \frac{1}{\pi} \ee^{-|z|^2} \dd z$, where $\dd z$ is the Lebesgue measure on $\C$. We then introduce the probability space $(\C^\N, \cal G, \mu)$, with $\cal G$ the product sigma-algebra and the product probability measure 
\begin{equation}
\label{measure_mu}
\mu\;\deq\;\bigotimes_{k \in \N} \mu_k. 
\end{equation}
Elements of the corresponding probability space $\C^{\N}$ are denoted by $\omega=(\omega_k)_{k \in \N}$.

We denote by $\phi \equiv \phi(\omega)$ the \emph{free classical field}
\begin{equation}
\label{classical_free_field} 
\phi \;\deq\; \sum_{k \in \N} \frac{\omega_k}{\sqrt{\lambda_k}} \, u_k\,.
\end{equation}
Note that, by \eqref{Tr h^{s-1}}, the sum \eqref{classical_free_field} converges in $\fra H$ almost surely.

For a closed operator $\xi$ on $\fra H^{(p)}$, in analogy to \eqref{Theta_tau_xi}, we define the random variable
\begin{equation}
\label{theta_xi}
\Theta(\xi) \;\deq\; \int \dd x_1 \cdots \dd x_p \, \dd y_1 \cdots \dd y_p \, \xi(x_1, \dots, x_p; y_1, \dots, y_p) \, \bar \phi(x_1) \cdots \bar \phi(x_p) \phi(y_1) \cdots \phi(y_p)\,.
\end{equation}
Note that if $\xi$ is a bounded operator then $\Theta(\xi)$ is almost surely well-defined, since $\phi \in \fra H$ almost surely.

Given $w$ as in \eqref{definition_w}, the \emph{classical interaction} is defined as
\begin{equation} \label{classical interaction}
\cal W \;\deq\; \frac{1}{2} \Theta(W) \;=\; \frac{1}{2} \int \dd x \, \dd y \, \abs{\phi(x)}^2 \, w(x - y) \, \abs{\phi(y)}^2\,.
\end{equation}
Moreover, the \emph{free classical Hamiltonian} is given by
\begin{equation}
\label{classical_free_hamiltonian}
H_0 \;\deq\; \Theta(h) \;=\; \int \dd x \, \dd y \,  \bar{\phi}(x) \,h(x;y) \,  \phi(y)\,.
\end{equation}
The
\emph{interacting classical Hamiltonian} is given by
\begin{equation}
\label{classical_hamiltonian}
H \;\deq\: H_0 + \cal W\,.
\end{equation}

We define the \emph{classical state} $\rho(\cdot)$ as
\begin{equation} \label{rho_frac}
\rho(X) \;\deq\; \frac{\int X \,\ee^{-\cal W} \,\dd \mu}{\int \ee^{-\cal W}\,\dd \mu}\,,
\end{equation}
where $X$ is a random variable.

\begin{definition}
\label{Classical_time_evolution}
Let $p \in \N$ and $\xi$ be a bounded operator on $\fra H^{(p)}$.
We define the random variable
\begin{equation*}
\Psi^t \, \Theta(\xi) \;\deq\; \int \dd x_1 \cdots \dd x_p \, \dd y_1 \cdots \dd y_p \, \xi(x_1, \dots, x_p; y_1, \dots, y_p) \, \overline{S_t \phi}(x_1) \cdots \overline{S_t\phi}(x_p) \,S_t \phi(y_1) \cdots S_t \phi(y_p)\,,
\end{equation*}
where $S_t$  is the flow map from \eqref{S_t introduction}. Note that $\Psi^t \, \Theta(\xi)$ is well defined since $\phi \in \fra H$ almost surely and since $S_t$ preserves the norm on $\fra H$.
\end{definition}

\subsection{Statement of the main results}
We denote by $\cal L(\cal H)$ the space of bounded operators on a Hilbert space $\cal H$. We prove the following result for the flow of \eqref{NLS1}.
\begin{theorem}[Convergence of time-dependent correlation functions for the nonlocal nonlinearity]
\label{Main Result}
Given $m \in \N$, $p_1, \ldots, p_m \in \N$, $\xi^1 \in \cal L(\fra H^{(p_1)}), \ldots, \xi^m \in \cal L(\fra H^{(p_m)})$ and $t_1,\ldots,t_m \in \R$, we have
\begin{equation*}
\lim_{\tau \to \infty} \rho_\tau \Big(\Psi_\tau^{t_1} \,\Theta_\tau(\xi^1)\,\cdots \, 
\Psi_\tau^{t_m} \,\Theta_\tau(\xi^m)\Big) \;=\; \rho \Big(\Psi^{t_1} \,\Theta(\xi^1)\,\cdots \, 
\Psi^{t_m} \,\Theta(\xi^m)\Big)\,.
\end{equation*}
\end{theorem}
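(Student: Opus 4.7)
The plan is to reduce the theorem to the time-independent convergence established in \cite{FrKnScSo} by means of a Dyson (interaction-picture) expansion of each time-evolved observable around the free quantum dynamics generated by $H_{\tau,0}$, matched on the classical side by an iterated Duhamel expansion of the NLS flow $S_t$ around the free Schr\"odinger flow.

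On the quantum side I would pass to the interaction picture, writing $\ee^{-\ii t \tau H_\tau} = \ee^{-\ii t \tau H_{\tau,0}}\, \cal U_\tau(t)$ with
\begin{equation*}
\cal U_\tau(t) \;=\; \sum_{k \geq 0} (-\ii\tau)^k \int_{0 \leq s_1 \leq \cdots \leq s_k \leq t} \dd s_1 \cdots \dd s_k \; \wt{\cal W}_\tau(s_1) \cdots \wt{\cal W}_\tau(s_k),
\end{equation*}
where $\wt{\cal W}_\tau(s) \deq \ee^{\ii s \tau H_{\tau,0}} \cal W_\tau \ee^{-\ii s \tau H_{\tau,0}}$. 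Since $\tau H_{\tau,0}$ is the second quantization of $h$, the inner conjugations reduce to unitary free evolutions on the one- and two-body spaces, so $\wt{\cal W}_\tau(s) = \tfrac12\Theta_\tau(W_s)$ and $\ee^{\ii t_i \tau H_{\tau,0}} \Theta_\tau(\xi^i) \ee^{-\ii t_i \tau H_{\tau,0}} = \Theta_\tau(\xi^i_{t_i})$ for kernels $W_s, \xi^i_{t_i}$ obtained by conjugating $W$ and $\xi^i$ with such evolutions; these preserve the $\cal L(\fra H^{(p)})$-norm. Substituting the Dyson series into each factor $\Psi_\tau^{t_i}\Theta_\tau(\xi^i)$ expresses the left-hand side of the theorem as a formal sum of terms of the form $\rho_\tau\pb{\Theta_\tau(\eta^1)\cdots\Theta_\tau(\eta^n)}$ for bounded kernels $\eta^j$ built from the $\xi^i_{t_i}$ and $W_s$. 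The time-independent result of \cite{FrKnScSo} then supplies the term-by-term convergence $\rho_\tau\pb{\Theta_\tau(\eta^1)\cdots\Theta_\tau(\eta^n)} \to \rho\pb{\Theta(\eta^1)\cdots\Theta(\eta^n)}$ as $\tau\to\infty$.

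On the classical side I would perform the analogous expansion by iterating the Duhamel formula for \eqref{NLS1} around the linear flow $\ee^{\ii s(\Delta - \kappa - v)}$, writing $S_s\phi$ as a formal power series in the nonlinearity with coefficients multilinear in $\phi$ and $\bar\phi$. The resulting classical series reassembles $\rho\pb{\Psi^{t_1}\Theta(\xi^1)\cdots\Psi^{t_m}\Theta(\xi^m)}$ as the termwise $\tau\to\infty$ limit of the quantum Dyson series, provided that the $k\to\infty$ tail of the latter is uniformly small in $\tau$, so that the sum and the limit may be exchanged.

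The main obstacle is precisely this uniformity. The interaction $\cal W_\tau$ is unbounded on $\cal F$, so the naive estimate $\norm{\cal W_\tau}^k/k!$ on the $k$-th Dyson term is useless. A natural remedy is to insert a sharp cutoff $\ind{\cal N_\tau \leq M}$ on the rescaled particle number operator $\cal N_\tau = \Theta_\tau(1)$, on which $\cal W_\tau$ becomes bounded and the series converges absolutely uniformly in $\tau$. However, this destroys the quasi-free structure of the free Gibbs state $\ee^{-H_{\tau,0}}/\tr\ee^{-H_{\tau,0}}$ on which the argument of \cite{FrKnScSo} relies: the restriction to $\hb{\cal N_\tau \leq M}$ is no longer Gaussian, and Wick's theorem fails. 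To restore the Gaussian input, I would view the truncated partition function and correlator as values of functions holomorphic in a complex parameter $z$ (for instance a complex coupling constant or fugacity), obtain uniform-in-$\tau$ a priori bounds on an appropriate disk using the positivity and boundedness of $w$, and then recover the untruncated objects via a Cauchy integral representation. This analytic detour replaces the direct use of Wick's theorem at finite particle number and, combined with the interaction-picture expansion above and the input of \cite{FrKnScSo}, delivers the theorem.
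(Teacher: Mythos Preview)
Your overall strategy---Dyson/Duhamel expansion around the free flow, cutoff in particle number, and complex-analytic repair of the lost Gaussianity---matches the paper's architecture. But two points need sharpening before the argument goes through.

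First, the Dyson series you write for $\cal U_\tau(t)$ carries explicit factors $(\pm\ii\tau)^k$. When you substitute into $\cal U_\tau(t_i)^*\,\Theta_\tau(\xi^i_{t_i})\,\cal U_\tau(t_i)$ and expand as a product, each individual term is of the form $\tau^{K}\,\rho_\tau\pb{\Theta_\tau(\eta^1)\cdots\Theta_\tau(\eta^n)}$ with $K$ the number of interaction insertions; since the time-independent input of \cite{FrKnScSo} gives an $O(1)$ limit for the $\rho_\tau(\cdots)$ factor, the terms blow up individually as $\tau\to\infty$. The cancellation is real but must be organised \emph{before} passing to the limit: one has to expand the conjugated observable $\Psi_\tau^t\Theta_\tau(\xi)$ directly as nested commutators with $\cal W_\tau$, so that each $\tau$ from the generator is absorbed by the $1/\tau$ in $[\Theta_\tau(W),\Theta_\tau(\xi)] = \tfrac{2p}{\tau}\Theta_\tau([W,\xi]_1)+O(\tau^{-2})$. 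This is exactly the Schwinger--Dyson expansion the paper uses (its Lemma~\ref{Schwinger_Dyson_expansion_Q}), and it is what makes the series converge \emph{uniformly in $\tau$} on $\fra H^{(\leq\cal K\tau)}$ with a radius depending only on $\cal K$ and $\|w\|_{L^\infty}$. Your proposal does not exhibit this reorganisation, and without it the termwise convergence plus tail control cannot be combined.

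Second, you introduce the cutoff $\ind{\cal N_\tau\leq M}$ to make the expansion converge but never dispose of the complementary part $\ind{\cal N_\tau>M}$. The paper handles this separately (its Proposition~\ref{Large_N_convergence}) by a Markov-type bound: each $\Psi_\tau^{t_j}\Theta_\tau(\xi^j)$ is controlled by $(1+\cal N_\tau)^{p_j}\|\xi^j\|$, so the full product times $\ind{\cal N_\tau\geq\cal K}$ is bounded by $\cal K^{-1}\rho_\tau\pb{(1+\cal N_\tau)^{p+1}}=O(\cal K^{-1})$, uniformly in $\tau$. The complex-analytic step is \emph{not} used to remove the cutoff; it is used (Helffer--Sj\"ostrand for $f(\cal N_\tau)$, then $\ee^{-\nu\cal N_\tau}$ as a chemical-potential shift restoring quasi-freeness) to prove convergence of the time-independent correlators \emph{with} a smooth cutoff $f(\cal N_\tau)$ present. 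Your sketch conflates these two roles.
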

\begin{remark}
\label{classical measure remark}
For all $p \in \N$, $\xi \in \cal L(\fra H^{(p)})$, and $t \in \R$ we have by \eqref{rho_tau}, Definition \ref{Quantum_time_evolution}, and the cyclicity of the trace that
\begin{equation}
\label{cyclicity_of_the_trace}
\rho_\tau \big(\Psi_\tau^t\,\Theta_\tau (\xi)\big) \;=\; \rho_\tau \big(\Theta_\tau(\xi)\big)
\end{equation}
for all $\tau$. In particular, substituting \eqref{cyclicity_of_the_trace} into Theorem \ref{Main Result} with $m=1$, it follows that
\begin{equation}
\label{Equality_of_moments1}
\rho \big(\Psi^t\,\Theta(\xi)\big) \;=\; \rho \big(\Theta(\xi)\big)\,.
\end{equation}
Hence, using \eqref{cyclicity_of_the_trace}--\eqref{Equality_of_moments1}, we recover the invariance of the Gibbs measure for \eqref{NLS1}, proved in \cite{B}.
\end{remark}

Choosing physical space to be a circle, $\Lambda=\mathbb{T}^1$, and the external potential to vanish, $v=0$, we prove an analogue of Theorem \ref{Main Result} for the dynamics corresponding to a local nonlinearity (see \eqref{localNLS1}) by using an approximation argument. Let $w$ be a continuous compactly supported nonnegative function satisfying $\int \dd x\, w(x)=1$. For $\epsilon>0$ we define the two-body potential
\begin{equation}
\label{w_epsilon}
w^\epsilon(x) \;\deq\; \frac{1}{\epsilon} \,w\bigg(\frac{[x]}{\epsilon}\bigg)\,.
\end{equation}
Here, and in the sequel, $[x]$ denotes the unique element of the set $(x + \Z) \cap [-1/2,1/2)$.

\begin{theorem}[Convergence of time-dependent correlation functions for a local nonlinearity]
\label{Main Result_local_nonlinearity}
Suppose that $\Lambda=\mathbb{T}^1$, $v=0$, and $w^{\varepsilon}$ is defined as in \eqref{w_epsilon}. There exists a sequence $(\epsilon_\tau)$ of positive numbers satisfying $\lim_{\tau \to \infty} \epsilon_\tau = 0$, such that, for arbitrary $m \in \N$, $p_1, \ldots, p_m \in \N$, $\xi^1 \in \cal L(\fra H^{(p_1)}), \ldots, \xi^m \in \cal L(\fra H^{(p_m)})$, and $t_1 \in \mathbb{R},\ldots,t_m \in \R$, we have
\begin{equation*}
\lim_{\tau \to \infty} \rho_\tau^{\epsilon_\tau} \Big(\Psi_\tau^{t_1,\epsilon_\tau} \,\Theta_\tau(\xi^1)\,\cdots \, 
\Psi_\tau^{t_m,\epsilon_\tau} \,\Theta_\tau(\xi^m)\Big) \;=\; \rho \Big(\Psi^{t_1} \,\Theta(\xi^1)\,\cdots \, 
\Psi^{t_m} \,\Theta(\xi^m)\Big)\,.
\end{equation*}
Here, the quantum state $\rho^{\epsilon}_\tau(\cdot)$ is defined in \eqref{rho_tau} and the quantum-mechanical time evolution $\Psi_\tau^{t,\epsilon}$ is introduced in Definition \ref{Quantum_time_evolution}, where the two-body potential is $w^{\epsilon}$. Moreover, the classical state $\rho(\cdot)$ is defined in \eqref{rho_frac} and the classical time evolution $\Psi^t$ is introduced in Definition \ref{Classical_time_evolution}, where the two-body potential is $w=\delta$. (Hence the classical time evolution is governed by the local nonlinear Schr\"{o}dinger equation \eqref{localNLS1}).
\end{theorem}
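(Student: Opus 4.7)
The plan is to approximate the local problem by the family of nonlocal NLS problems with convolution potentials $w^\epsilon$ from \eqref{w_epsilon}, to which Theorem~\ref{Main Result} already applies, and then to pass to the limit $\epsilon \to 0$ along a sequence $\epsilon_\tau$ coupled to $\tau \to \infty$. Let $\rho^\epsilon$ and $\Psi^{t,\epsilon}$ denote the classical Gibbs state and classical time evolution built from \eqref{NLS1} with two-body potential $w^\epsilon$, and let $S_t^\epsilon$ denote the corresponding Hamiltonian flow. The argument naturally splits into three steps: (a) for each fixed $\epsilon>0$, Theorem~\ref{Main Result} yields
\begin{equation*}
Q_\tau^\epsilon \;\deq\; \rho_\tau^\epsilon\pb{\Psi_\tau^{t_1,\epsilon}\Theta_\tau(\xi^1) \cdots \Psi_\tau^{t_m,\epsilon}\Theta_\tau(\xi^m)} \;\longrightarrow\; Q^\epsilon \;\deq\; \rho^\epsilon\pb{\Psi^{t_1,\epsilon}\Theta(\xi^1) \cdots \Psi^{t_m,\epsilon}\Theta(\xi^m)}
\end{equation*}
as $\tau \to \infty$; (b) a purely classical convergence $Q^\epsilon \to Q$ as $\epsilon \to 0$, where $Q$ denotes the right-hand side of the theorem; (c) a diagonal extraction giving $\epsilon_\tau \to 0$ with $Q_\tau^{\epsilon_\tau} \to Q$.

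The classical step (b) splits further into convergence of the Gibbs measures and convergence of the flows. The measure convergence $\ee^{-\cal W^\epsilon}/Z^\epsilon \to \ee^{-\cal W}/Z$ against $\dd\mu$, with $\cal W = \frac{1}{2}\int|\phi|^4$ the local interaction, follows from $\cal W^\epsilon \to \cal W$ in $L^p(\dd\mu)$ for every $p<\infty$ together with uniform integrability of $\ee^{-\cal W^\epsilon}$; the quantitative ingredients for this are already present in \cite{FrKnScSo}. The flow convergence $S_t^\epsilon \phi \to S_t \phi$ is more delicate, since the Gibbs measure is supported on fields of low regularity (in $H^s$ only for $s<1/2$). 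I would therefore work in Bourgain's $X^{s,b}$ spaces adapted to the periodic Schr\"odinger flow, proving a trilinear bound for the nonlinearity in \eqref{NLS1} whose constant is uniform in $\epsilon$ using only $\norm{w^\epsilon}_{L^1} = 1$, together with a companion difference estimate capturing $w^\epsilon - \delta \to 0$ in a suitably weak sense. A standard contraction on short time intervals then gives local well-posedness and local uniform convergence of $S_t^\epsilon$ to $S_t$, and iteration using the $L^2$ conservation law extends both to all times, $\mu$-almost surely. Combined with the operator-norm bound $\absb{\Psi^{t,\epsilon}\Theta(\xi^i)} \leq \norm{\xi^i}\,\norm{\phi}_{\fra H}^{2p_i}$ and dominated convergence, this upgrades to $Q^\epsilon \to Q$.

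The main obstacle is the trilinear $X^{s,b}$ estimate with explicit $\epsilon$-uniformity, which is the substantive input alluded to in the abstract under ``periodic Strichartz estimates in $X^{s,b}$ spaces''. One must choose exponents $(s,b)$ compatible both with the regularity of the Gibbs measure and with the periodic $L^4_{t,x}$ Strichartz bound of Bourgain, and arrange the trilinear estimate so that the spatial convolution with $w^\epsilon$ is absorbed by Young's inequality at the cost of $\norm{w^\epsilon}_{L^1}$ only. Once this uniform well-posedness theory is in hand, the diagonal step (c) is routine: since for each fixed $\epsilon$ one has $Q_\tau^\epsilon \to Q^\epsilon$ and $Q^\epsilon \to Q$, a Cantor diagonal extraction produces a sequence $\epsilon_\tau \to 0$ slowly enough that $\absb{Q_\tau^{\epsilon_\tau} - Q^{\epsilon_\tau}}\to 0$ and hence $Q_\tau^{\epsilon_\tau}\to Q$, which is the claim of the theorem.
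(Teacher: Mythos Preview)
Your overall strategy matches the paper's: invoke Theorem~\ref{Main Result} at fixed $\epsilon$, prove the classical limit $Q^\epsilon\to Q$ via convergence of the interaction $\cal W^\epsilon\to\cal W$ and of the flow $S_t^\epsilon\to S_t$ in $X^{s,b}$ spaces (this is exactly the paper's Proposition~\ref{L2 convergence}), and then run a diagonal argument. The ingredients you identify are the right ones.

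There is, however, one genuine gap in step (c). The theorem asserts the existence of a \emph{single} sequence $(\epsilon_\tau)$ that works simultaneously for \emph{all} choices of $m$, $t_i$, $p_i$, $\xi^i$. Your diagonal extraction, as written, is performed for a fixed tuple $(m,t_1,\dots,\xi^m)$ and would a priori produce a sequence depending on that tuple. To get a universal sequence you need two additional inputs. First, the convergence $Q_\tau^\epsilon\to Q^\epsilon$ from Theorem~\ref{Main Result} must be shown to be \emph{uniform} over bounded parameter sets $\{m\leq k,\ |t_i|\leq k,\ p_i\leq k,\ \|\xi^i\|\leq k\}$; the paper records this separately as Remark~\ref{uniformity_remark}, and it follows by tracking the dependence of all estimates in the proof of Theorem~\ref{Main Result} on $\|w\|_{L^\infty}$, $|t_i|$, $p_i$, $\|\xi^i\|$, $m$. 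Second, the classical convergence $Q^\epsilon\to Q$ must likewise be uniform over each such bounded set (which you get from your dominated-convergence argument once you observe that the dominating function $\prod_i\|\xi^i\|\,\|\phi\|_{\fra H}^{2p_i}$ is controlled uniformly there). With these uniformities in hand, the diagonal argument must be run over the countable exhaustion $Z=\bigcup_k Z_k$ of parameter space, not over a single point; the paper isolates this as a clean abstract lemma (Lemma~\ref{diagonal_argument_zeta}). Without this uniformity layer your argument does not yield the statement as formulated.
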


\begin{remark}
As in Remark \ref{classical measure remark}, Theorem \ref{Main Result_local_nonlinearity} allows us to establish the invariance of the Gibbs measure for \eqref{localNLS1} first proved in \cite{B}.
\end{remark}

\begin{remark}
For interacting Bose gases on a finite lattice, results similar to Theorems \ref{Main Result} and \ref{Main Result_local_nonlinearity} have been obtained in \cite[Section 3.4]{Knowles_Thesis}.
\end{remark}

\subsubsection*{Conventions}
We denote by $C$ a positive constant that can depend on the fixed quantities of the problem (for example the interaction potential $w$). This constant can change from line to line. If it depends on a family of parameters $a_1,a_2,\ldots$, we write $C=C(a_1,a_2,\ldots)$. Given a separable Hilbert space $\cal H$ and $q \in [1,\infty]$, we denote by $\fra S^q(\cal H)$ the $q$-\emph{Schatten class}. This is the set of all $\cal T \in \cal L(\cal H)$ 
such that the norm given by
\begin{equation*}
\|\cal T\|_{\fra S^q(\cal H)} \;\deq\;
\begin{cases}
( \tr \, \abs{\cal T}^q)^{1/q}  &\mbox{if }q<\infty\\
\sup \spec \, \abs{\cal T} &\mbox{if } q=\infty\,
\end{cases}
\end{equation*}
is finite.
Here we recall that $|\cal T| \deq \sqrt{\cal T^* \cal T}$.
In particular, we note that by definition $\fra S^\infty(\cal H)=\cal L(\cal H)$, the space of bounded operators on $\cal H$. We abbreviate the operator norm $\|\cdot\|_{\fra S^{\infty}}$ by $\|\cdot\|$.
Any quantity bearing a subscript $\tau$ is a quantum object and any quantity not bearing this subscript is a classical object.

\section{Strategy of the proof}

We first outline the strategy of proof of Theorem \ref{Main Result}, concerning the nonlocal problem. Let us recall several definitions.
The \emph{rescaled number of particles} is defined as
\begin{equation}
\label{N_tau}
\cal N_\tau \;\deq\; \int \dd x \, \phi_\tau^*(x)\,\phi_\tau(x)\,.
\end{equation}
Moreover, the \emph{mass} is defined as
\begin{equation}
\label{definition_mass}
\cal N \;\deq\; \int \dd x \, |\phi(x)|^2\,.
\end{equation}
Theorem \ref{Main Result} can be deduced from the following two propositions.

\begin{proposition}[Convergence in the small particle number regime]
\label{Small_N_convergence}
Let $F \in C_c^{\infty}(\R)$ with $F \geq 0$ be given.
Given $m \in \N$, $p_1, \ldots, p_m \in \N$, $\xi^1 \in \cal L(\fra H^{(p_1)}), \ldots, \xi^m \in \cal L(\fra H^{(p_m)})$, and $t_1,\ldots,t_m \in \R$, we have
\begin{equation*}
\lim_{\tau \to \infty} \rho_\tau \Big(\Psi_\tau^{t_1} \,\Theta_\tau(\xi^1)\,\cdots \, 
\Psi_\tau^{t_m} \,\Theta_\tau(\xi^m)\,F(\cal N_\tau)
\Big) \;=\; \rho \Big(\Psi^{t_1} \,\Theta(\xi^1)\,\cdots \, 
\Psi^{t_m} \,\Theta(\xi^m)\,F\big(\cal N)
\Big)\,.
\end{equation*}
\end{proposition}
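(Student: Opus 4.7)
The plan is to reduce the time-dependent problem to a superposition of time-independent ones handled by \cite{FrKnScSo}. First I perform an iterated Duhamel expansion of each factor $\Psi_\tau^{t_k}\Theta_\tau(\xi^k)$ about the free Heisenberg evolution $\Psi_\tau^{t,0}\mathbf{A}\deq\ee^{\ii t\tau H_{\tau,0}}\mathbf{A}\ee^{-\ii t\tau H_{\tau,0}}$, which acts on monomials in $\phi_\tau^*,\phi_\tau$ by conjugating their arguments with the one-body propagator $\ee^{\ii th}$. A direct computation using the canonical commutation relations shows that the interaction commutator $[\tau\cal W_\tau,\phi_\tau(x)]$ has an intrinsic gain of $\tau^{-1}$ and is pointwise bounded (as a quadratic form) by a term of order $\cal N_\tau$ times a cubic monomial, so the $n$-fold nested commutator produced at order $n$ of the Duhamel expansion is uniformly bounded on the spectral sector $\cal N_\tau\in\supp F$. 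Writing the truncation
\begin{equation*}
\Psi_\tau^{t_k}\Theta_\tau(\xi^k) \;=\; \sum_{n=0}^{K-1} T_{\tau,n}^{k}(t_k) + R_{\tau,K}^{k}(t_k)\,,
\end{equation*}
the remainder $R_{\tau,K}^{k}$ on this sector is controlled by $(C(F,\xi^k)|t_k|)^{K}/K!$ coming from the time-ordered integration, and thus vanishes as $K\to\infty$ uniformly in $\tau$. The classical NLS flow $S_t$ admits an analogous Picard/Duhamel expansion about the linear Schr\"odinger flow, with remainders controlled by mass conservation of \eqref{NLS1} on $\supp F$.

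After substituting these expansions into both sides of the claimed identity and multiplying out, the problem reduces to showing, for each fixed multiindex $(n_1,\ldots,n_m)$, the term-by-term convergence
\begin{equation*}
\rho_\tau\pbb{\prod_{k=1}^m T_{\tau,n_k}^{k}(t_k)\cdot F(\cal N_\tau)} \;\longrightarrow\; \rho\pbb{\prod_{k=1}^m T_{n_k}^{k}(t_k)\cdot F(\cal N)}\,.
\end{equation*}
Each main term is a time-\emph{independent} $\tau$-expectation of a polynomial in $\phi_\tau$ whose arguments have been acted upon by various one-body propagators $\ee^{\ii sh}$, with $s$ ranging over a compact simplex, taken against the interacting Gibbs state and weighted by $F(\cal N_\tau)$. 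Without the weight $F(\cal N_\tau)$, the convergence of such expectations to their classical counterparts is precisely the content of \cite{FrKnScSo}.

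The principal obstacle is the factor $F(\cal N_\tau)$: it destroys the quasi-free structure of the free Gibbs state $\ee^{-H_{\tau,0}}/\tr\,\ee^{-H_{\tau,0}}$, blocking the direct use of Wick's theorem on which the time-independent convergence of \cite{FrKnScSo} is built. My plan to remedy this exploits the fact that $\cal N_\tau$ commutes with both $H_{\tau,0}$ and $H_\tau$, so one can represent $F$ by the complex analytic identity
\begin{equation*}
F(\cal N_\tau) \;=\; \int_{\R}\dd\eta\,\wh F(\eta)\,\ee^{\ii\eta\cal N_\tau}\,,
\end{equation*}
after which $\ee^{\ii\eta\cal N_\tau}\ee^{-H_{\tau,0}} = \ee^{-\Theta_\tau(h-\ii\eta)}$ is again the exponential of a quadratic operator, i.e.\ defines a quasi-free density matrix with complex-shifted chemical potential $\kappa\mapsto\kappa-\ii\eta$. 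Applying the time-independent machinery of \cite{FrKnScSo}, extended analytically in $\kappa$ to a strip about the real axis, at each fixed $\eta$ yields $\tau$-uniform estimates with at most polynomial growth in $\eta$, which can then be integrated against the rapidly decaying Schwartz function $\wh F$ to give the required convergence. Combining this with the uniform Duhamel remainder bounds and passing $K\to\infty$ completes the proof. The hardest step, I expect, is the simultaneous control of the complex analytic deformation and the Duhamel remainder on the compact $\cal N_\tau$-sector, and in particular ensuring that the $\tau$-uniform estimates inherited from \cite{FrKnScSo} remain stable under the complex shift of $\kappa$.
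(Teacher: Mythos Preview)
Your overall architecture matches the paper's: a Schwinger--Dyson expansion reduces the time-dependent problem to time-independent expectations $\rho_\tau(\Theta_\tau(\eta)F(\cal N_\tau))$, and the essential difficulty is indeed the cutoff $F(\cal N_\tau)$, which destroys the quasi-free structure. Two points deserve comment, one minor and one substantive.

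\emph{Duhamel remainder.} Your bound $(C|t_k|)^K/K!$ is too optimistic. At order $K$ the iterated commutator produces a $(p_k+K)$-particle operator, and the combinatorial prefactor $p_k(p_k+1)\cdots(p_k+K-1)$ cancels the $1/K!$ from the time-ordered simplex, leaving only a geometric bound $\sim(C\cal K\|w\|_{L^\infty}|t_k|)^K$ where $\cal K$ bounds $\supp F$. The series thus converges only for $|t_k|<T_0(\cal K)$; reaching arbitrary times requires iterating the expansion in steps of size $T_0(\cal K)$, using that $T_0$ does not depend on the degree of the operator being expanded. This is what the paper does in Lemma~\ref{Schwinger_Dyson_expansion_Q}; the fix is routine but your stated estimate is wrong as written.

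\emph{Handling $F(\cal N_\tau)$.} Here your route diverges from the paper's and contains a real gap. You propose Fourier inversion $F(\cal N_\tau)=\int\dd\eta\,\wh F(\eta)\,\ee^{\ii\eta\cal N_\tau}$ followed by analytic continuation of the \cite{FrKnScSo} machinery to the complex chemical potential $\kappa-\ii\eta$. But the proof in \cite{FrKnScSo} (Duhamel expansion of $\ee^{-H_{\tau,0}-z\cal W_\tau}$ in $z$, Borel resummation, trace-class estimates) uses positivity of the free density operator and reality of $h$ throughout; showing that its output is analytic in a strip with the polynomial growth in $\eta$ you assert would itself be a theorem, not an observation, and you give no argument for it. The paper sidesteps this entirely. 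It writes $F$ via the Helffer--Sj\"ostrand formula as an integral of resolvents $(\cal N_\sharp-\zeta)^{-1}$ over $\zeta\in\C$, and for $\re\zeta<0$ represents the resolvent as a Laplace transform $\int_0^\infty\dd\nu\,\ee^{\zeta\nu}\ee^{-\nu\cal N_\sharp}$. The factor $\ee^{-\nu\cal N_\sharp}$ shifts the chemical potential by the \emph{real positive} amount $\nu$, so \cite{FrKnScSo} applies verbatim to the modified Gibbs state at each $\nu>0$. The passage from $\{\re\zeta<0\}$ to all of $\C\setminus[0,\infty)$ is then a soft analyticity-plus-uniform-bound argument in the resolvent variable $\zeta$ (using $|\alpha_\sharp^\xi(\zeta)|\leq C/|\im\zeta|$), rather than an analytic continuation of the many-body analysis itself. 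This decoupling of the complex analysis from the quantum input is the idea your proposal is missing.
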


\begin{proposition}[Bounds in the large particle number regime]
\label{Large_N_convergence}
Let $G \in C^{\infty}(\R)$ be such that $0 \leq G \leq 1$ and $G=0$ on $[0,\cal K]$ for some $\cal K>0$.
Furthermore, let $m \in \N$, $p_1, \ldots, p_m \in \N$, $\xi^1 \in \cal L(\fra H^{(p_1)}), \ldots, \xi^m \in \cal L(\fra H^{(p_m)})$, and $t_1,\ldots,t_m \in \R$ be given. The following estimates hold.
\begin{itemize}
\item[(i)] $\Big|\rho_\tau \Big(\Psi_\tau^{t_1} \,\Theta_\tau(\xi^1)\,\cdots \, 
\Psi_\tau^{t_m} \,\Theta_\tau(\xi^m)\,G(\cal N_\tau)
\Big)\Big| \leq \frac{C}{\cal K}.$
\item[(ii)] $ \Big|\rho \Big(\Psi^{t_1} \,\Theta (\xi^1)\,\cdots \, 
\Psi^{t_m} \,\Theta(\xi^m)\,G(\cal N)
\Big)\Big| \leq \frac{C}{\cal K}.$
\end{itemize}
Here $C=C(\|\xi^1\|,\ldots,\|\xi^m\|,p_1+\cdots+p_m)>0$ is a constant that does not depend on $\cal K$.
\end{proposition}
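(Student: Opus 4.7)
The plan is to reduce both estimates to uniform moment bounds on the particle number, which are available from \cite{FrKnScSo}. Two structural ingredients are used: first, the rescaled particle number $\cal N_\tau$ commutes with $H_\tau$ and with each $\Theta_\tau(\xi^i)$ (all three are lifts of equal-degree monomials in $\phi_\tau^*,\phi_\tau$), so $\cal N_\tau$ is a conserved quantity of the quantum flow; likewise the classical flow $S_t$ preserves the mass $\cal N = \|\phi\|_{\fra H}^2$ of the NLS. Second, since $0 \leq G \leq 1$ and $G \equiv 0$ on $[0,\cal K]$, one has the pointwise inequality $G(x) \leq x/\cal K$ for all $x \geq 0$; this is what extracts the factor $1/\cal K$.

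For (i), I decompose $\cal F = \bigoplus_{k \in \N}\fra H^{(k)}$ into sectors of fixed particle number. All of $\Theta_\tau(\xi^i)$, $\Psi_\tau^{t_i}\Theta_\tau(\xi^i)$, $G(\cal N_\tau)$, and $P_\tau = \ee^{-H_\tau}$ preserve this decomposition. Since $\Theta_\tau(\xi^i) = \tau^{-p_i}$ times the second quantization of the bounded $p_i$-body operator $\xi^i$, one has the standard sector-wise bound $\|\Theta_\tau(\xi^i)|_{\fra H^{(k)}}\| \leq \|\xi^i\|\,(k/\tau)^{p_i}$ (cf.\ \cite{FrKnScSo}). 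Combined with the unitarity of the evolution on each sector this yields $\|\mathbf{B}|_{\fra H^{(k)}}\| \leq \prod_{i=1}^m \|\xi^i\|\,(k/\tau)^{P}$ for $\mathbf{B} \deq \Psi_\tau^{t_1}\Theta_\tau(\xi^1)\cdots \Psi_\tau^{t_m}\Theta_\tau(\xi^m)$ and $P \deq p_1+\cdots+p_m$. Writing $\rho_\tau(\mathbf{B}\,G(\cal N_\tau))$ sector by sector, applying $|\tr(XY)| \leq \|X\|\tr|Y|$ on each $\fra H^{(k)}$, and then invoking $G(x) \leq x/\cal K$ through the functional calculus for $\cal N_\tau$ gives
\begin{equation*}
|\rho_\tau(\mathbf{B}\,G(\cal N_\tau))| \;\leq\; \pbb{\prod_{i=1}^m \|\xi^i\|}\,\rho_\tau\pb{\cal N_\tau^{P}\,G(\cal N_\tau)} \;\leq\; \frac{1}{\cal K}\,\pbb{\prod_{i=1}^m \|\xi^i\|}\,\rho_\tau\pb{\cal N_\tau^{P+1}}\,.
\end{equation*}
Part (i) then follows from the uniform-in-$\tau$ moment bound $\rho_\tau(\cal N_\tau^{P+1}) \leq C(P)$ established in \cite{FrKnScSo}.

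For (ii), the argument is analogous but pointwise in $\omega \in \C^{\N}$. From \eqref{theta_xi}, $\Theta(\xi^i)(\phi) = \scalar{\phi^{\otimes p_i}}{\xi^i\phi^{\otimes p_i}}_{\fra H^{(p_i)}}$, whence $|\Theta(\xi^i)(\phi)| \leq \|\xi^i\|\,\cal N(\phi)^{p_i}$; the $L^2$-conservation $\cal N(S_t\phi) = \cal N(\phi)$ for \eqref{NLS1} upgrades this to $|\Psi^{t_i}\Theta(\xi^i)(\phi)| \leq \|\xi^i\|\,\cal N(\phi)^{p_i}$ almost surely. Multiplying the $m$ factors, using $G(\cal N) \leq \cal N/\cal K$, and integrating against $\rho$ yields
\begin{equation*}
\absb{\rho\pb{\Psi^{t_1}\Theta(\xi^1)\cdots\Psi^{t_m}\Theta(\xi^m)\,G(\cal N)}} \;\leq\; \frac{1}{\cal K}\,\pbb{\prod_{i=1}^m \|\xi^i\|}\,\rho\pb{\cal N^{P+1}}\,,
\end{equation*}
and the classical moment bound $\rho(\cal N^{P+1}) \leq C(P)$ from \cite{FrKnScSo} closes the estimate.

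The only nontrivial input is the pair of $\tau$-uniform moment bounds on $\cal N_\tau$ and $\cal N$, which are inherited from the companion time-independent analysis in \cite{FrKnScSo}; all remaining steps are bookkeeping exploiting the conservation of particle number (resp.\ mass) and the elementary inequality $G(x) \leq x/\cal K$. Accordingly, the only part that might pose a genuine obstacle is making sure that the moment bounds one can import from \cite{FrKnScSo} are stated to the required order $P+1$ and in a form uniform in $\tau$; if not, a brief interpolation or bootstrapping from whatever moment bound is available there would be needed.
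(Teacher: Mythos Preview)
Your proof is correct and follows essentially the same strategy as the paper's: exploit that $\cal N_\tau$ (resp.\ $\cal N$) commutes with all operators in sight, bound each time-evolved $\Theta_\tau(\xi^i)$ by $\|\xi^i\|\,\cal N_\tau^{p_i}$ via the sector-wise estimate (Lemma~\ref{Quantum Lemma 2}), and then trade the cutoff $G$ for a factor $1/\cal K$ against one extra moment of the particle number. The only cosmetic differences are that the paper inserts $(1+\cal N_\tau)^{-p_j}$ factors and applies H\"older globally rather than working sector by sector, and phrases the last step as Markov's inequality rather than the pointwise bound $G(x)\leq x/\cal K$; the required uniform moment bound $\rho_\tau\big((1+\cal N_\tau)^{P+1}\big)\leq C(P)$ is indeed the sole nontrivial input, and is used in the paper as well (cf.\ \eqref{eq:Function alpha I4} and \eqref{Large_N_convergence3}).
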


\begin{proof}[Proof of Theorem \ref{Main Result}] For fixed $\cal K>0$, we choose $F \equiv F_{\cal K}$ in Proposition \ref{Small_N_convergence} such that $0 \leq F \leq 1$ and $F=1$ on $[0,\cal K]$ and we let $G \equiv G_{\cal K} \deq 1-F_{\cal K}$ in Proposition \ref{Large_N_convergence}. We then deduce Theorem \ref{Main Result} by letting $\cal K \rightarrow \infty$. 
\end{proof}

We prove Proposition \ref{Small_N_convergence} in Section \ref{Small particle number} and Proposition \ref{Large_N_convergence} in Section \ref{Large particle number} below. 

Theorem \ref{Main Result_local_nonlinearity}, concerning the local problem, is proved in Section \ref{The local problem} by using Theorem \ref{Main Result} and a limiting argument. At this step, it is important to prove an $L^2$-convergence result of solutions of the NLS with interaction potential given by \eqref{w_epsilon} to solutions of \eqref{localNLS1}. The precise statement is given in Proposition \ref{L2 convergence} below. Note that, in order to prove this statement, it is not enough to use energy methods, but we have to directly use the dispersion in the problem. To this end, we use $X^{s,b}$ spaces, which are recalled in Definition \ref{X^{sigma,b}} below.

\section{The small particle number regime: proof of Proposition \ref{Small_N_convergence}.}
\label{Small particle number}

In this section we consider the small particle number regime and prove Proposition \ref{Small_N_convergence}.

In what follows, it is useful to note that, given $\xi \in \cal L (\fra H^{(p)})$, for $\Theta_\tau(\xi)$ defined as in \eqref{Theta_tau_xi}, we have 
\begin{equation}
\label{n_sector}
\Theta_\tau(\xi)\big|_{\fra H^{(n)}}
\;=\;
\begin{cases}
\frac{p!}{\tau^p} {n \choose p} P_{+}\big(\xi \otimes \mathbf{1}^{(n-p)}\big)P_{+} &\mbox{if }n \geq p
\\
0 & \mbox{otherwise. }
\end{cases}
\end{equation}
(For more details see \cite[(3.88)]{Knowles_Thesis}.)
Here $\mathbf{1}^{(q)}$ denotes the identity map on $\fra H^{(q)}$ and $P_{+}$ denotes the orthogonal projection onto the subspace of symmetric tensors.
In particular (c.f.\ \cite[Section 3.4.1]{Knowles_Thesis}), we deduce the following estimate.
\begin{lemma}
\label{Quantum Lemma 2}
Let $\xi \in \cal L(\fra H^{(p)})$ be given. For all $n \in \N$ we have
\begin{equation*}
\Big\|\Theta_\tau(\xi)\big|_{\fra H^{(n)}}\Big\| \;\leq\;\Big(\frac{n}{\tau}\Big)^p \, \|\xi\|\,.
\end{equation*}
\end{lemma}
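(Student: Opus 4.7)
The plan is to read the bound off directly from formula \eqref{n_sector}. For $n < p$ the restriction vanishes identically and the claim is trivial, so I may assume $n \geq p$ and work with
\begin{equation*}
\Theta_\tau(\xi)\big|_{\fra H^{(n)}} \;=\; \frac{p!}{\tau^p}\binom{n}{p}\, P_+\pb{\xi \otimes \f 1^{(n-p)}} P_+\,.
\end{equation*}

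Next I would estimate the operator norm of the right-hand side by submultiplicativity. Viewing $\xi \otimes \f 1^{(n-p)}$ as a bounded operator on the full tensor power $\fra H^{\otimes n} \cong \fra H^{(p)} \otimes \fra H^{\otimes(n-p)}$, its operator norm equals $\|\xi\|$ by the standard identity for the norm of a tensor product of bounded operators on a Hilbert space. Since $P_+$ is an orthogonal projection on $\fra H^{\otimes n}$, it has norm at most one. Combining these two facts yields
\begin{equation*}
\normb{P_+\pb{\xi \otimes \f 1^{(n-p)}} P_+} \;\leq\; \|\xi\|\,.
\end{equation*}

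Finally, to control the combinatorial prefactor I would simply write
\begin{equation*}
p!\binom{n}{p} \;=\; n(n-1)\cdots(n-p+1) \;\leq\; n^p\,,
\end{equation*}
and dividing by $\tau^p$ gives the claimed inequality.

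I do not anticipate any genuine obstacle: the whole argument is bookkeeping on top of formula \eqref{n_sector}. The only step that deserves a word of justification is the identity $\|\xi \otimes \f 1^{(n-p)}\|=\|\xi\|$, which is a standard property of bounded operators on tensor products of Hilbert spaces and requires no special treatment here.
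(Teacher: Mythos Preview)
Your proof is correct and is essentially the same approach the paper takes: the paper does not spell out a proof but simply points to formula \eqref{n_sector} and \cite[Section 3.4.1]{Knowles_Thesis}, and your argument is exactly the natural reading-off of the bound from that formula via $\|P_+\| \leq 1$, $\|\xi \otimes \mathbf{1}^{(n-p)}\| = \|\xi\|$, and $p!\binom{n}{p} \leq n^p$.
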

Moreover, by applying the Cauchy-Schwarz inequality, we obtain the following result in the classical setting (c.f.\ also [Section 3.4.2]\cite{Knowles_Thesis}).
\begin{lemma}
\label{Classical Lemma 2}
Let $\xi \in \cal L(\fra H^{(p)})$ be given. Then we have 
\begin{equation*}
|\Theta(\xi)| \;\leq\; \|\phi\|_{\fra H}^{2p}\, \|\xi\|\,.
\end{equation*}
\end{lemma}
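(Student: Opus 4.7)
The plan is to reinterpret $\Theta(\xi)$ as the quadratic form of $\xi$ evaluated on the pure tensor $\phi^{\otimes p}$, and then conclude with Cauchy--Schwarz applied in $\fra H^{(p)}$. This is the classical counterpart of Lemma \ref{Quantum Lemma 2}, but it should be considerably simpler: in the classical setting the field $\phi$ is a genuine (random) element of $\fra H$, so the bound reduces to a standard operator-norm estimate on a single vector, rather than to an analysis of the action of $\Theta_\tau(\xi)$ on each $n$-particle sector of Fock space with its combinatorial factor $\frac{p!}{\tau^p}\binom{n}{p}$.

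Concretely, I would set $\Phi \deq \phi^{\otimes p}$, so that
\begin{equation*}
\Phi(y_1,\ldots,y_p) \;=\; \phi(y_1)\cdots\phi(y_p).
\end{equation*}
Since $\phi \in \fra H$ almost surely by \eqref{Tr h^{s-1}} and \eqref{classical_free_field}, the tensor $\Phi$ lies in $\fra H^{\otimes p}$ almost surely; and because a pure tensor power is automatically invariant under permutations of its arguments, in fact $\Phi \in \fra H^{(p)}$ with $\norm{\Phi}_{\fra H^{(p)}} = \norm{\phi}_{\fra H}^p$. In terms of $\Phi$, the definition \eqref{theta_xi} rewrites as
\begin{equation*}
\Theta(\xi) \;=\; \scalar{\Phi}{\xi \Phi}_{\fra H^{(p)}}.
\end{equation*}

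From here, one line of Cauchy--Schwarz followed by the operator-norm bound $\norm{\xi \Phi}_{\fra H^{(p)}} \leq \norm{\xi} \, \norm{\Phi}_{\fra H^{(p)}}$ yields
\begin{equation*}
\absb{\Theta(\xi)} \;\leq\; \norm{\Phi}_{\fra H^{(p)}} \norm{\xi \Phi}_{\fra H^{(p)}} \;\leq\; \norm{\xi} \, \norm{\Phi}_{\fra H^{(p)}}^2 \;=\; \norm{\xi} \, \norm{\phi}_{\fra H}^{2p},
\end{equation*}
which is the stated inequality. I do not expect a real obstacle: the only point requiring verification is that $\Phi$ belongs to the domain $\fra H^{(p)}$ of $\xi$, which is exactly the symmetry observation made above, and that the integral defining $\Theta(\xi)$ coincides with the inner product $\scalar{\Phi}{\xi\Phi}_{\fra H^{(p)}}$, which is immediate from the Schwartz kernel description of $\xi$.
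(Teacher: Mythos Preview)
Your proof is correct and matches the paper's approach exactly: the paper simply states that the lemma follows ``by applying the Cauchy-Schwarz inequality'' and gives no further details. Your writeup just spells out the identification $\Theta(\xi) = \scalar{\phi^{\otimes p}}{\xi\, \phi^{\otimes p}}_{\fra H^{(p)}}$ that makes this one-line application work.
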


\subsection{An auxiliary convergence result}
In the proof of Proposition \ref{Small_N_convergence}, we use the following auxiliary convergence result.

For $p \in \N$ define the unit ball $\fra B_p \deq \{\eta \in \fra S^2(\fra H^{(p)}):\|\eta\|_{\fra S^2(\fra H^{(p)})}\leq 1\}$.
\begin{proposition}
\label{Convergence 1}
Let $f \in C_c^\infty(\R)$ be given.
\begin{itemize}
\item[(i)]
We have 
\begin{equation}
\label{eq:rho theta f convergence}
\lim_{\tau \to \infty} \rho_\tau\big(\Theta_\tau(\xi) f(\cal N_\tau)\big) \;=\; \rho\big(\Theta(\xi) f(\cal N)\big)\,,
\end{equation}
uniformly in $\xi \in \fra B_p \cup \{\mathbf{1}^{(p)}\}$.
\item[(ii)] Moreover, if $f \geq 0$ then  \eqref{eq:rho theta f convergence} holds for all $\xi \in \cal L(\fra H^{(p)})$.
\end{itemize}
\end{proposition}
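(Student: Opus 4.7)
The plan is to exploit the commutation $[\Theta_\tau(\xi),\cal N_\tau]=0$ and the fact that $e^{-s\cal N_\tau}$ acts as a shift of the chemical potential $\kappa\mapsto\kappa+s$, thereby reducing the proposition to the time-independent convergence theorem of \cite{FrKnScSo}. The cutoff $f(\cal N_\tau)$ is reconstructed from exponential weights by Fourier inversion and Vitali's theorem, and part (ii) is then deduced from part (i) by a Gr\"umm-type trace-norm argument exploiting $f\geq 0$.

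Concretely, I first introduce the generating functions $G_\tau(s)\deq\rho_\tau\pb{\Theta_\tau(\xi)\,e^{-s\cal N_\tau}}$ and $G(s)\deq\rho\pb{\Theta(\xi)\,e^{-s\cal N}}$. Since $\cal N_\tau=\Theta_\tau(\f 1^{(1)})$ acts as the scalar $n/\tau$ on the sector $\fra H^{(n)}$, the identity $H_{\tau,0}+s\cal N_\tau=\Theta_\tau(h+s)$ gives
\[
e^{-s\cal N_\tau}\,e^{-H_\tau}\;=\;e^{-H_\tau^{(s)}}\,,
\]
where $H_\tau^{(s)}$ is the quantum Hamiltonian with one-body operator $h+s$, i.e.\ with chemical potential $\kappa+s$; the analogous identity holds classically, with the Gaussian measure $\mu$ replaced by its shifted counterpart. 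For each real $s\geq 0$, the time-independent theorem of \cite{FrKnScSo} applied at chemical potential $\kappa+s$ thus yields $G_\tau(s)\to G(s)$, uniformly in $\xi\in\fra B_p\cup\h{\f 1^{(p)}}$.

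The core complex-analytic step extends this to purely imaginary $s$. The sector expansion
\[
G_\tau(s)\;=\;\sum_n e^{-sn/\tau}\,\rho_\tau\pb{\Theta_\tau(\xi)\,\f 1_{\fra H^{(n)}}}\,,
\]
combined with Lemma \ref{Quantum Lemma 2} and the uniform-in-$\tau$ moment bounds on $\cal N_\tau$ under $\rho_\tau$ furnished by \cite{FrKnScSo}, shows that $G_\tau$ is holomorphic on a strip $\h{\re s>-s_0}$, with bounds on compact subsets uniform in $\tau$ and in $\xi\in\fra B_p\cup\h{\f 1^{(p)}}$. Vitali's theorem then propagates pointwise convergence from $[0,\infty)$ to the entire strip. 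Writing $f(x)=\frac{1}{2\pi}\int\hat f(t)\,e^{itx}\,dt$ with Schwartz-class $\hat f$ and applying dominated convergence to $\rho_\tau\pb{\Theta_\tau(\xi)f(\cal N_\tau)}=\frac{1}{2\pi}\int\hat f(t)\,G_\tau(-it)\,dt$ gives part (i).

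For part (ii), the positivity $\Theta_\tau(\eta^*\eta)\geq 0$, immediate from \eqref{n_sector}, together with $f\geq 0$ makes the functional $\omega_\tau(\eta)\deq\rho_\tau\pb{\Theta_\tau(\eta)\,f(\cal N_\tau)}$ on $\cal L(\fra H^{(p)})$ a positive linear functional, hence representable as $\omega_\tau(\eta)=\tr(\eta\,\sigma_\tau)$ with $\sigma_\tau\geq 0$ in $\fra S^1(\fra H^{(p)})$. Part (i) applied to $\xi=\f 1^{(p)}$ gives convergence of $\norm{\sigma_\tau}_{\fra S^1}=\omega_\tau(\f 1^{(p)})$, while applied to rank-one Hilbert-Schmidt $\eta\in\fra B_p$ it gives weak operator convergence $\sigma_\tau\to\sigma$ to the classical density $\sigma$. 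The Gr\"umm-type principle that weak operator convergence of nonnegative trace-class operators with convergence of $\fra S^1$-norms implies $\fra S^1$-convergence then yields $\sigma_\tau\to\sigma$ in $\fra S^1$, whence $\omega_\tau(\xi)\to\omega(\xi)$ for every $\xi\in\cal L(\fra H^{(p)})=\fra S^1(\fra H^{(p)})^*$. The main obstacle is the uniform-in-$\tau$ holomorphic control of $G_\tau$ on $\h{\re s>-s_0}$ needed for Vitali's theorem and dominated convergence: this is exactly the complex-analytic method alluded to in the introduction, and it is what enables the number cutoff, which breaks the Gaussian Wick structure of the free quantum field, to be handled.
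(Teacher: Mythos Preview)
Your approach is correct and shares the paper's key reduction: the identity $e^{-\nu\cal N_\tau}e^{-H_\tau}=e^{-H_\tau^{(\nu)}}$ reduces $\rho_\tau\pb{\Theta_\tau(\xi)e^{-\nu\cal N_\tau}}$ to the time-independent theorem of \cite{FrKnScSo} at shifted chemical potential, and part (ii) is deduced from (i) by the same Gr\"umm-type trace-norm upgrade (the paper's \cite[Lemma 4.10]{FrKnScSo}). The genuine difference lies in how $f(\cal N)$ is rebuilt from the exponential weights. The paper passes through the resolvent $\alpha^\xi_\sharp(\zeta)=\rho_\sharp\pb{\Theta_\sharp(\xi)(\cal N_\sharp-\zeta)^{-1}}$, which is analytic on $\C\setminus[0,\infty)$ with the bound $\abs{\alpha^\xi_\sharp(\zeta)}\leq C(p)/\abs{\im\zeta}$ requiring only \emph{polynomial} moments of $\cal N_\sharp$; convergence for $\re\zeta<0$ (obtained from the Laplace representation of the resolvent in terms of $e^{-\nu\cal N_\sharp}$) is then propagated to all of $\C\setminus[0,\infty)$ by an open--closed analytic-continuation argument, and $f(\cal N_\sharp)$ is recovered via the Helffer--Sj\"ostrand formula.

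Your Fourier-inversion-plus-Vitali route is more direct, but as written it needs $G_\tau$ holomorphic with uniform-in-$\tau$ bounds on an open strip $\{\re s>-s_0\}$ strictly containing the imaginary axis, since the set $[0,\infty)$ on which you establish convergence must have accumulation points in the \emph{interior} of the domain. This amounts to a uniform \emph{exponential} moment bound $\sup_\tau\rho_\tau(e^{s_0\cal N_\tau})<\infty$, which the polynomial moment bounds of \cite{FrKnScSo} you invoke do not directly give. You can supply it by your own chemical-potential-shift idea run in reverse: choose $0<s_0<\kappa$ so that $h-s_0>0$ still satisfies $\tr(h-s_0)^{-1}<\infty$, and observe that $\rho_\tau(e^{s_0\cal N_\tau})$ is a ratio of partition functions at chemical potentials $\kappa-s_0$ and $\kappa$, bounded uniformly by the convergence results of \cite{FrKnScSo}. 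Alternatively, you can bypass the strip extension entirely by writing $f(\cal N)=f_\epsilon(\cal N)\,e^{-\epsilon\cal N}$ with $f_\epsilon(x)\deq f(x)e^{\epsilon x}\in C_c^\infty$ and Fourier-inverting $f_\epsilon$, so that only $G_\tau(\epsilon-it)$ with $\re(\epsilon-it)=\epsilon>0$ enters and Vitali on the open half-plane $\{\re s>0\}$ suffices. Either fix closes the gap; the paper's resolvent route simply avoids the issue from the start.
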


We note that, if $f$ were equal to $1$ (which is not allowed in the assumptions), then the result of Proposition \ref{Convergence 1} would follow immediately from \cite[Theorem 5.3]{Lewin_Nam_Rougerie} or equivalently \cite[Theorem 1.8]{FrKnScSo}.
However, Proposition \ref{Convergence 1} does not immediately follow from the arguments in \cite{FrKnScSo} since the presence of $f$ breaks the Gaussian structure which allows us to apply the Wick theorem. In the proof we expand $f$ by means of complex analytic methods in such a way that we can apply the analysis of \cite{FrKnScSo} in the result.

Before we proceed with the proof of Proposition \ref{Convergence 1}, we introduce some notation and collect several auxiliary results.

For $\cal N$ as in \eqref{definition_mass} and $\nu>0$ we define the measure 
\begin{equation*}
\dd \tilde \mu^\nu \;\deq\; \ee^{-\nu \cal N} \dd \mu\,.
\end{equation*}
Note that $\dd \tilde \mu^\nu$ is still Gaussian, but it is not normalized. Indeed, recalling \eqref{measure_mu}, we write
\begin{equation*}
\dd \mu(\omega) \;=\; \bigotimes_{k \in \N} \frac{1}{\pi} \ee^{-\abs{\omega_k}^2} \dd \omega_k\,.
\end{equation*}
We have
\begin{equation*}
\cal N \;=\; \int \dd x \, \abs{\phi(x)}^2 \;=\; \sum_{k \in \N} \frac{\abs{\omega_k}^2}{\lambda_k}\,,
\end{equation*}
and so we find
\begin{equation}
\label{d_tilde_mu}
\dd \tilde \mu^\nu (\omega) \;=\; \bigotimes_{k \in \N} \frac{1}{\pi} \ee^{- \nu \abs{\omega_k}^2 / \lambda_k} \ee^{-\abs{\omega_k}^2} \dd \omega_k\,.
\end{equation}
Define the normalized Gaussian measure
\begin{equation*}
\dd \mu^\nu \;\deq\; \frac{\dd \tilde \mu^\nu}{\int \dd \tilde \mu^\nu}\,.
\end{equation*}
The measure $\mu^\nu$ satisfies a Wick theorem, where any moment of variables that are linear functions of $\phi$ or $\bar{\phi}$ is given as a sum over pairings and each pair is computed using the (Hermitian) covariance of $\mu^\nu$ given by
\begin{equation}
\label{h_nu}
h^\nu \;\deq\; h+\nu \;=\; \sum_{k \in \N} (\lambda_k+\nu) u_k u_k^*\,.
\end{equation} 
In terms of $\phi$ we have
\begin{equation*}
\int \dd \mu^\nu \bar \phi(g) \, \phi(f) \;=\; \scalar{f}{(h^\nu)^{-1} g}\,.
\end{equation*}
In the above identity, we write $\phi(f) \deq \scalar{f}{\phi}$ and $\bar \phi (g) \deq \scalar{\phi}{g}$.
For $\re z \geq 0$ and for $X$ a random variable and for $\cal W$ as in \eqref{classical interaction}, we define the deformed classical state
\begin{equation}
\label{tilde_rho_definition}
\tilde \rho_z^\nu(X) \;\deq\; \int X \, \ee^{-z \cal W} \, \dd \mu^\nu\,.
\end{equation}

In the quantum setting, for $\re z \geq 0$ and for $\cal A$ a closed operator on $\cal F$, we define the deformed quantum state
\begin{equation}
\label{tilde_rho_tau_definition}
\tilde \rho_{\tau,z}^\nu(\cal A) \;\deq\; \frac{\tr \pb{\cal A \, \ee^{-H_{\tau,0} - z \cal W_\tau- \nu \cal N_\tau}}}{\tr (\ee^{-H_{\tau,0} - \nu \cal N_\tau})}\,.
\end{equation}
The free state $\tilde \rho_{\tau,0}^\nu(\cdot)$ satisfies a quantum Wick theorem (c.f.\ \cite[Appendix B]{FrKnScSo}), with the quantum Green function
\begin{equation*}
G_\tau^\nu \;\deq\; \frac{1}{\tau (\ee^{h^\nu/\tau} - 1)}\,.
\end{equation*}
In the proof of Proposition \ref{Convergence 1} we have to analyse
\begin{equation*}
\frac{\tr \pb{\cal A \, \ee^{-H_{\tau,0} - z \cal W_\tau- \nu \cal N_\tau}}}{\tr (\ee^{-H_{\tau,0}})} \;=\; \tilde \rho_{\tau,z}^\nu(\cal A) \, \frac{\tr (\ee^{-H_{\tau,0} - \nu \cal N_\tau})}{\tr (\ee^{-H_{\tau,0}})}\,.
\end{equation*}
With the above notation we have the following result.
\begin{lemma}
\label{Convergence 2}
For $\nu>0$ we have
\begin{equation*}
\lim_{\tau \to \infty}\frac{\tr \pb{\ee^{-H_{\tau,0}- \nu \cal N_\tau}}}{\tr (\ee^{-H_{\tau,0}})} \;=\; \int \dd \tilde \mu^\nu\,.
\end{equation*}
\end{lemma}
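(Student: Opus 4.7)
My plan is to exploit the fact that $H_{\tau,0}$ and $\cal N_\tau$ are both second quantizations of one-body operators that commute (they are both diagonal in the eigenbasis $(u_k)$ of $h$), so the trace factorizes as an infinite product over modes. I would begin by writing $H_{\tau,0} = \tau^{-1} \sum_{k} \lambda_k N_k$ and $\cal N_\tau = \tau^{-1} \sum_k N_k$, where $N_k \deq b^*(u_k) b(u_k)$ is the number operator in mode $k$, with spectrum $\N$ on the corresponding Fock subspace. Summing the geometric series in each mode gives
\begin{equation*}
\frac{\tr \pb{\ee^{-H_{\tau,0} - \nu \cal N_\tau}}}{\tr \pb{\ee^{-H_{\tau,0}}}} \;=\; \prod_{k \in \N} \frac{1 - \ee^{-\lambda_k/\tau}}{1 - \ee^{-(\lambda_k+\nu)/\tau}}\,.
\end{equation*}
On the classical side, recalling \eqref{d_tilde_mu}, integrating out the independent complex Gaussians $\omega_k$ yields
\begin{equation*}
\int \dd \tilde \mu^\nu \;=\; \prod_{k \in \N} \frac{\lambda_k}{\lambda_k + \nu}\,.
\end{equation*}
So the task reduces to showing that the finite-$\tau$ product converges to the classical one.

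The natural approach is dominated convergence applied to the logarithms. For each fixed $k$, one checks immediately from $\tau (\ee^{x/\tau} - 1) \to x$ that $\frac{1 - \ee^{-\lambda_k/\tau}}{1 - \ee^{-(\lambda_k + \nu)/\tau}} \to \frac{\lambda_k}{\lambda_k + \nu}$ as $\tau \to \infty$. To justify interchanging limit and infinite product, I would derive a summable bound that is uniform in $\tau$. Setting $a_k(\tau) \deq -\log \frac{1 - \ee^{-\lambda_k/\tau}}{1 - \ee^{-(\lambda_k+\nu)/\tau}} \geq 0$ and using $\log(1+x) \leq x$ together with $1 - \ee^{-\nu/\tau} \leq \nu/\tau$, one gets
\begin{equation*}
0 \;\leq\; a_k(\tau) \;\leq\; \frac{\ee^{-\lambda_k/\tau}(1 - \ee^{-\nu/\tau})}{1 - \ee^{-\lambda_k/\tau}} \;\leq\; \frac{\nu/\tau}{\ee^{\lambda_k/\tau} - 1}\,.
\end{equation*}
The elementary inequality $\ee^{x} - 1 \geq x$ for $x \geq 0$ then upgrades this to the $\tau$-independent bound $a_k(\tau) \leq \nu / \lambda_k$, which is summable by the trace-class hypothesis \eqref{Tr h^{s-1}}.

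With the dominating sequence $(\nu/\lambda_k)_k \in \ell^1(\N)$ in hand, dominated convergence applied to counting measure on $\N$ yields
\begin{equation*}
\lim_{\tau \to \infty} \sum_{k \in \N} a_k(\tau) \;=\; \sum_{k \in \N} \log\frac{\lambda_k + \nu}{\lambda_k}\,,
\end{equation*}
equivalently the convergence of the finite-$\tau$ product to $\prod_k \lambda_k/(\lambda_k + \nu) = \int \dd \tilde \mu^\nu$. Exponentiating closes the argument. The whole proof is essentially bookkeeping on the spectrum of $h$; the only place where anything non-trivial is used is the summability of $(\nu/\lambda_k)$, and this is exactly \eqref{Tr h^{s-1}}, so I expect no real obstacle beyond the elementary inequality $\ee^x - 1 \geq x$ needed to produce a $\tau$-independent majorant.
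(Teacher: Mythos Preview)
Your proof is correct and follows essentially the same approach as the paper: factorize both sides as infinite products over modes, then apply dominated convergence to the logarithms using the summable majorant $\nu/\lambda_k$ coming from \eqref{Tr h^{s-1}}. Your derivation of the uniform bound is in fact slightly cleaner than the paper's---you exploit that each factor lies in $(0,1)$ and use $\ee^x - 1 \geq x$ directly, whereas the paper splits into the cases $\lambda_k < \tau$ and $\lambda_k \geq \tau$ to obtain the $O(\nu/\lambda_k)$ estimate and then restricts to $\lambda_k \gg \nu$ before taking logarithms.
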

\begin{proof}[Proof of Lemma \ref{Convergence 2}]
A direct calculation using \eqref{d_tilde_mu} shows that
\begin{equation}
\label{eq:Convergence 2 (ii) RHS}
\int \dd \tilde \mu^\nu \;=\;\prod_{k \in \N} \frac{\lambda_k}{\lambda_k+\nu}\,.
\end{equation}
By using the occupation state basis (c.f.\ \cite[Appendix B, Proof of Lemma B.1]{FrKnScSo}), we have
\begin{equation}
\label{eq:Convergence 2 (ii) LHS}
\frac{\tr \pb{\ee^{-H_{\tau,0}- \nu \cal N_\tau}}}{\tr (\ee^{-H_{\tau,0}})}  \;=\;
\frac{\sum_{\vec m}\ee^{-\sum_k \frac{\lambda_k+\nu}{\tau}m_k}}{\sum_{\vec m}\ee^{-\sum_k \frac{\lambda_k}{\tau}m_k}}\;=\;\frac{\sum_{\vec m}\prod_k \ee^{-\frac{\lambda_k+\nu}{\tau}m_k}}{\sum_{\vec m}\prod_k\ee^{-\frac{\lambda_k}{\tau}m_k}}\;=\;\prod_{k \in \N} \frac{1-\ee^{-\frac{\lambda_k}{\tau}}}{1-\ee^{-\frac{\lambda_k+\nu}{\tau}}}\,.
\end{equation}

We note that, for fixed $k \in \N$, we have
\begin{equation}
\label{eq:Convergence 2 term bound}
\frac{1-\ee^{-\frac{\lambda_k}{\tau}}}{1-\ee^{-\frac{\lambda_k+\nu}{\tau}}}\;=\;1+\frac{\ee^{-\frac{\lambda_k}{\tau}}(\ee^{-\frac{\nu}{\tau}}-1)}{1-\ee^{-\frac{\lambda_k+\nu}{\tau}}}\;=\;1+\cal O\Big(\frac{\nu}{\lambda_k}\Big)\,.
\end{equation}
Indeed, we note that by the mean value theorem 
\begin{equation}
\label{eq:Convergence 2 term bound 1a}
\ee^{-\frac{\nu}{\tau}}-1\;=\;\cal O\Big(\frac{\nu}{\tau}\Big)\,.
\end{equation}
Also, we observe that 
\begin{equation}
\label{eq:Convergence 2 term bound 1b}
\Big|1-\ee^{-\frac{\lambda_k+\nu}{\tau}}\Big| \;\geq\; \Big|1-\ee^{-\frac{\lambda_k}{\tau}}\Big|\,.
\end{equation}
If $\lambda_k < \tau$ we have
\begin{equation}
\label{eq:Convergence 2 term bound a}
\Big|1-\ee^{-\frac{\lambda_k}{\tau}}\Big| \;\geq \; \frac{C \lambda_k}{\tau}\,, \quad \ee^{-\frac{\lambda_k}{\tau}} \;\leq\; 1\,.
\end{equation}
Furthermore, if $\lambda_k \geq \tau$ we have
\begin{equation}
\label{eq:Convergence 2 term bound b}
\Big|1-\ee^{-\frac{\lambda_k}{\tau}}\Big|  \;\geq\; C\,,\quad \ee^{-\frac{\lambda_k}{\tau}} \;\leq\;  \frac{C \tau}{\lambda_k}\,.
\end{equation}
The estimate \eqref{eq:Convergence 2 term bound} follows from \eqref{eq:Convergence 2 term bound 1a}--\eqref{eq:Convergence 2 term bound b}.

Note that the individual factors of \eqref{eq:Convergence 2 (ii) LHS} converge to the corresponding factors of \eqref{eq:Convergence 2 (ii) RHS} as $\tau \rightarrow \infty$. 
We hence reduce the claim to showing that
\begin{equation}
\label{eq:Convergence 2 product}
\lim_{\tau \to \infty} \mathop{\prod_{k \in \N:}}_{\lambda_k \gg \nu} \frac{1-\ee^{-\frac{\lambda_k}{\tau}}}{1-\ee^{-\frac{\lambda_k+\nu}{\tau}}} \;=\; \mathop{\prod_{k \in \N:}}_{\lambda_k \gg \nu} \frac{\lambda_k}{\lambda_k+\nu}\,.
\end{equation}
Here we use the notation $A \gg B$ if there exists a large constant $C>0$ such that $A \geq C B$. (The size of $C$ is specified from context).
The convergence \eqref{eq:Convergence 2 product} follows from the dominated convergence theorem after taking logarithms on both sides, using \eqref{eq:Convergence 2 term bound}, the inequality $|\log(1+z)| \leq C|z|$ for $|z| \leq 1/2$ and the assumption that $\tr h^{-1}<\infty$. Taking logarithms is justified by \eqref{eq:Convergence 2 term bound} and the assumption $\lambda_k \gg \nu$.
\end{proof}

We now have all of the necessary ingredients to prove Proposition \ref{Convergence 1}.

\begin{proof}[Proof of Proposition \ref{Convergence 1}]
We first prove (i). 
Let us consider 
\begin{equation*}
\xi \in \cal C_p \;\deq\; \cal B_p \cup \{\mathbf{1}^{(p)}\}\,.
\end{equation*}
For $\zeta \in \mathbb{C} \setminus [0,\infty)$, we define the functions $\alpha^\xi_\tau \equiv \alpha^\xi_\tau(\zeta)$ and $\alpha^\xi \equiv \alpha^\xi(\zeta)$ by
\begin{equation}
\label{eq:Function alphaI}
\alpha^\xi_{\sharp}(\zeta) \;\deq\; \rho_\sharp\bigg(\Theta_\sharp(\xi)\,\frac{1}{\cal N_\sharp-\zeta}\bigg)\,.
\end{equation}
In the above formula and throughout the proof of the proposition, we use the convention that, given $Y = \cal N, \alpha ,\rho, \ldots$,  the quantity $Y_\sharp$ formally
denotes either $Y_\tau$ or $Y$. In this convention, we write $\phi^*$ for $\bar{\phi}$. This simplifies some of the notation in the sequel.

For $\re \zeta <0$ we have
\begin{equation}
\label{eq:Function alphaII}
\frac{1}{\cal N_\sharp-\zeta} \;=\; \int_{0}^{\infty} \dd \nu\,\ee^{-(\cal N_\sharp-\zeta)\nu}\;=\; \int_{0}^{\infty} \dd \nu\, \ee^{\zeta \nu}\,\ee^{-\nu \cal N_\sharp}\,.
\end{equation}
In particular, from \eqref{eq:Function alphaI}--\eqref{eq:Function alphaII}, it follows that for $\re \zeta <0$ we have 
\begin{equation}
\label{eq:Function alpha 2}
\alpha^\xi_\sharp(\zeta) \;=\;\int_0^{\infty} \dd \nu\, \ee^{\zeta \nu}\,\rho_\sharp \big(\Theta_\sharp (\xi) \ee^{-\nu \cal N_\sharp}\big)\,.
\end{equation}
By Lemma \ref{Quantum Lemma 2} we know that $\pm \Theta_\tau(\xi) \leq \|\xi\| \,\cal N_\tau^p \leq \cal N_\tau^p$ acting on sectors of Fock space (c.f.\ \cite[(3.91)]{Knowles_Thesis}). Hence, it follows that
\begin{equation}
\label{eq:Function alpha I4}
\big|\rho_\tau\big(\Theta_\tau(\xi) \,\ee^{-\nu \cal N_\tau}\big)\big| \;\leq\; \rho_\tau\big(\cal N_\tau^p\,\ee^{-\nu \cal N_\tau}\big) \;\leq\; \rho_\tau\big(\cal N_\tau^p\big)\;\leq C(p)\,,
\end{equation}
uniformly in $\xi \in \cal C_p$ and $\nu>0$. Furthermore, by using Lemma \ref{Classical Lemma 2}, we deduce the classical analogue of  
\eqref{eq:Function alpha I4},
\begin{equation}
\label{eq:Function alpha I4 classical}
\big|\rho\big(\Theta(\xi) \,\ee^{-\nu \cal N}\big)\big| \;\leq\; \rho\big(\cal N^p\,\ee^{-\nu \cal N}\big) \;\leq\; \rho\big(\cal N^p\big)\;\leq C(p)\,,
\end{equation}
uniformly in $\nu>0$. The estimates \eqref{eq:Function alpha I4}--\eqref{eq:Function alpha I4 classical} and the assumption $\re \zeta <0$ allow us to use Fubini's theorem in order to exchange the integration in $\nu$ and expectation $\rho_\sharp(\cdot)$ in \eqref{eq:Function alphaI}--\eqref{eq:Function alphaII} and deduce \eqref{eq:Function alpha 2}.
For fixed $\nu>0$ we have 
\begin{multline}
\label{eq:Function alpha I1}
\rho_\tau\big(\Theta_\tau(\xi) \ee^{-\nu \cal N_\tau}\big) \;=\; \frac{\tr \big(\Theta_\tau(\xi) \,\ee^{-H_{\tau,0}- \cal W_\tau-\nu \cal N_\tau}\big)}{\tr \big(\ee^{-H_{\tau,0}-\nu \cal N_\tau}\big)} \,\frac{\tr \big(\ee^{-H_{\tau,0}-\nu \cal N_\tau}\big)}{\tr \big(\ee^{-H_{\tau,0}}\big)} \, \frac{\tr \big(\ee^{-H_{\tau,0}}\big)}{\tr \big(\ee^{-H_{\tau}}\big)}
\\
\;=\;\tilde \rho_{\tau,1}^\nu\big(\Theta_\tau(\xi)\big) \,\frac{\tr \big(\ee^{-H_{\tau,0}-\nu \cal N_\tau}\big)}{\tr \big(\ee^{-H_{\tau,0}}\big)} \, \frac{1}{\tilde \rho_{\tau,1}^0(1)}\,,
\end{multline}
where $\tilde \rho_{\tau,z}^0$ is defined by setting $\nu=0$ in \eqref{tilde_rho_tau_definition}.
Moreover, we have
\begin{equation}
\label{eq:Function alpha I2}
\rho\big(\Theta(\xi) \ee^{-\nu \cal N}\big) \;=\; \bigg(\int \,\dd \mu^\nu \,\Theta(\xi) \,\ee^{-\cal W}\bigg) \, \bigg(\int \dd \tilde{\mu}^\nu \bigg) \, \bigg(\frac{1}{\int \dd \mu \,\ee^{-\cal W}}\bigg) \;=\;\tilde{\rho}_1^\nu\big(\Theta(\xi)\big) \, \bigg(\int \dd \tilde{\mu}^\nu \bigg) \, \frac{1}{\tilde \rho_{1}^0(1)}\,,
\end{equation}
where $\tilde \rho_{z}^0$ is defined by setting $\nu=0$ in \eqref{tilde_rho_definition}.

We now consider each of the factors in \eqref{eq:Function alpha I1}--\eqref{eq:Function alpha I2}. By \cite[Theorem 1.8]{FrKnScSo} with the Hamiltonian $h^\nu$ given by \eqref{h_nu}, the first factor in \eqref{eq:Function alpha I1} converges to the first factor in \eqref{eq:Function alpha I2} as $\tau \rightarrow \infty$ uniformly in $\xi \in \cal C_p$. Moreover, the convergence of the second factors follows from Lemma \ref{Convergence 2}. Finally, we have convergence of the third factors by \cite[Theorem 1.8]{FrKnScSo} with the Hamiltonian $h$. Note that both applications of \cite[Theorem 1.8]{FrKnScSo} are justified by the assumption \eqref{Tr h^{s-1}}.
In particular, we obtain that
\begin{equation}
\label{eq:Function alpha I3}
\lim_{\tau \to \infty} \rho_\tau\big(\Theta_\tau(\xi) \ee^{-\nu \cal N_\tau}\big) \;=\; \rho\big(\Theta(\xi) \ee^{-\nu \cal N}\big)\,,
\end{equation}
uniformly in $\xi \in \cal C_p$.
From \eqref{eq:Function alpha 2}, \eqref{eq:Function alpha I4}, \eqref{eq:Function alpha I3} and the dominated convergence theorem, it follows that for $\re \zeta<0$ we have
\begin{equation}
\label{eq:Function alpha convergence 1}
\lim_{\tau \to \infty} \alpha_\tau^\xi(\zeta) \;=\; \alpha^\xi(\zeta)\,, 
\end{equation}
uniformly in $\xi \in \cal C_p$.

By \eqref{eq:Function alphaI}, H\"{o}lder's inequality, and arguing as in \eqref{eq:Function alpha I4} and \eqref{eq:Function alpha I4 classical} (with $\nu=0$), we have 
\begin{equation}
\label{eq:Function alpha I5}
|\alpha^\xi_\sharp(\zeta)| \;\leq\; \bigg\| \frac{1}{\cal N_\sharp-\zeta} \bigg\| \,\rho_\sharp \big(\cal N_\sharp^p\big) \;\leq \frac{C(p)}{|\im \zeta|}\,.
\end{equation}
We now show that $\alpha_\tau^\xi,\alpha^\xi$ are analytic in $\mathbb{C} \setminus [0,\infty)$. 

In the sequel, we use the notation 
\begin{equation}
\label{H(R)}
\fra H^{(\leq R)} \;\deq\; \bigoplus_{p \leq R} \fra H^{(p)}\,,\quad \fra H^{(\geq R)} \;\deq\; \bigoplus_{p \geq R} \fra H^{(p)}
\end{equation}
for $R>0$. Let us denote the corresponding orthogonal projections by
\begin{equation}
\label{P(R)}
P^{(\leq R)} : \cal F \rightarrow \fra H^{(\leq R)}\,,\quad P^{(\geq R)} : \cal F \rightarrow \fra H^{(\geq R)}\,.
\end{equation}

In order to prove the analyticity of $\alpha_\tau^\xi$ in $\mathbb{C} \setminus [0,\infty)$ we argue similarly as in the proof of \cite[Lemma 2.34]{FrKnScSo}. Namely, given $n \in \N$ we define for $\zeta \in \mathbb{C} \setminus [0,\infty)$
\begin{equation*}
\alpha^\xi_{\tau,n}(\zeta) \;\deq\; \rho_\tau\bigg(P^{(\leq n)}\,\Theta_\tau(\xi)\,\frac{1}{\cal N_\tau-\zeta}\bigg)\,.
\end{equation*}
Here $P^{(\leq n)}$ is defined as in \eqref{P(R)}.
Note that $\alpha_{\tau,n}^\xi$ is analytic in $\mathbb{C} \setminus [0,\infty)$ since $\cal N_\tau$ is constant on each $m$-particle sector of $\cal F$. As in \eqref{eq:Function alpha I5} we note that 
\begin{equation*}
|\alpha_{\tau,n}^\xi(\zeta)| \;\leq\; C(p) \, \bigg\|\frac{1}{\cal N_\tau-\zeta}\bigg\| \;\leq\; \frac{C(p)}{\max\{-\re \zeta,|\im \zeta|\}}\,.
\end{equation*}
Finally, for $\zeta \in \mathbb{C} \setminus [0,\infty)$ we know that $\lim_{n \rightarrow \infty} \alpha_{\tau,n}^\xi(\zeta)=\alpha_\tau^\xi(\zeta)$ by construction. The analyticity of $\alpha_\tau^\xi(\zeta)$ in $\mathbb{C} \setminus [0,\infty)$ now follows. The analyticity of $\alpha^\xi$ in $\mathbb{C} \setminus [0,\infty)$ is verified by using Lemma \ref{Classical Lemma 2} and differentiating under the integral sign in the representation
\begin{equation*}
\alpha^\xi(\zeta) \;=\; \frac{\int \dd \mu \, \Theta(\xi) \, \frac{1}{\cal N-\zeta}\, \ee^{-W}}{\int \dd \mu \, \ee^{-W}}\,.
\end{equation*}

In what follows, we define the function $\beta_\tau^\xi : \mathbb{C} \setminus [0,\infty) \rightarrow \mathbb{C}$ by
\begin{equation}
\label{eq:Function beta}
\beta_\tau^\xi\;\deq\;\alpha_\tau^\xi-\alpha^\xi\,.
\end{equation}
From the analyticity of $\alpha_\sharp^\xi$ on $\mathbb{C} \setminus [0,\infty)$, \eqref{eq:Function alpha convergence 1} and \eqref{eq:Function alpha I5}, we note that the $\beta_\tau^\xi$ satisfy the following properties.
\begin{enumerate}
\item[(1)] $\beta_\tau^\xi$ is analytic on $\mathbb{C} \setminus [0,\infty)$.
\item[(2)] $\lim_{\tau \to \infty} \sup_{\xi \in \cal C_p} |\beta_\tau^\xi(\zeta)| = 0$ for all $\re \zeta <0$.
\item[(3)] $ \sup_{\xi \in \cal C_p} |\beta_\tau^\xi(\zeta)|\;\leq \frac{C(p)}{|\im \zeta|}$ for all $\zeta \in \mathbb{C} \setminus [0,\infty)$.
\end{enumerate}
We now show that
\begin{equation}
\label{eq:Function beta claim}
\lim_{\tau \rightarrow \infty} \sup_{\xi \in \cal C_p} |\beta_\tau^\xi(\zeta)|=0 \quad \mbox{for all} \quad \zeta \in \mathbb{C} \setminus [0,\infty)\,.
\end{equation}
Namely, we generalise condition (2) above to all $\zeta \in \mathbb{C} \setminus [0,\infty)$.

Given $\epsilon>0$ we define
\begin{equation*}
\cal D_\epsilon \;\deq\; \{\zeta:\im \zeta>\epsilon\}\,
\end{equation*}
and
\begin{equation*}
\cal T_{\epsilon} \;\deq\; \{\zeta_0 \in \cal D_\epsilon: \lim_{\tau \rightarrow \infty}\sup_{\xi \in \cal C_p} \big|\partial_\zeta^m \beta_\tau^\xi (\zeta_0)\big| \rightarrow 0 \,\, \mbox{for all} \,\, m \in \N\}\,.
\end{equation*}
In other words, $\cal T_\epsilon$ consists of all points in $\cal D_\epsilon$ at which all $\zeta$-derivatives of $\beta_\tau^\xi$ converge to zero as $\tau \rightarrow \infty$, uniformly in $\xi \in \cal C_p$.
Note that, by using conditions (1)-(3) above, Cauchy's integral formula and the dominated convergence theorem we have $\cal D_\epsilon \cap \{\zeta: \re \zeta<0\} \subset \cal T_\epsilon$ hence $\cal T_\epsilon \neq \emptyset$.

In order to prove \eqref{eq:Function beta claim} on $\cal D_\epsilon$, it suffices to show that
 $\cal T_\epsilon = \cal D_\epsilon$. By connectedness of $\cal D_\epsilon$ and since $\cal T_\epsilon \neq \emptyset$, the latter claim follows if we prove that $\cal T_\epsilon$ is both open and closed in $\cal D_\epsilon$. Let us first prove that $\cal T_\epsilon$ is open in $\cal D_\epsilon$.
Given $\zeta_0 \in \cal T_\epsilon$, we note that $B_{\zeta_0}(\epsilon/2) \subset \cal D_{\epsilon/2}$. Hence by property (3), it follows that $|\beta_\tau^\xi| \leq C(\epsilon)$ on $B_{\zeta_0}(\epsilon/2)$. By analyticity and Cauchy's integral formula it follows that the series expansion of $\beta_\tau^\xi$ at $\zeta_0$ converges on $B_{\zeta_0}(\epsilon/2)$. Therefore, by differentiating term by term and using the dominated convergence theorem and the assumption $\zeta_0 \in \cal T_\epsilon$, it follows that $B_{\zeta_0}(\epsilon/2) \subset \cal T_\epsilon$. Hence, $\cal T_\epsilon$ is open in $\cal D_\epsilon$.

We now show that $\cal T_\epsilon$ is closed in $\cal D_\epsilon$. Suppose that $(\zeta_n)$ is a 
sequence in $\cal T_\epsilon$ such that $\zeta_n \rightarrow \zeta$ for some $\zeta \in \cal D_\epsilon$. We now show that $\zeta \in \cal T_\epsilon$.
In order to do this, we note that for $n$ large enough we have $\zeta \in B_{\zeta_n}(\epsilon/2)$.
The argument used to show the openness of $\cal T_\epsilon$ in $\cal D_\epsilon$ gives us that $ B_{\zeta_n}(\epsilon/2) \subset \cal T_\epsilon$. In particular, $\zeta \in \cal T_\epsilon$. Hence $\cal T_\epsilon$ is closed in $\cal D_\epsilon$. Therefore $\cal T_\epsilon=\cal D_\epsilon$.
The convergence \eqref{eq:Function beta claim} is shown on $\tilde{\cal D}_\epsilon \;\deq\; \{\zeta:\im \zeta<-\epsilon\}$ by symmetry. The claim \eqref{eq:Function beta claim} on all of $\mathbb{C}
\setminus [0,\infty)$ now follows by letting $\epsilon \rightarrow 0$ and recalling that \eqref{eq:Function beta claim} holds for $\zeta<0$ by condition (2) above.

In what follows we use the notation $\zeta=u+iv$ for $u=\re \zeta$ and $v=\im \zeta$. In particular we have $\partial_{\bar{\zeta}}=\frac{1}{2}(\partial_u+i\partial_v)$.
Applying the Helffer-Sj\"{o}strand formula we obtain 
\begin{equation}
\label{eq:Helffer-Sjostrand_application}
f(\cal N_\sharp) \;=\; \frac{1}{\pi} \int_{\C} \dd \zeta\,\frac{\partial_{\bar{\zeta}}\big[(f(u)+\ii vf'(u)) \chi(v) \big]}{\cal N_\sharp-\zeta}\,,
\end{equation}
where $\chi \in C_c^\infty(\R)$ is a function such that $\chi =1$ on $[-1,1]$.
The identity \eqref{eq:Helffer-Sjostrand_application} can be deduced from the proof of \cite[Proposition C.1]{Knowles_Benaych-Georges} with $n=1$. More precisely, we use the assumption that $f \in C_c^\infty(\R)$ in order to deduce that we can take $\chi$ to be a compactly supported function in the $v$ variable. Furthermore, $\chi$ can be taken to be equal to $1$ on $[-1,1]$ since $\spec (\cal N_\tau) \subset \R$ (c.f.\ \cite[(C.1)]{Knowles_Benaych-Georges}) and since $\cal N$ takes values in $\R$.

Let us define 
\begin{equation*}
\psi(\zeta) \;\deq\;\frac{1}{\pi}\partial_{\bar{\zeta}}\big[(f(u)+\ii vf'(u)) \chi(v) \big]\;=\;\frac{1}{2\pi} \big[\ii v\,f''(u) \, \chi(v)+\ii \big(f(u)+\ii vf'(u)\big)\,\chi'(v)\big]\,,
\end{equation*}
so that by \eqref{eq:Helffer-Sjostrand_application} we have 
\begin{equation}
\label{eq:Function psi}
f(\cal N_\sharp) \;=\;\int_{\C} \dd \zeta\,\frac{\psi(\zeta)}{\cal N_\sharp-\zeta}\,.
\end{equation}
Since $f,\chi \in C_c^\infty(\R)$, it follows that
\begin{equation}
\label{eq:psi compactly supported}
\psi \in C_c^\infty(\C)\,.
\end{equation}
By our choice of $\psi$ we know that
\begin{equation}
\label{eq:psi bound}
|\psi(\zeta)| \;\leq\; C|v| \;=\; C|\im \zeta|\,.
\end{equation}
Substituting this into \eqref{eq:Function psi} we deduce that
\begin{equation}
\label{eq:Function alpha}
\rho_\sharp\big(\Theta_\sharp(\xi) f(\cal N_\sharp)\big) \;=\; \int_{\C} \dd \zeta\,\psi(\zeta)\,\rho_\sharp \bigg(\Theta_\sharp(\xi)\,\frac{1}{\cal N_\sharp-\zeta}\bigg) \;=\;
\int_{\C} \dd \zeta\, \psi(\zeta)\,\alpha_\sharp^\xi(\zeta)\,.
\end{equation}
We note that, by \eqref{eq:Function alpha I5}, \eqref{eq:psi compactly supported} and \eqref{eq:psi bound} we have for almost all $\zeta \in \mathbb{C}$
\begin{equation}
\label{eq:Function F}
|\psi(\zeta)\, \alpha_\tau^\xi(\zeta)| \leq F(\zeta)
\end{equation} 
for some function $F \in L^1(\C)$. Therefore, the interchanging of the integration in $\zeta$ and expectation $\rho_\sharp(\cdot)$ in \eqref{eq:Function alpha} is justified by Fubini's theorem.
Furthermore, recalling \eqref{eq:Function beta} and using \eqref{eq:Function beta claim}, we note that 
\begin{equation}
\label{eq:Function alpha claim}
\lim_{\tau \rightarrow \infty} \sup_{\xi \in \cal C_p} |\alpha_\tau^\xi(\zeta)-\alpha^\xi(\zeta)|=0\,\, \mbox{for all} \,\, \zeta \in \mathbb{C} \setminus [0,\infty)\,.
\end{equation}
The claim (i) now follows from \eqref{eq:Function alpha}--\eqref{eq:Function alpha claim} and the dominated convergence theorem.

We now prove (ii).
Let us define $\gamma_{\sharp,p}^f$ by duality according to
\begin{equation*}
\tr \, \big(\gamma_{\sharp,p}^f \eta\big) \;=\; \rho_\sharp \big(\Theta_\sharp(\eta) f(\cal N_\sharp)\big)\,,
\end{equation*}
for $\eta \in \cal L(\fra H^{(p)})$.
In particular
\begin{equation}
\label{eq:gamma f}
\gamma_{\sharp,p}^f (x_1,\ldots,x_p;y_1,\ldots,y_p) \;=\; \rho_\sharp \big(\phi_\sharp^*(y_1) \cdots \phi_\sharp^*(y_p) \phi_\sharp(x_1) \cdots \phi_\sharp(x_p) f(\cal N_\sharp)\big)\,.
\end{equation}
By duality, part (i) implies that
$\lim_{\tau \to \infty} \|\gamma_{\tau,p}^f - \gamma_p^f\|_{\fra S^2(\fra H^{(p)})}=0$ and 
$\lim_{\tau \to \infty} \tr \gamma_{\tau,p}^f =\tr \gamma_p^f$.
The claim follows from \cite[Lemma 4.10]{FrKnScSo} (which in turn is based on arguments from the proof of \cite[Lemma 2.20]{Simon05}) if we prove that the $\gamma_{\sharp,p}^f$ are positive operators.
Namely, if this is the case, the conclusion of \cite[Lemma 4.10]{FrKnScSo} is that we have $\lim_{\tau \to \infty} \|\gamma_{\tau,p}^f - \gamma_p^f\|_{\fra S^1(\fra H^{(p)})}=0$ and claim (ii) then follows by duality.

We now prove the positivity of $\gamma_{\sharp,p}^f $. Given $\eta \in \fra H^{(p)}$, a direct calculation using \eqref{eq:gamma f} shows that we have
\begin{equation*}
\langle \eta, \gamma_{\sharp,p}^f \eta \rangle_{\fra H^{(p)}} \;=\;\rho_\sharp \big(\Theta_\sharp(\eta \otimes \bar{\eta}) f(\cal N_\sharp)\big)\,.
\end{equation*}
This quantity is nonnegative in the quantum setting since $\Theta_\tau(\eta \otimes \bar{\eta}), f(\cal N_\tau), \ee^{-H_{\tau,0}- \cal W_\tau}$ are positive operators on $\cal F$. In order to see the positivity of $\Theta_\tau(\eta \otimes \bar{\eta})$, we apply \eqref{n_sector}. Moreover, in the classical setting, the quantities
\begin{equation*}
\Theta(\eta \otimes \bar{\eta})
\;=\;\bigg|\int \dd x_1 \cdots \dd x_p \,\bar {\phi}(x_1) \cdots \bar{\phi}(x_p) \, \eta(x_1,\ldots,x_p)\bigg|^2\,, \,\, f(\cal N)\,,\,\, \ee^{-W}
\end{equation*}
are nonnegative. Therefore the $\gamma_{\sharp,p}^f$ are indeed positive operators. Note that this is the only step where we use the nonnegativity of $f$.
\end{proof}

\subsection{Schwinger-Dyson expansion in the quantum problem}

Arguing similarly as in \cite[Section 4.2]{Knowles_Thesis}, we apply a Schwinger-Dyson expansion to $\Psi_{\tau}^t\, \Theta_\tau(\xi)$. Here we recall the time-evolution operator $\Psi_{\tau}^t$ from Definition \ref{Quantum_time_evolution}. We note that a related approach was also applied in \cite{FrKnPi,FrKnSchw}.

Before we proceed with the expansion, we first introduce the operation $\bullet_r$ as well as the free quantum time evolution of operators on $\cal F$, analogously to Definition \ref{Quantum_time_evolution}.
\begin{definition}
\label{Quantum Definition times r}
Let $\xi \in \cal L(\fra H^{(p)})$, $\eta \in \cal L(\fra H^{(q)})$ and $r \leq \min\{p,q\}$ be given.
\begin{itemize}
\item[(i)] We define
\begin{equation*}
\xi \bullet_r \eta \;\deq\, P_{+} (\xi \otimes \mathbf{1}^{(q-r)}) \, (\mathbf{1}^{(p-r)} \otimes \eta) P_{+} \in \cal L(\fra H^{(p+q-r)})\,,
\end{equation*}
where we recall that $P_{+}$ denotes the orthogonal projection from $\fra H^{\otimes r}$ to $\fra H^{(r)}$.
\item[(ii)] With $\bullet_r$ given by (i), we define 
\begin{equation*}
[\xi,\eta]_r \;\deq\; \xi \bullet_r \eta-\eta \bullet_r \xi \in \cal L(\fra H^{(p+q-r)})\,.
\end{equation*}
\end{itemize}
\end{definition}

The following lemma can be found in \cite[Section 3.4.1]{Knowles_Thesis}. We omit the proof.
\begin{lemma}
\label{Quantum Lemma 1}
Let $\xi \in \cal L(\fra H^{(p)})$, $\eta \in \cal L(\fra H^{(q)})$ and $r \leq \min\{p,q\}$ be given.
The following identities hold.
\begin{itemize}
\item[(i)] $\Theta_\tau(\xi) \Theta_\tau(\eta) = \sum_{r=0}^{\min \{p,q\}} {p \choose r} {q \choose r} \frac{r!}{\tau^r} \Theta_\tau (\xi \bullet_r \eta)$.
\item[(ii)] $[\Theta_\tau(\xi),\Theta_\tau(\eta)]=\sum_{r=1}^{\min \{p,q\}}{p \choose r} {q \choose r} \frac{r!}{\tau^r} \Theta_\tau ([\xi,\eta]_r)$.
\end{itemize}
\end{lemma}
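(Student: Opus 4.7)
The plan is to compute the product $\Theta_\tau(\xi)\Theta_\tau(\eta)$ directly from the definition \eqref{Theta_tau_xi} by bringing it into normal-ordered form via the canonical commutation relations
\begin{equation*}
[\phi_\tau(x),\phi_\tau^*(y)] \;=\; \tau^{-1}\delta(x-y)\,, \qquad [\phi_\tau(x),\phi_\tau(y)]\;=\;[\phi_\tau^*(x),\phi_\tau^*(y)]\;=\;0\,.
\end{equation*}
Writing out the product, we have
\begin{equation*}
\Theta_\tau(\xi)\Theta_\tau(\eta) \;=\; \int \xi(\f x;\f y)\,\eta(\f x';\f y')\, \phi_\tau^*(x_1)\cdots\phi_\tau^*(x_p)\,\phi_\tau(y_1)\cdots\phi_\tau(y_p)\,\phi_\tau^*(x_1')\cdots\phi_\tau^*(x_q')\,\phi_\tau(y_1')\cdots\phi_\tau(y_q')\,,
\end{equation*}
where I abbreviate $\f x = (x_1,\ldots,x_p)$, etc., and the integrations are over all $2(p+q)$ variables. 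The whole task is to move the block $\phi_\tau(y_1)\cdots\phi_\tau(y_p)$ to the right of $\phi_\tau^*(x_1')\cdots\phi_\tau^*(x_q')$.

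The first step is to use the CCR inductively on the middle block. Commuting a single $\phi_\tau(y_i)$ past the creation block produces a sum of $q+1$ terms: the fully normal-ordered one plus $q$ terms in which $\phi_\tau(y_i)$ has been contracted with one of the $\phi_\tau^*(x_j')$, yielding a factor $\tau^{-1}\delta(y_i - x_j')$. Iterating this over all $i$ and collecting by the number $r$ of contractions, each term corresponds to choosing an ordered pairing between a size-$r$ subset of $\{y_1,\ldots,y_p\}$ and a size-$r$ subset of $\{x_1',\ldots,x_q'\}$. There are $\binom{p}{r}\binom{q}{r} r!$ such pairings, and each contributes a factor $\tau^{-r}$. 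After relabelling the $r$ identified integration variables and performing the delta integrations, the remaining creation/annihilation structure has $p+q-r$ creators on the left and $p+q-r$ annihilators on the right, acting against the kernel obtained by contracting $r$ indices of $\eta$ with $r$ indices of $\xi$.

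Next I would identify the resulting kernel with the one defining $\Theta_\tau(\xi \bullet_r \eta)$. The contracted kernel, before symmetrization, is precisely $(\xi \otimes \f 1^{(q-r)})(\f 1^{(p-r)}\otimes \eta)$, and because the creation and annihilation operators appearing in \eqref{Theta_tau_xi} are symmetric in their arguments, the kernel is automatically averaged against $P_+$ from both sides when integrated against the resulting string of $\phi_\tau^*$'s and $\phi_\tau$'s. Since the combinatorial factor $\binom{p}{r}\binom{q}{r}$ and the $r!$ from the pairings match Definition \ref{Quantum Definition times r}, this proves part (i).

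Part (ii) then follows immediately by subtracting the analogous expansion for $\Theta_\tau(\eta)\Theta_\tau(\xi)$: the $r=0$ terms agree because the fully normal-ordered expression is symmetric under $(\xi,\eta)\leftrightarrow(\eta,\xi)$, so they cancel, and for $r\geq 1$ the two expansions combine into $\sum_{r=1}^{\min\{p,q\}}\binom{p}{r}\binom{q}{r}\tfrac{r!}{\tau^r}\Theta_\tau([\xi,\eta]_r)$. The main technical obstacle is the bookkeeping in the normal-ordering step---verifying that the combinatorial count $\binom{p}{r}\binom{q}{r}r!$ is correct and that the contracted kernel is indeed $\xi\bullet_r\eta$ once the projections $P_+$ are inserted---but this is a careful enumeration rather than a conceptual difficulty, and it is the content of \cite[Section 3.4.1]{Knowles_Thesis} invoked in the statement.
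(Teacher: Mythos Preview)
Your proposal is correct and is exactly the standard normal-ordering argument via the CCR that underlies this identity; the paper itself omits the proof and simply cites \cite[Section 3.4.1]{Knowles_Thesis}, which carries out precisely the computation you outline. There is nothing to add.
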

\begin{definition}
\label{Quantum_time_evolution2}
Let $\mathbf{A}$ be an operator on $\cal F$. 
We define its free quantum time evolution by
\begin{equation*}
\Psi_{\tau,0}^t \, \mathbf{A} \;\deq\;\ee^{\ii t \tau H_{\tau,0}} \mathbf{A}\, \ee^{-\ii t \tau H_{\tau,0}}\,.
\end{equation*}
\end{definition}
Note that, using first-quantized notation, we have
\begin{equation}
\label{rescaled_Hamiltonian}
\tau H_{\tau,0} \big|_{\fra H^{(n)}} \;=\;\sum_{i=1}^{n}h_i\,.
\end{equation}
Here $h_i$ denotes the operator $h$ acting in the $x_i$ variable.
By \eqref{rescaled_Hamiltonian}, we note  the operator $\Psi_{\tau,0}^t$ does not depend on $\tau$. We keep the subscript $\tau$ in order to emphasize that this is a quantum time evolution.
Moreover, it is useful to apply a time evolution to $p$-particle operators.
\begin{definition}
\label{xi_time_evolution}
Let $\xi \in \cal L(\fra H^{(p)})$. For $t \in \mathbb{R}$ we define
\begin{equation*}
\xi_{t} \;\deq\; \ee^{\ii t \sum_{j=1}^{p} h_j} \, \xi \, \ee^{-\ii t \sum_{j=1}^{p} h_j} \,.
\end{equation*}
\end{definition}
In particular, from \eqref{Theta_tau_xi}, Definition \ref{Quantum_time_evolution2}, \eqref{rescaled_Hamiltonian} and Definition \ref{xi_time_evolution}, it follows that 
for $\xi \in \cal L(\fra H^{(p)})$ we have
\begin{equation}
\label{Psi_0_xi}
\Psi_{\tau,0}^t\, \Theta_\tau(\xi) \;=\; \Theta_\tau(\xi_t)\,.
\end{equation}
The following result holds.
\begin{lemma}
\label{Schwinger_Dyson_expansion_Q}
Let $\xi \in \cal L(\fra H^{(p)})$.
Given $\cal K >0$, $\epsilon>0$ and $t \in \R$, there exists $L=L(\cal K,\epsilon,t,\|\xi\|,p) \in \N$, a finite sequence $(e^l)_{l=0}^{L}$ with $e^l = e^l(\xi,t) \in \cal L(\fra H^{(l)})$ and $\tau_0=\tau_0(\cal K,\epsilon,t,\|\xi\|)>0$ such that
\begin{equation}
\label{Schwinger_Dyson_expansion_Q_estimate}
\Bigg\|\bigg(\Psi_{\tau}^t\Theta_\tau(\xi)
-\sum_{l=0}^{L}\Theta_\tau(e^l)\bigg)\bigg|_{\fra H^{(\leq \cal K \tau)}}\Bigg\| \;\leq\; \epsilon\,,
\end{equation}
for all $\tau \geq \tau_0$. Note that $\fra H^{(\leq \cal K \tau)}$ is defined as in \eqref{H(R)} above.
\end{lemma}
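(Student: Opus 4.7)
The plan is to iterate an interaction-picture Duhamel expansion. From $\Psi_\tau^t\Theta_\tau(\xi) - \Psi_{\tau,0}^t\Theta_\tau(\xi)$ and \eqref{Psi_0_xi}, the variation-of-constants formula gives
\begin{equation*}
\Psi_\tau^t\Theta_\tau(\xi) \;=\; \Theta_\tau(\xi_t) + i\int_0^t \Psi_\tau^{t-s}\,\qb{\tau \cal W_\tau,\, \Theta_\tau(\xi_s)}\,ds\,.
\end{equation*}
Applying Lemma \ref{Quantum Lemma 1}(ii) with $\cal W_\tau = \tfrac{1}{2}\Theta_\tau(W)$ rewrites the commutator as
\begin{equation*}
i\tau\qb{\cal W_\tau, \Theta_\tau(\xi_s)} \;=\; ip\,\Theta_\tau\pb{[W,\xi_s]_1} \;+\; \frac{ip(p-1)}{2\tau}\,\Theta_\tau\pb{[W,\xi_s]_2}\,,
\end{equation*}
which is a $\Theta_\tau$ of a $(p+1)$-particle operator of order $\tau^0$ together with a $\Theta_\tau$ of a $p$-particle operator carrying an explicit $\tau^{-1}$ factor.

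I would iterate this identity $N$ times on every integrand of the form $\Psi_\tau^u\Theta_\tau(\cdot)$ produced at the previous step, peeling off each free-evolution closed term via \eqref{Psi_0_xi}. This yields a decomposition
\begin{equation*}
\Psi_\tau^t\Theta_\tau(\xi) \;=\; \sum_{n=0}^{N-1}\sum_{\mathbf{r}\in\{1,2\}^n} \tau^{-k(\mathbf{r})}\,\Theta_\tau\pb{e^{n,\mathbf{r}}(\xi,t)} \;+\; R_N(\xi,t)\,,
\end{equation*}
indexed by binary sequences $\mathbf{r}$ recording the choice $r\in\{1,2\}$ at each application of Lemma \ref{Quantum Lemma 1}(ii); here $k(\mathbf{r})$ is the number of $r=2$ picks, each $e^{n,\mathbf{r}}$ lies in $\cal L\pB{\fra H^{(p+n-k(\mathbf{r}))}}$ and is obtained by integrating nested $\bullet_r$-commutators of $W$'s with $\xi$ over a time simplex, and $R_N$ collects all depth-$N$ terms still containing the interacting evolution $\Psi_\tau$. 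I would then set $L \deq p+N-1$ and define $e^l$ as the sum of the $\tau^0$ (all $r{=}1$) contributions $e^{n,(1,\ldots,1)}(\xi,t)$ landing in sector $l$, i.e.\ with $p+n=l$.

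The error in \eqref{Schwinger_Dyson_expansion_Q_estimate} breaks into two pieces. The subleading contributions $\tau^{-k(\mathbf{r})}\Theta_\tau(e^{n,\mathbf{r}})$ with $k(\mathbf{r}) \geq 1$ are estimated by Lemma \ref{Quantum Lemma 2}, which gives $\normb{\Theta_\tau(\eta)|_{\fra H^{(\leq \cal K\tau)}}} \leq \cal K^{p+n-k(\mathbf{r})}\|\eta\|$; summed over the finitely many sequences, they contribute an error of order $C(N,p,\|\xi\|,\|W\|,\cal K,t)/\tau$, which is made $\leq \epsilon/2$ by choosing $\tau_0$ large.

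The main difficulty is bounding the remainder $R_N$ on $\fra H^{(\leq \cal K\tau)}$. In interaction picture, $R_N$ is an integral over the simplex $\{0\leq s_N\leq\cdots\leq s_1\leq t\}$ of volume $t^N/N!$ of $\Psi_\tau^{s_N}$ applied to an $N$-fold nested commutator. Since both $H_\tau$ and $H_{\tau,0}$ conserve particle number, $\Psi_\tau^s$ is unitary and leaves $\fra H^{(\leq \cal K\tau)}$ invariant, so the outer $\Psi_\tau$ factor can be discarded in the operator-norm estimate. Expanding the nested commutator via Lemma \ref{Quantum Lemma 1}(ii) and estimating each resulting $\Theta_\tau$ using Lemma \ref{Quantum Lemma 2} produces a bound on $\normb{R_N|_{\fra H^{(\leq \cal K\tau)}}}$ in terms of $\cal K, \|W\|, t, \|\xi\|, p, N$. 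The delicate step is to balance the combinatorial factors $\prod_{j=0}^{N-1}(p+j)$ generated by Lemma \ref{Quantum Lemma 1}(ii) against the simplex volume $t^N/N!$ and the $\cal K^{p+N}$ factor from Lemma \ref{Quantum Lemma 2}; this is the technical heart of the argument. Choosing $N$ sufficiently large depending on $\cal K, \|W\|, t, \|\xi\|, p, \epsilon$ makes $\normb{R_N|_{\fra H^{(\leq \cal K\tau)}}} \leq \epsilon/2$, which, combined with the $\tau^{-1}$-subleading estimate, yields \eqref{Schwinger_Dyson_expansion_Q_estimate}.
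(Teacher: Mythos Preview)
Your setup is essentially the same as the paper's: a Duhamel (Schwinger--Dyson) iteration producing a tree of nested $[\cdot,\cdot]_1$ and $[\cdot,\cdot]_2$ commutators, with the $[\cdot,\cdot]_2$ branches carrying explicit $\tau^{-1}$ factors and the closed terms coming from the all-$r{=}1$ branch. The one-step identity and the handling of the $\tau^{-1}$-subleading terms are fine.

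The gap is in your treatment of the remainder $R_N$. Carrying out the estimate you sketch gives, on $\fra H^{(\leq \cal K\tau)}$,
\[
\normb{R_N\big|_{\fra H^{(\leq \cal K\tau)}}}
\;\leq\; C\,\frac{|t|^N}{N!}\,\prod_{j=0}^{N-1}(p+j)\;(2\|w\|_{L^\infty})^N\,\cal K^{\,p+N}\,\|\xi\|
\;\leq\; C\,e^p\cal K^p\,\bigl(2e\,\cal K\,\|w\|_{L^\infty}\,|t|\bigr)^N\,\|\xi\|\,,
\]
using $(p+N)^N/N!\leq e^{p+N}$. This tends to zero as $N\to\infty$ \emph{only} when $|t|<T_0(\cal K)\deq (2e\,\cal K\,\|w\|_{L^\infty})^{-1}$; for $|t|\geq T_0(\cal K)$ no choice of $N$ makes $R_N$ small, so your assertion ``choosing $N$ sufficiently large \ldots\ makes $\normb{R_N}\leq\epsilon/2$'' fails for large times. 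The Schwinger--Dyson expansion has a genuinely finite radius of convergence in $t$.

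The paper's proof confronts exactly this issue: it first establishes \eqref{Schwinger_Dyson_expansion_Q_estimate} for $|t|<T_0(\cal K)$ by the argument you outline, and then reaches arbitrary $t$ by iterating the short-time result in increments of size $T_0(\cal K)$, using that $\Psi_\tau^s$ is an isometry preserving $\fra H^{(\leq\cal K\tau)}$. The point that makes this iteration close is that $T_0(\cal K)$ is \emph{independent of $p$}: each short-time step replaces $\Theta_\tau(\xi)$ by a finite sum $\sum_l\Theta_\tau(e^l)$ with $e^l\in\cal L(\fra H^{(l)})$ of possibly larger particle order, and one must be able to apply the same short-time expansion again to each such $e^l$ without shrinking the step size. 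Your proposal is missing this time-stepping argument entirely; adding it (and noting the $p$-independence of $T_0(\cal K)$) is what is needed to make the proof work for all $t\in\R$.
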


\begin{proof}
Let us first observe that
\begin{equation}
\label{Schwinger_Dyson_Expansion_Q1}
\Psi_{\tau}^t\, \Theta_\tau(\xi) \;=\;\Theta_\tau(\xi_t)+(\ii p) \, \int_0^t \dd s \, \Psi_{\tau}^s \,\Theta_\tau\big(\big[W,\xi_{t-s}\big]_1\big)+ \frac{\ii {p \choose 2}}{\tau} \, \int_0^t \dd s \, \Psi_{\tau}^s \,\Theta_\tau\big(\big[W,\xi_{t-s}\big]_2\big)\,.
\end{equation}
Indeed, we write
\begin{equation}
\label{Schwinger_Dyson_Expansion_int_step1A}
\Psi_{\tau}^t\, \Theta_\tau(\xi) \;=\;\Psi_{\tau}^s \,\Psi_{\tau,0}^{-s}\,\Psi_{\tau,0}^t \, \Theta_\tau(\xi) \big|_{s=t} \;=\;
\Psi_{\tau,0}^t \, \Theta_\tau(\xi) + \int_0^t \,\dd s \,\frac{\dd}{\dd s} \Big(\Psi_{\tau}^s \,\Psi_{\tau,0}^{-s}\,\Psi_{\tau,0}^t \, \Theta_\tau(\xi)\Big)\,,
\end{equation}
which by \eqref{Psi_0_xi} and Definitions \ref{Quantum_time_evolution} and  \ref{Quantum_time_evolution2} equals
\begin{equation}
\label{Schwinger_Dyson_Expansion_int_step1}
\Theta_\tau(\xi_t)+\int_0^t \,\dd s \,\frac{\dd}{\dd s} \Big(\ee^{\ii s \tau H_\tau} \,\ee^{-\ii s\tau H_{\tau,0}} \,\Theta_\tau(\xi_t)\,\ee^{\ii s \tau H_{\tau,0}} \,\ee^{-\ii s \tau H_\tau} \Big)\,.
\end{equation}
By differentiating in $s$ and using \eqref{H_tau}, \eqref{H_tau_0} and \eqref{Psi_0_xi}, it follows that the integrand in the second term of \eqref{Schwinger_Dyson_Expansion_int_step1} equals
\begin{equation*}
\frac{\ii \tau}{2} \Psi_{\tau}^s \, \Psi_{\tau,0}^{-s} \big[\Theta_\tau(W_s),\Theta_\tau(\xi_t)\big] \;=\; \frac{\ii \tau}{2} \Psi_{\tau}^s \, \big[\Theta_\tau(W),\Theta_\tau(\xi_{t-s})\big]\,,
\end{equation*}
which by Lemma \ref{Quantum Lemma 1} (ii) equals
\begin{equation}
\label{Schwinger_Dyson_Expansion_int_step2}
(\ii p)\, \Theta_\tau\big(\big[W,\xi_{t-s}\big]_1\big)+\frac{\ii {p \choose 2}}{\tau} \,\Theta_\tau\big(\big[W,\xi_{t-s}\big]_2\big)\,.
\end{equation}
Substituting \eqref{Schwinger_Dyson_Expansion_int_step2} into \eqref{Schwinger_Dyson_Expansion_int_step1}, we deduce \eqref{Schwinger_Dyson_Expansion_Q1}.

Iteratively applying \eqref{Schwinger_Dyson_Expansion_Q1} we deduce that, for all $M \in \N$ we have
\begin{equation*}
\Psi_{\tau}^t\, \Theta_\tau(\xi) \;=\;A_{\tau,M}^t(\xi)+E_{\tau,M}^t(\xi)+B_{\tau,M}^t(\xi)\,,
\end{equation*}
where
\begin{multline}
\label{A_E_B}
A_{\tau,M}^t(\xi) \;\deq\;\Theta_\tau(\xi_t)+ \sum_{j=1}^{M-1}\ii^j \, p \,(p+1)\, \cdots \,(p+j-1) \bigg\{ \int_0^t \dd s_1\,\int_0^{s_1} \dd s_2 \,\cdots \int_0^{s_{j-1}} \dd s_j \, 
\\
\Theta_\tau\Big(\big[W_{s_j},\big[W_{s_{j-1}}, \ldots, \big[W_{s_1},\xi_t\big]_1 \ldots \big]_1\big]_1\Big) \bigg\}\,,
\\
E_{\tau,M}^t(\xi) \;\deq\; \ii^M \,p \,(p+1)\, \cdots \, (p+M-1) \, \bigg\{\int_0^t \dd s_1 \, \int_0^{s_1} \dd s_2 \, \cdots \int_0^{s_{M-1}} \dd s_M\, 
\\
\Psi_{\tau}^{s_M} \, \Theta_\tau \Big(\big[W,\big[W_{s_{M-1}-s_M}, \ldots, \big[W_{s_1-s_M},\xi_{t-s_M}\big]_1 \ldots \big]_1\big]_1\Big)\bigg\}\,,
\\
B_{\tau,M}^t(\xi) \;\deq\; \frac{1}{\tau} \, \sum_{j=1}^M \ii^j \, p \, (p+1) \, \cdots \, (p+j-2) \, {p+j-1 \choose 2} \, \bigg\{\int_0^t \dd s_1 \, \int_0^{s_1} \dd s_2 \, \cdots \int_0^{s_{j-1}} \dd s_j\, 
\\
\Psi_{\tau}^{s_j} \, \Theta_\tau \Big(\big[W,\big[W_{s_{j-1}-s_j}, \ldots, \big[W_{s_1-s_j},\xi_{t-s_j}\big]_1 \ldots \big]_1\big]_2\Big)\bigg\}\,.
\end{multline} 
Moreover, we define $A_{\tau,\infty}^t(\xi)$ and $B_{\tau,\infty}^t(\xi)$ by formally setting $M=\infty$ in \eqref{A_E_B}. We now show that, on $\fra H^{(\leq \cal K \tau)}$ we have 
\begin{equation}
\label{n_assumption_convergence}
A_{\tau,M}^t(\xi)  \rightarrow A_{\tau,\infty}^t(\xi)\,,\qquad E_{\tau,M}^t(\xi) \rightarrow 0\,,\qquad B_{\tau,M}^t(\xi)  \rightarrow B_{\tau,\infty}^t(\xi)
\end{equation}
as $M \rightarrow \infty$ in norm whenever $|t|<T_0(\cal K)$, where $T_0(\cal K)$ is chosen sufficiently small depending on $\cal K$, but \emph{independent of} $p$. In particular, it follows that on $\fra H^{(\leq \cal K \tau)}$, the formally-defined quantities $A_{\tau,\infty}^t(\xi)$ and $B_{\tau,\infty}^t(\xi)$ are well defined and that $E_{\tau,\infty}^t(\xi)$ vanishes.

In order to prove \eqref{n_assumption_convergence} we note that, if $n \leq \cal K \tau$, the $j$-th term of the formal sum $A_{\tau,\infty}^t(\xi)$ acting on $\fra H^{(n)}$ is estimated in norm by
\begin{equation}
\label{n_assumption_convergence1}
\frac{|t|^j}{j!} \,(p+j)^j \, 2^j \, \Big(\frac{n}{\tau}\Big)^{p+j} \, \|w\|_{L^\infty}^j \, \|\xi\|\,.
\end{equation}
Here we used Lemma \ref{Quantum Lemma 2} as well as $\|\xi_t\|=\|\xi\|$, $\|W_s\|=\|W\|=\|w\|_{L^\infty}$. The latter two equalities follow immediately from Definition \ref{xi_time_evolution}. The expression in \eqref{n_assumption_convergence1} is
\begin{equation}
\label{n_assumption_convergence2}
\;\leq\; \ee^p \, \cal K^p \, \Big( 2 \ee \, \cal K \, \|w\|_{L^\infty} \,|t|\Big)^j \, \|\xi\| \,.
\end{equation} 
Using \eqref{n_assumption_convergence2}, we can deduce the first convergence result in \eqref{n_assumption_convergence} for $|t|<T_0(\cal K)$. 
By noting that $\Psi_{\tau}^s$ preserves the operator norm, we deduce the second and third convergence results in \eqref{n_assumption_convergence} by an analogous argument.
 We omit the details.
In particular, on $\fra H^{(\leq \cal K \tau)}$ we can write for $|t| < T_0(\cal K)$
\begin{equation}
\label{A+B}
\Psi_{\tau}^t\, \Theta_\tau(\xi) \;=\;A_{\tau,\infty}^t(\xi)+B_{\tau,\infty}^t(\xi)\,,
\end{equation}
where the infinite sum converges in norm. Recalling \eqref{A_E_B}, it also follows from this proof that
\begin{equation}
\label{B_bound}
\big\|B_{\tau,\infty}^t(\xi)\big|_{\fra H^{(\leq \cal K \tau)}}\big\| \;\leq\;\frac{C \ee^{p} \,\cal K^p \,\|\xi\|}{\tau}\,.
\end{equation}
By \eqref{A_E_B}--\eqref{n_assumption_convergence}, \eqref{A+B}--\eqref{B_bound}, we deduce that 
\eqref{Schwinger_Dyson_expansion_Q_estimate} holds for $|t|<T_0(\cal K)$.
Note that the $e^l$ are obtained from the partial sums of $A_{\tau,\infty}^t(\xi)$ (as in \eqref{A_E_B}).
By construction we have that $e^l \in \cal L(\fra H^{(l)})$.

We obtain \eqref{Schwinger_Dyson_expansion_Q_estimate} for general $t$ by iterating this procedure in increments of size $T_0(\cal K)$. This is possible to do by using norm conservation, i.e.\ we use that for all operators $\bf A$ on $\cal F$ we have
\begin{equation}
\label{norm_conservation_Q}
\big\|\Psi_{\tau}^t\,\bf A \big|_{\fra H^{(\leq \cal K \tau)}}\big\| \;=\;\big\|\bf A \big|_{\fra H^{(\leq \cal K \tau)}}\big\|\,.
\end{equation}
Furthermore, we use the observation that the radius of convergence $T_0(\cal K)$ does not depend on $p$. The latter fact is required since after each iteration of the procedure we generate $q$-particle operators, where $q$ grows with $t$. A detailed description of an analogous iteration procedure applied in a slightly different context can be found in \cite[Lemma 3.6]{Knowles_Thesis}.
\end{proof}

\subsection{Schwinger-Dyson expansion in the classical problem}
The following lemma can be found in \cite[Section 3.4.2]{Knowles_Thesis}. We omit the proof.
\begin{lemma}
\label{Classical Lemma 1}
Let $\xi \in \cal L(\fra H^{(p)})$, $\eta \in \cal L(\fra H^{(q)})$ be given. We then have 
\begin{equation*}
\big\{\Theta(\xi),\Theta(\eta)\big\}\;=\;\ii \, pq \,\Theta\big([\xi,\eta]_1\big)\,.
\end{equation*}
\end{lemma}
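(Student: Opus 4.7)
Since $\Theta(\xi) = \Theta(P_+ \xi P_+)$, we may assume without loss of generality that the kernels of $\xi$ and $\eta$ are symmetric in the $x$-variables and in the $y$-variables separately. The plan is to expand the Poisson bracket by the Leibniz rule down to the level of pointwise brackets of $\phi$ and $\bar\phi$, then recognize the resulting contractions as $\bullet_1$ operations. Writing $\Theta(\xi) = \int dX\, dY\, \xi(X;Y)\, \bar\phi(x_1)\cdots\bar\phi(x_p)\phi(y_1)\cdots\phi(y_p)$ with $X=(x_1,\ldots,x_p)$, $Y=(y_1,\ldots,y_p)$ (and analogously $\Theta(\eta)$ with primed variables and parameters $q$), the Leibniz rule reduces $\{\Theta(\xi),\Theta(\eta)\}$ to a sum over all pairs of one factor from the first monomial and one from the second.

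By \eqref{Poisson}, the only nonvanishing pointwise brackets are $\{\phi(y),\bar\phi(x')\}=\ii\delta(y-x')$ and $\{\bar\phi(x),\phi(y')\}=-\ii\delta(x-y')$, so only two families of cross-terms survive. First, I would compute the contribution from pairing some $\phi(y_j)$ appearing in $\Theta(\xi)$ with some $\bar\phi(x'_k)$ appearing in $\Theta(\eta)$: this produces a factor $\ii\delta(y_j-x'_k)$ times the remaining $(p+q-1)$-fold product of $\bar\phi$'s and $\phi$'s, integrated against $\xi(X;Y)\eta(X';Y')$. Using the delta to eliminate one integration variable and relabeling, the resulting integral is exactly $\ii\,\Theta(\xi\bullet_1\eta)$, evaluated for the specific pair of contracted indices. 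By the assumed symmetry of $\xi$ and $\eta$, all $pq$ such pairs give identical contributions, producing $\ii\, pq\, \Theta(\xi\bullet_1\eta)$ in total.

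The second family pairs $\bar\phi(x_i)$ from $\Theta(\xi)$ with $\phi(y'_l)$ from $\Theta(\eta)$, giving $-\ii\delta(x_i-y'_l)$; the same symmetry/relabeling argument, but with the roles of $\xi$ and $\eta$ exchanged, yields $-\ii\,pq\,\Theta(\eta\bullet_1\xi)$. Adding the two contributions and recalling Definition \ref{Quantum Definition times r}(ii) gives
\begin{equation*}
\{\Theta(\xi),\Theta(\eta)\} \;=\; \ii\,pq\,\Theta(\xi\bullet_1\eta) - \ii\,pq\,\Theta(\eta\bullet_1\xi) \;=\; \ii\,pq\,\Theta([\xi,\eta]_1)\,,
\end{equation*}
which is the claimed identity.

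The calculation itself is essentially bookkeeping; the only nontrivial step is identifying the single-contraction integral with the abstract operation $\bullet_1$ on $\cal L(\fra H^{(p+q-1)})$. This relies on writing out the kernel of $\xi\bullet_1\eta$ using the definition $P_+(\xi\otimes\mathbf{1}^{(q-1)})(\mathbf{1}^{(p-1)}\otimes\eta)P_+$: since $\Theta$ automatically symmetrizes its argument (the surrounding creation/annihilation monomial is symmetric under permutations of the $x_i$'s and $y_j$'s), the outer projections $P_+$ are harmless, and the inner composition matches the $\delta$-contraction produced by the Poisson bracket exactly. This identification step is the one that warrants care; the rest is a direct application of the Leibniz rule to \eqref{theta_xi} and \eqref{Poisson}.
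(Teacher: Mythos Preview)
Your proof is correct; the computation via the Leibniz rule and identification of the single contraction with $\bullet_1$ is exactly the standard argument. Note that the paper itself omits the proof entirely, citing \cite[Section 3.4.2]{Knowles_Thesis}, so there is no in-paper proof to compare against; your argument is precisely the one found in that reference.
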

\begin{definition}
\label{Classical_time_evolution2}
Let $\xi \in \cal L(\fra H^{(p)})$. 
We define $\Psi^t_0 \, \Theta(\xi)$ to be the random variable
\begin{equation*}
\int \dd x_1 \cdots \dd x_p \, \dd y_1 \cdots \dd y_p \, \xi(x_1, \dots, x_p; y_1, \dots, y_p) \, \overline{S_{t,0} \phi}(x_1) \cdots \overline{S_{t,0}\phi}(x_p) \,S_{t,0}\phi(y_1) \cdots S_{t,0}\phi(y_p)\,,
\end{equation*} 
where $S_{t,0} \deq \ee^{-\ii t h}$ denotes the free Schr\"{o}dinger evolution on $\fra H$ 
corresponding to the Hamiltonian \eqref{Hamiltonian h1}.
\end{definition}
In particular, from Definitions \ref{Classical_time_evolution} and \ref{Classical_time_evolution2} we have 
\begin{equation}
\label{Poisson bracket}
\partial_t \Psi^t \, \Theta(\xi)\;=\; \Psi^t \, \{H, \Theta(\xi)\}\,,\qquad\,\partial_t \Psi_0^t \,\Theta(\xi) \;=\;  \Psi_0^t \,\{H_0,\Theta(\xi)\}\,,
\end{equation} 
where we recall \eqref{classical_hamiltonian}--\eqref{classical_free_hamiltonian}.

From \eqref{theta_xi}, Definition \ref{xi_time_evolution} and Definition \ref{Classical_time_evolution2}, it follows that 
for $\xi \in \cal L(\fra H^{(p)})$ we have
\begin{equation}
\label{Psi_0_xi_C}
\Psi_{0}^t \, \Theta(\xi) \;=\; \Theta(\xi_t)\,.
\end{equation}
We now prove the classical analogue of Lemma \ref{Schwinger_Dyson_expansion_Q}.

\begin{lemma}
\label{Schwinger_Dyson_expansion_C}
Let $\xi \in \cal L(\fra H^{(p)})$.
Given $\cal K >0$, $\epsilon>0$ and $t \in \R$, for $L=L(\cal K,\epsilon,t,\|\xi\|,p) \in \N$ and $\tau_0=\tau_0(\cal K,\epsilon,t,\|\xi\|)>0$ chosen possibly larger than in Lemma \ref{Schwinger_Dyson_expansion_Q} and for the same choice of $e^l = e^l(\xi,t) \in \cal L(\fra H^{(l)})$ as in Lemma \ref{Schwinger_Dyson_expansion_Q} we have
\begin{equation*}
\Bigg|\Bigg(\Psi^t \Theta(\xi)
-\sum_{l=0}^{L}\Theta(e^l)\Bigg) \; \mathbf{1}_{\{\cal N \leq \cal K\}}\Bigg| \;\leq\; \epsilon\,,
\end{equation*}
for all $\tau \geq \tau_0$.
\end{lemma}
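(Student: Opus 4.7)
The plan is to mirror the proof of Lemma \ref{Schwinger_Dyson_expansion_Q}, exploiting the key simplification that the classical Poisson bracket in Lemma \ref{Classical Lemma 1} contains only a $\bullet_1$ term. Consequently, no $\tau^{-1}$-type correction (the analogue of $B_{\tau,\infty}^t$) appears in the classical expansion, and the partial sums of the classical Schwinger--Dyson series coincide, coefficient by coefficient, with those extracted from $A^t_{\tau,\infty}(\xi)$ in the quantum case. In particular, the same operators $e^l$ as in Lemma \ref{Schwinger_Dyson_expansion_Q} are produced.

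Starting from the first identity in \eqref{Poisson bracket} and Duhamel's formula applied to $F(s) \deq \Psi^s \Theta(\xi_{t-s})$, together with \eqref{Psi_0_xi_C} and the invariance of $H_0$ under the free flow to handle $\partial_s \Theta(\xi_{t-s})$, I obtain
\begin{equation*}
\Psi^t\Theta(\xi)\;=\;\Theta(\xi_t)+\int_0^t \dd s\,\Psi^s \poib{\cal W}{\Theta(\xi_{t-s})}\,.
\end{equation*}
Applying Lemma \ref{Classical Lemma 1} with $\cal W = \tfrac12\Theta(W)$ gives $\poib{\cal W}{\Theta(\xi_{t-s})} = \ii p\, \Theta([W,\xi_{t-s}]_1)$, the exact analogue of \eqref{Schwinger_Dyson_Expansion_Q1} with its second term deleted. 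Iterating this identity $M$ times yields $\Psi^t \Theta(\xi) = A_M^t(\xi) + E_M^t(\xi)$, where $A_M^t(\xi)$ and $E_M^t(\xi)$ have exactly the forms of $A_{\tau,M}^t$ and $E_{\tau,M}^t$ in \eqref{A_E_B} but with $\Theta_\tau$ replaced by $\Theta$. Since the combinatorial coefficients $\ii^j p(p+1)\cdots(p+j-1)$ and the nested $\bullet_1$-bracket structures are identical, the $e^l$ read off from the partial sums of the formal $A_\infty^t(\xi)$ agree term by term with the $e^l$ of Lemma \ref{Schwinger_Dyson_expansion_Q}.

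The remaining step is a uniform bound on $\{\cal N \leq \cal K\}$. Since $S_t$ preserves $\norm{\phi}_{\fra H}$, the mass $\cal N$ is invariant under $\Psi^s$, so $\norm{S_s\phi}_{\fra H}^{2q} \leq \cal K^q$ holds at every time on this event. Combining Lemma \ref{Classical Lemma 2} with $\norm{W} = \norm{w}_{L^\infty}$ and the fact that each bracket $[W,\cdot]_1$ at most doubles the operator norm, the $j$-th summand of $A_M^t(\xi)$ (and, by the same reasoning together with the mass-invariance of $\Psi^{s_M}$, the integrand of $E_M^t(\xi)$) is bounded on $\{\cal N\leq\cal K\}$ by
\begin{equation*}
\frac{|t|^j}{j!}\,p(p+1)\cdots(p+j-1)\,2^j\,\cal K^{p+j}\,\norm{w}_{L^\infty}^j\,\norm{\xi}\;\leq\;\ee^p \cal K^p \pb{2\ee\,\cal K \norm{w}_{L^\infty}|t|}^j \norm{\xi}\,,
\end{equation*}
which is the classical analogue of \eqref{n_assumption_convergence1}--\eqref{n_assumption_convergence2}. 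Hence, for $|t|<T_0(\cal K)$ with the same $T_0(\cal K)$ as in Lemma \ref{Schwinger_Dyson_expansion_Q} (in particular independent of $p$), the series converges absolutely and $E_M^t(\xi)\,\mathbf{1}_{\{\cal N\leq \cal K\}} \to 0$ as $M \to \infty$; truncating at a sufficiently large $L=L(\cal K,\epsilon,t,\norm{\xi},p)$ gives the lemma on this short time interval.

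For general $t \in \R$, I iterate in time increments of length less than $T_0(\cal K)$, exactly as in the final step of the proof of Lemma \ref{Schwinger_Dyson_expansion_Q}: the event $\{\cal N \leq \cal K\}$ is preserved by $L^2$-conservation of $S_s$, and the $p$-independence of $T_0(\cal K)$ lets the iteration close despite the growth in particle number at each step. The main obstacle I anticipate is precisely this $p$-independence of the radius of convergence, which, together with the matching of the $e^l$, is what ties the classical expansion to the quantum one; beyond these two points, the argument is a direct transcription of the proof of Lemma \ref{Schwinger_Dyson_expansion_Q}.
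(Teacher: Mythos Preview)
Your proposal is correct and follows essentially the same approach as the paper: derive the classical Duhamel identity $\Psi^t\Theta(\xi)=\Theta(\xi_t)+\ii p\int_0^t\dd s\,\Psi^s\Theta([W,\xi_{t-s}]_1)$ via the Poisson bracket (Lemma \ref{Classical Lemma 1}), iterate to obtain the same $A$- and $E$-terms as in \eqref{A_E_B} (with no $B$-term), bound them on $\{\cal N\leq\cal K\}$ using Lemma \ref{Classical Lemma 2} in place of Lemma \ref{Quantum Lemma 2}, and extend to all times by iterating in increments of size $T_0(\cal K)$ using $L^2$-conservation of $S_t$. The identification of the $e^l$ with those of Lemma \ref{Schwinger_Dyson_expansion_Q} and the $p$-independence of $T_0(\cal K)$ are exactly the points the paper emphasizes as well.
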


\begin{proof}
We first note that we have the classical analogue of \eqref{Schwinger_Dyson_Expansion_Q1}
\begin{equation}
\label{Schwinger_Dyson_Expansion_C1}
\Psi^t \, \Theta(\xi) \;=\;\Theta(\xi_t)+(\ii p) \, \int_0^t \dd s \, \Psi^s \,\Theta\big(\big[W,\xi_{t-s}\big]_1\big)\,.
\end{equation}
Namely, arguing as in \eqref{Schwinger_Dyson_Expansion_int_step1A} and using \eqref{Psi_0_xi_C}, it follows that
\begin{equation}
\label{Schwinger_Dyson_Expansion_C2}
\Psi^t \, \Theta(\xi) \;=\;
\Theta(\xi_t) + \int_0^t \,\dd s \,\frac{\dd}{\dd s} \Big(\Psi^s \,\Psi_0^{-s}\,\Psi_0^t \, \Theta(\xi)\Big)\,.
\end{equation}
Differentiating and using \eqref{Poisson bracket}, it follows that the integrand in \eqref{Schwinger_Dyson_Expansion_C2} equals
\begin{equation*}
\Psi^s \big\{H,\Psi_0^{-s+t} \,\Theta(\xi)\big\}-\Psi^s \Psi_0^{-s} \big\{H_0,\Psi_0^t \,\Theta(\xi)\big\} 
\;=\;\Psi^s \big\{H,\Theta(\xi_{-s+t})\big\}-\Psi^s \Psi_0^{-s} \big\{H_0,\Theta(\xi_t)\big\}\,.
\end{equation*}
In the last equality we also used \eqref{Psi_0_xi_C}.
By Lemma \ref{Classical Lemma 1} and \eqref{classical_hamiltonian}, we can rewrite this as
\begin{equation}
\label{Schwinger_Dyson_Expansion_C3}
\Psi^s \Big(\ii p \, \Theta\big(\big[h+W,\xi_{-s+t}\big]_{1}\big)\Big)-\Psi^s \Big(\Psi_0^{-s} \,\ii p \,\Theta\big(\big[h,\xi_t\big]_1 \big)\Big)\,.
\end{equation}
We note that $\Psi_0^{-s} \Theta \big(\big[h,\xi_t\big]_1\big)=\Theta \big(\big[h,\xi_{-t+s}\big]_1\big)$ and hence the expression in \eqref{Schwinger_Dyson_Expansion_C3} equals
\begin{equation*}
\ii p \, \Psi^s \,\Theta\big(\big[W,\xi_{t-s}\big]_1\big)\,.
\end{equation*}
Substituting this into \eqref{Schwinger_Dyson_Expansion_C2} we obtain \eqref{Schwinger_Dyson_Expansion_C1}.

We now iterate \eqref{Schwinger_Dyson_Expansion_C1} analogously as in the proof of Lemma \ref{Schwinger_Dyson_expansion_Q}.
The convergence for $|t| < T_0(\cal K)$ is shown by arguing as in the proof of \eqref{n_assumption_convergence}. The only difference is that instead of applying Lemma \ref{Quantum Lemma 2}, we now apply Lemma \ref{Classical Lemma 2}. 
(In fact, the quantity $T_0(\cal K)$ can be chosen to be the same as the corresponding quantity in Lemma \ref{Schwinger_Dyson_expansion_Q}, which was obtained from \eqref{n_assumption_convergence2}).
Furthermore, in the extension to all times, instead of applying \eqref{norm_conservation_Q}, we use that $S_t$ preserves the norm on $\fra H$. 
Finally, we note that the $e^l$ that we obtain from iterating \eqref{Schwinger_Dyson_Expansion_C1} are the same as those obtained by iterating \eqref{Schwinger_Dyson_Expansion_Q1} in the proof of Lemma \ref{Schwinger_Dyson_expansion_Q}.
\end{proof}

\subsection{Proof of Proposition \ref{Small_N_convergence}}
\label{Proof of Small_N_convergence}
We now combine the results of Proposition \ref{Convergence 1}, Lemma \ref{Schwinger_Dyson_expansion_Q} and Lemma \ref{Schwinger_Dyson_expansion_C} in order to prove Proposition \ref{Small_N_convergence}.

\begin{proof}[Proof of Proposition \ref{Small_N_convergence}]
By assumption, there exists $\cal K>0$ such that $F=0$ on $(\cal K,\infty)$.
Let us note that, for all $\xi \in \cal L(\fra H^{(p)})$ and for all $t \in \R$, the following  inequalities hold in the quantum setting.
\begin{equation}
\label{small_N_convQ1}
\Big\|\Psi_{\tau}^t\Theta_\tau(\xi)\big|_{\fra H^{(\leq \cal K \tau)}}\Big\| \;\leq\; \cal K^p\,\|\xi\|\,.
\end{equation}
\begin{equation}
\label{small_N_convQ2}
\frac{\|F(\cal N_\tau)\,\ee^{-H_\tau}\|_{\fra S^1(\cal F)}}{\tr (\ee^{-H_\tau})} \;\leq\; C\,,
\end{equation}
for some constant $C>0$ independent of $\tau$. 
The inequality \eqref{small_N_convQ1} follows from Definition \ref{Quantum_time_evolution}, the observation that  
$\Psi_{\tau}^t$  preserves operator norm and from Lemma \ref{Quantum Lemma 2}. The inequality \eqref{small_N_convQ2} follows from $F(\cal N_\tau) \ee^{-H_\tau} \geq 0$ and $\frac{\tr (F(\cal N_\tau) \ee^{-H\tau})}{\tr (\ee^{-H_\tau})} \leq C$. Furthermore, in the classical setting, the following inequalities hold.
\begin{equation}
\label{small_N_convC1}
\big|\Psi^t \Theta(\xi)\big| \;\leq\; \cal K^p \,\|\xi\|\,,\quad\mbox{whenever}\quad \|\phi\|_{\fra H}^2 \;\leq\; \cal K\,.
\end{equation}
\begin{equation}
\label{small_N_convC2}
\rho(F(\cal N)) \;\leq\; C\,.
\end{equation}
The inequality \eqref{small_N_convC1} follows from Definition \ref{Classical_time_evolution}, the Cauchy-Schwarz inequality and since $S_t$ preserves the norm on $\fra H$.
The inequality \eqref{small_N_convC2} is immediate since $F \in C_c^{\infty}(\R)$.

We now apply H\"{o}lder's inequality, \eqref{small_N_convQ1}-\eqref{small_N_convC2}
and Lemmas \ref{Schwinger_Dyson_expansion_Q} and \ref{Schwinger_Dyson_expansion_C} with $\cal K$ chosen as above to deduce that the claim follows if we prove that for all $q_1,\ldots,q_m \in \N$, $\eta^1\in \cal L(\fra H^{(q_1)}),\ldots,\eta^m\in \cal L(\fra H^{(q_m)})$ we have
\begin{equation}
\label{Small_N_convergence_claim}
\lim_{\tau \to \infty} \rho_\tau \big(\Theta_\tau(\eta^1)\,\cdots\, \Theta_\tau(\eta^m)\,F(\cal N_\tau)\big) \;=\;
\rho \big(\Theta(\eta^1)\,\cdots\, \Theta(\eta^m)\,F(\cal N)\big)\,.
\end{equation}
Note that in the iterative application of \eqref{small_N_convQ1} we use that the operator $\Psi_{\tau}^t\Theta_\tau(\xi)$ leaves the sectors $\fra H^{(n)}$ of the Fock space invariant, thus allowing us to apply the estimate on $\fra H^{(\leq \cal K\tau)}$. 
We can rewrite the right-hand side of \eqref{Small_N_convergence_claim} as 
\begin{equation}
\label{Small_N_convergence_claim1}
\rho \big(\Theta(\eta)F(\cal N)\big)\,,
\end{equation}
where $q \deq q_1+ \cdots + q_m$ and
\begin{equation}
\label{eta_definition}
\eta \;\deq\; \eta^1\bullet_0 \cdots \bullet_0  \eta^m\in \cal L(\fra H^{(q)})\,.
\end{equation}
Here we recall Definition \ref{Quantum Definition times r} (i). 

By iteratively applying Lemma \ref{Quantum Lemma 1} (i) and using H\"{o}lder's inequality together with \eqref{small_N_convQ1}--\eqref{small_N_convQ2}, it follows that the left-hand side of \eqref{Small_N_convergence_claim} equals
\begin{equation}
\label{Small_N_convergence_claim2}
\rho_\tau \big(\Theta_\tau(\eta)F(\cal N_\tau)\big)+\cal  O\bigg(\frac{1}{\tau}\bigg)\,,
\end{equation}
where $\eta$ is given by \eqref{eta_definition} above.
The convergence \eqref{Small_N_convergence_claim} now follows from \eqref{Small_N_convergence_claim1}-\eqref{Small_N_convergence_claim2}, the assumptions on $F$ and Proposition \ref{Convergence 1} (ii).
\end{proof}

\section{The large particle number regime: proof of Proposition \ref{Large_N_convergence}.}
\label{Large particle number}

In this section we consider the regime where $\cal N_\tau$, $\cal N$ are assumed to be large. The main result that we prove is Proposition \ref{Large_N_convergence}.

\begin{proof}[Proof of Proposition \ref{Large_N_convergence}]
We first prove (i). 
Let us note that $\cal N_\tau$ commutes with $\ee^{-H_\tau}$ and with $\Psi_\tau^{t_j}(\xi^j)$ for all $j=1,\ldots,m$. Therefore, the expression that we want to estimate in (i) can be rewritten as
\begin{equation*}
\Big|\rho_\tau  \Big(\big(1+\cal N_\tau\big)^{-p_1}\,\Psi_\tau^{t_1} \,\Theta_\tau(\xi^1)\,\cdots \, 
\big(1+\cal N_\tau\big)^{-p_m}\,\Psi_\tau^{t_m} \,\Theta_\tau(\xi^m)\,\big(1+\cal N_\tau\big)^{p}\,G(\cal N_\tau)
\Big)\Big|\,,
\end{equation*}
where we define $p \deq p_1+\cdots+p_m$.
Using H\"{o}lder's inequality and $(1+\cal N_\tau)^p\,G(\cal N_\tau) \geq 0$, this is
\begin{equation}
\label{Large_N_convergence1}
\;\leq\; \Bigg(\prod_{j=1}^{m}\big\|\big(1+\cal N_\tau\big)^{-p_j}\,\Psi_\tau^{t_j} \,\Theta_\tau (\xi^j)\big\|\Bigg) \, \rho_\tau \Big(\big(1+\cal N_\tau\big)^{p}\,G(\cal N_\tau)
\Big)\,.
\end{equation}
The $j$-th factor of the first expression in \eqref{Large_N_convergence1} equals
\begin{equation}
\label{Large_N_convergence2}
\big\|\Psi_\tau^{t_j} \,\big(1+\cal N_\tau\big)^{-p_j}\,\Theta_\tau(\xi^j)\big\|
\;=\;\big\|\big(1+\cal N_\tau\big)^{-p_j}\,\Theta_\tau(\xi^j)\big\| \;\leq\; \|\xi^j\|\,.
\end{equation}
Here we used that $\cal N_\tau$ commutes with $\ee^{\ii t \tau H_\tau}$, that $\Psi_\tau^{t_j}$ preserves operator norm and Lemma \ref{Quantum Lemma 2}.

By construction of $G$ we note that the second expression in \eqref{Large_N_convergence1} is
\begin{equation*}
\;\leq\; \rho_\tau\Big((1+\cal N_\tau)^p\, \mathbf{1}(\cal N_\tau \;\geq\; \cal K)\Big)\,,
\end{equation*}
which by Markov's inequality is 
\begin{equation}
\label{Large_N_convergence3}
\;\leq\;\frac{\rho_\tau \big((1+\cal N_\tau)^{p+1}\big)}{\cal K} \;\leq\; \frac{C(p)}{\cal K}\,.
\end{equation}
The above application of Markov's inequality is justified since $\cal N_\tau$ commutes with $\ee^{-H_\tau}$.
Claim (i) now follows from \eqref{Large_N_convergence1}--\eqref{Large_N_convergence3}.

We now prove (ii) by similar arguments. Namely, we rewrite the expression that we want to estimate in (ii) as
\begin{equation*}
\Big|\rho \Big(\big(1+\cal N\big)^{-p_1}\,\Psi^{t_1} \,\Theta(\xi^1)\,\cdots \, 
\big(1+\cal N\big)^{-p_m}\,\Psi^{t_m} \,\Theta(\xi^m)\,\big(1+\cal N\big)^{p}\,G(\cal N)
\Big)\Big|\,, 
\end{equation*}
which is
\begin{equation}
\label{Large_N_convergence4}
\;\leq\;\Bigg(\prod_{j=1}^{m}\Big|\big(1+\cal N\big)^{-p_j}\,\Psi^{t_j} \,\Theta(\xi^j)\Big|\Bigg) \, \rho \Big(\big(1+\cal N\big)^{p}\,G(\cal N)
\Big)\,.
\end{equation}
Using the observation that $S_{t_j}$ preserves the norm on $\fra H$ as well as Lemma \ref{Classical Lemma 2}, it follows that the $j$-th factor of the first term in \eqref{Large_N_convergence4} is bounded by $\|\xi^j\|$. We again use the properties of $G$ and Markov's inequality to deduce that the second term in \eqref{Large_N_convergence4} is 
\begin{equation*}
\;\leq\; \rho\Big((1+\cal N)^p\, \mathbf{1}(\cal N \;\geq\; \cal K)\Big) \;\leq\;\frac{\rho \big((1+\cal N)^{p+1}\big)}{\cal K} \;\leq\; \frac{C(p)}{\cal K}\,.
\end{equation*}
Claim (ii) now follows as in the quantum setting.
\end{proof}

\begin{remark}
\label{uniformity_remark} 
Following the proofs of Proposition \ref{Small_N_convergence} and \ref{Large_N_convergence}, it is immediate that the convergence in Theorem \ref{Main Result} is uniform on the set of parameters $w \in L^\infty(\Lambda)$, $t_1\in \mathbb{R},\ldots,t_m \in \R$, $p_1,\ldots,p_m \in \N$, $m \in \N$, satisfying
\begin{equation*}
\text{max}(\|w\|_{L^\infty}, |t_1|, \ldots,|t_m|, p_1, \ldots, p_m, \|\xi^1\|,\ldots,\|\xi^m\|, m) \;\leq \; M\,,
\end{equation*}
for any fixed $M > 0$.
\end{remark}

\section{The local problem.}
\label{The local problem}
In this section we fix 
\begin{equation*}
\Lambda=\mathbb{T}^1 \quad \mbox{and} \quad v=0\,.
\end{equation*}
Throughout this section and Appendix \ref{Xsb space appendix}, given $s\in \mathbb{R}$, we write $H^s(\Lambda)$ for the $L^2$-based inhomogeneous Sobolev space of order $s$ on $\Lambda$.

We extend the previous analysis to the setting of the local problem \eqref{localNLS1}. In particular, we give the proof of Theorem \ref{Main Result_local_nonlinearity}. 
Before proceeding with the proof of Theorem \ref{Main Result_local_nonlinearity} we prove the following stability result.
\begin{proposition}
\label{L2 convergence}
Let $s \geq \frac{3}{8}$ be given. Let $\phi_0 \in H^s(\Lambda)$. We consider the  Cauchy problem on $\Lambda$ given by
\begin{equation}
\label{NLS}
\begin{cases}
\ii \partial_t u + (\Delta-\kappa)u\;=\;|u|^2u \\
u|_{t=0}\;=\;\phi_0\,.
\end{cases}
\end{equation}
In addition, given $\epsilon>0$, and recalling the definition of $w^\epsilon$ from \eqref{w_epsilon} we consider
\begin{equation}
\label{NLS_epsilon}
\begin{cases}
\ii \partial_t u^{\epsilon}+ (\Delta-\kappa)u^{\epsilon}\;=\;(w^\epsilon*|u^{\epsilon}|^2)u^{\epsilon} \\
u^{\epsilon}|_{t=0}\;=\;\phi_0\,.
\end{cases}
\end{equation}
Let $u,u^{\epsilon}$ be solutions of \eqref{NLS} and \eqref{NLS_epsilon} respectively. Then, for all $T >0 $ we have 
\begin{equation}
\label{L2 convergence claim}
\lim_{\epsilon \to 0} \|u^{\epsilon}-u\|_{L^{\infty}_{[-T,T]}\fra H} \;=\; 0\,.
\end{equation}
\end{proposition}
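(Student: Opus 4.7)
\textbf{Proof plan for Proposition \ref{L2 convergence}.} My plan is to treat $u^\epsilon$ as a perturbation of $u$ in Bourgain's $X^{s,b}$ space with $s\geq 3/8$ and $b$ slightly larger than $1/2$, derive a Duhamel equation for $v^\epsilon \deq u^\epsilon - u$, close a contraction-type estimate on a short time interval whose length is uniform in $\epsilon$, and iterate to cover $[-T,T]$. The key analytic inputs are the periodic Strichartz and trilinear $X^{s,b}$ estimates for the cubic NLS on $\mathbb{T}$ due to Bourgain, together with the fact that $\|w^\epsilon\|_{L^1(\Lambda)}=1$, which makes convolution by $w^\epsilon$ harmless in every multilinear estimate via Young's inequality.

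First, I would invoke the standard $X^{s,b}_T$ local well-posedness theory simultaneously for \eqref{NLS} and \eqref{NLS_epsilon} to produce uniform a priori bounds
\begin{equation*}
\|u\|_{X^{s,b}_{T_0}} + \|u^\epsilon\|_{X^{s,b}_{T_0}} \;\leq\; C\pb{\|\phi_0\|_{H^s}}
\end{equation*}
on an interval of length $T_0>0$ depending only on $\|\phi_0\|_{H^s}$. The uniformity in $\epsilon$ is the crucial point: the trilinear estimate that closes the fixed point for $|u|^2u$ also closes it for $(w^\epsilon *|u^\epsilon|^2)u^\epsilon$ with an $\epsilon$-independent constant, since Young's inequality gives $\|w^\epsilon * f\|_{L^p} \leq \|f\|_{L^p}$ for every $p \in [1,\infty]$.

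Next, adding and subtracting $(w^\epsilon * |u|^2) u$, I decompose the forcing in the equation for $v^\epsilon$ as $R^\epsilon + N^\epsilon[v^\epsilon]$, where
\begin{equation*}
R^\epsilon \;\deq\; \pb{|u|^2 - w^\epsilon * |u|^2}\, u
\end{equation*}
depends only on $u$ and $w^\epsilon$, while $N^\epsilon[v^\epsilon]$ is a sum of trilinear expressions each containing exactly one factor of $v^\epsilon$, two factors from $\{u,\bar u,u^\epsilon,\bar u^\epsilon\}$ and possibly a convolution by $w^\epsilon$. The trilinear $X^{s,b-1}$ estimate (applied term by term, with the convolution absorbed by Young) yields $\|N^\epsilon[v^\epsilon]\|_{X^{s,b-1}_T} \leq C T^\theta \pb{\|u\|_{X^{s,b}_T}^2 + \|u^\epsilon\|_{X^{s,b}_T}^2} \|v^\epsilon\|_{X^{s,b}_T}$ for some $\theta>0$. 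Since $v^\epsilon|_{t=0}=0$, the standard $X^{s,b}$ energy inequality combined with Duhamel gives
\begin{equation*}
\|v^\epsilon\|_{X^{s,b}_T} \;\leq\; C\,\|R^\epsilon\|_{X^{s,b-1}_T} + C T^\theta \pb{\|u\|_{X^{s,b}_T}^2 + \|u^\epsilon\|_{X^{s,b}_T}^2}\|v^\epsilon\|_{X^{s,b}_T},
\end{equation*}
and taking $T \leq T_0$ small (depending only on $\|\phi_0\|_{H^s}$) absorbs the last term to produce $\|v^\epsilon\|_{X^{s,b}_T} \leq C\,\|R^\epsilon\|_{X^{s,b-1}_T}$. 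Since $X^{s,b}_T \hookrightarrow C_t\fra H$, this already controls the difference in the norm appearing in \eqref{L2 convergence claim} by $\|R^\epsilon\|_{X^{s,b-1}_T}$.

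The step I expect to be the main obstacle is showing $\|R^\epsilon\|_{X^{s,b-1}_T}\to 0$ as $\epsilon\to 0$. The difficulty is that $X^{s,b-1}$ is a dispersive, not a Sobolev, norm, and $w^\epsilon$ is merely an $L^1$-bounded approximate identity, not a smooth mollifier in a Sobolev sense. I would handle this by dualizing: $X^{s,b-1}$ embeds via (dual) Strichartz into appropriate mixed Lebesgue spaces, so the problem reduces to showing $\|(I - w^\epsilon *) |u|^2 \cdot u\|_{L^{p'}_t L^{q'}_x}\to 0$, which in turn follows from the elementary approximate-identity statement $w^\epsilon * g \to g$ in $L^{p}_t L^{q}_x$ applied to $g=|u|^2$, combined with the uniform domination $\|w^\epsilon * g\|_{L^p L^q}\leq \|g\|_{L^p L^q}$ and the dominated convergence theorem in time. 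Finally, to pass from the small-time convergence to convergence on $[-T,T]$, I would iterate the short-time argument a finite number of times, restarting at $t=T_1$ with the new data pair $(u(T_1), u^\epsilon(T_1))$; the iteration is admissible because both flows conserve mass and because persistence of regularity bounds $\|u(t)\|_{H^s}$ and $\|u^\epsilon(t)\|_{H^s}$ on $[-T,T]$ uniformly in $\epsilon$, so the local time of existence and the constant produced at each step are independent of $\epsilon$, and the total error tends to zero as $\epsilon\to 0$.
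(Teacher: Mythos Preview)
Your overall architecture matches the paper's proof: both use the $X^{\sigma,b}$ framework on $\mathbb{T}$, establish uniform-in-$\epsilon$ local well-posedness (the paper via $|\widehat{w^\epsilon}|\leq C$, you via Young's inequality), write a Duhamel equation for the difference, split the forcing into a commutator term $R^\epsilon=(|u|^2-w^\epsilon*|u|^2)u$ and genuinely trilinear terms containing one copy of the difference, close a contraction on a short interval whose length depends only on the data, and iterate. So the strategy is correct and essentially the same.

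Two points of contrast are worth noting. First, the paper estimates the difference in $X^{0,b}$, not in $X^{s,b}$; the $H^s$ regularity of $u,u^\epsilon$ enters only through the Sobolev embedding $H^s\hookrightarrow L^8$ (this is exactly where $s\geq 3/8$ is used) to place two factors in $L^8_{t,x}$ and the difference in $L^2_{t,x}$. Your insistence on $X^{s,b}$ throughout creates a wrinkle at the $R^\epsilon$ step: your claim that ``$X^{s,b-1}$ embeds via (dual) Strichartz into appropriate mixed Lebesgue spaces'' is not correct for $s>0$, since one inherits $s$ spatial derivatives on the Lebesgue side, and your approximate-identity argument as written does not handle $\langle D_x\rangle^s R^\epsilon$. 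The easy fix is to do exactly what the paper does: drop to $X^{0,b}$ for the difference (this suffices since the target is $L^\infty_t L^2_x$), so that dual Strichartz lands you in $L^{4/3}_{t,x}$ with no derivatives, and then your soft argument ($w^\epsilon*|u|^2\to|u|^2$ in $L^2_{t,x}$ by the approximate-identity property and dominated convergence) goes through. Second, the paper's treatment of $R^\epsilon$ is quantitative rather than soft: it writes $|u(x)|^2-(w^\epsilon*|u|^2)(x)$ as an integral involving $|u(x)-u(y)|$, inserts a factor $[x-y]^{s+1/2}$, and uses the Sobolev--Slobodeckij characterisation of $\dot H^s$ together with $\|[z]^{s+1/2}w^\epsilon(z)\|_{L^p_z}\leq C\epsilon^{s+1/p-1/2}$ to extract an explicit rate $\epsilon^{s-1/4}$. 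Your approximate-identity route avoids this machinery at the cost of losing the rate, which is fine for the stated conclusion.
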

In order to prove Proposition \ref{L2 convergence} we need to recall several tools from harmonic analysis. In particular, it is helpful to use periodic Strichartz estimates formulated in $X^{\sigma,b}$ spaces. In the context of dispersive PDEs, these spaces were first used in \cite{Bourgain_1993}.
\begin{definition}
\label{X^{sigma,b}}
Given $f: \Lambda \times \R \rightarrow \C$ and $\sigma,b \in \R$ we define 
\begin{equation*}
\|f\|_{X^{\sigma,b}} \;=\; \Big\|\big(1+|2\pi k|\big)^{\sigma}\,\big(1+|\eta+2\pi k^2|\big)^b\, \tilde{f}\Big\|_{L^2_{\eta} l^2_k}\,,
\end{equation*}
where 
\begin{equation*}
\tilde{f}(k,\eta) \;\deq\; \int_{-\infty}^{\infty} \dd t\,  \int_\Lambda \dd x \, f(x,t) \,\ee^{-2\pi \ii k x - 2\pi \ii \eta t }\,
\end{equation*}
denotes the \emph{spacetime Fourier transform}.
\end{definition}
Note that, in particular, we have
\begin{equation*}
\|f\|_{X^{\sigma,b}}\;\sim\;\|\ee^{-\ii t \Delta}f\|_{H^b_t H^{\sigma}_x}\,.
\end{equation*}
Here we use the convention\footnote{We do not introduce additional factors of $2\pi$ in the definition of $\|h\|_{H^b_t}$ for simplicity of notation in the sequel.} that for $h: \mathbb{R} \rightarrow \mathbb{C}$
\begin{equation*}
\|h\|_{H^b_t} \;\deq\; \bigg(\int \dd \eta\, (1+|\eta|)^{2b} |\hat{h}(\eta)|^2\bigg)^{1/2}\,.
\end{equation*}
We now collect several known facts about $X^{\sigma,b}$ spaces. For a more detailed discussion we refer the reader to \cite{Tao}[Section 2.6] and the references therein. For the remainder of this section we fix
\begin{equation}
\label{definition of b}
b \;\deq\;\frac{1}{2}+\nu\,,
\end{equation}
for $\nu>0$ small.

\begin{lemma}
\label{Xsb space lemma}
Let $\sigma \in \R$ and $b$ as in \eqref{definition of b} be given. The following properties hold.
\begin{itemize}
\item[(i)]
$\|f\|_{L^\infty_t H^\sigma_x} \;\leq\; C(b)\,\|f\|_{X^{\sigma,b}}$.
\item[(ii)] Suppose that $\psi \in C_c^{\infty}(\R)$. Then, for all $\delta \in (0,1)$ and $\Phi \in H^\sigma$ we have
\begin{equation*}
\big\|\psi(t/\delta)\,\ee^{\ii t \Delta}\Phi \big\|_{X^{\sigma,b}} \;\leq\;C(b,\psi)\,\delta^{\frac{1-2b}{2}}\|\Phi\|_{H^\sigma}\,.
\end{equation*}
\item[(iii)] Let $\psi, \delta$ be as in (ii). Then, for all $f \in X^{\sigma,b}$ we have
\begin{equation*}
\big\|\psi(t/\delta)f\big\|_{X^{\sigma,b}} \;\leq\;C(b,\psi)\,\delta^{\frac{1-2b}{2}}\,\|f\|_{X^{\sigma,b}}\,.
\end{equation*}
\item[(iv)]
With the same assumptions as in (iii) we have
\begin{equation*}
\bigg\|\psi(t/\delta)\,\int_0^t \dd t'\,\ee^{\ii (t-t') \Delta}\,f(t')\bigg\|_{X^{\sigma,b}} \;\leq\;C(b,\psi)\,\delta^{\frac{1-2b}{2}}\,\|f\|_{X^{\sigma,b-1}}\,.
\end{equation*}
\item[(v)] $\|f\|_{L^4_{t,x}} \;\leq\;C\,\|f\|_{X^{0,3/8}}$.
\item[(vi)] $\|f\|_{X^{0,-3/8}} \;\leq\; C\,\|f\|_{L^{4/3}_{t,x}}$.
\item[(vii)] Let $\psi, \delta$ be as in (ii). Then, for all $f \in X^{0,b}$ we have
\begin{equation*}
\big\|\psi(t/\delta)f\big\|_{L^4_{t,x}} \;\leq\;C(b,\psi)\,\delta^{\theta_0}\,\|f\|_{X^{0,b}}\,,
\end{equation*}
for some $\theta_0 \equiv \theta_0(b)>0$.
\end{itemize}
\end{lemma}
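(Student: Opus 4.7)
The plan is to exploit the identification $\|f\|_{X^{\sigma,b}} \sim \|\ee^{-\ii t \Delta} f\|_{H^b_t H^\sigma_x}$ together with the fact that $\ee^{-\ii t \Delta}$ is an isometry on $H^\sigma_x$, which reduces (i)--(iv) to standard one-variable Sobolev and multiplier estimates in the time variable. Items (v)--(vii) require Bourgain's periodic $L^4$-Strichartz inequality on $\mathbb{T}^1$, which is the only genuinely nontrivial ingredient. All of this material is classical and we would largely follow \cite{Tao}[Section 2.6].

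For (i), I would apply Minkowski in $x$ and the one-dimensional Sobolev embedding $H^b_t \hookrightarrow L^\infty_t$, valid since $b > 1/2$, to the Fourier modes of $\ee^{-\ii t \Delta} f$ in $x$. For (ii), conjugation by $\ee^{\ii t \Delta}$ kills the dispersion and reduces matters to the scalar estimate $\|\psi(t/\delta)\|_{H^b_t} \lesssim \delta^{(1-2b)/2}$, which is a direct Fourier-side calculation using $\widehat{\psi(\cdot/\delta)}(\eta) = \delta\,\hat\psi(\delta\eta)$ and $2b > 1$. For (iii), the standard path is to interpolate between the trivial endpoint $b' = 0$ (where multiplication by $\psi$ is bounded on $L^2$) and $b' = 1$ (where the time derivative hitting $\psi(t/\delta)$ reduces to the previous case with the correct scaling). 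For (iv), conjugate again by the free evolution and reduce to the one-variable bound $H^{b-1}_t \to H^b_t$ for the indefinite integration operator composed with the time cutoff.

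Items (v) and (vi) are, respectively, Bourgain's classical $L^4$-Strichartz estimate on $\mathbb{T}^1$ and its dual formulation; the proof of (v) rests on counting representations of integers as sums of two squares, which is the deepest ingredient of the lemma, though for our purposes we simply quote it. Item (vii) is obtained by combining Strichartz with time localization at an intermediate regularity: first apply (v) to $\psi(t/\delta) f$ to estimate the $L^4_{t,x}$ norm by $\|\psi(t/\delta) f\|_{X^{0,3/8}}$, and then localize. For the latter step, the point is that $\|\psi(t/\delta) f\|_{X^{0,0}} = \|\psi(t/\delta) f\|_{L^2_{t,x}} \lesssim \delta^{1/2} \|f\|_{L^\infty_t L^2_x} \lesssim \delta^{1/2} \|f\|_{X^{0,b}}$ by (i), while $\|\psi(t/\delta) f\|_{X^{0,b}} \lesssim \|f\|_{X^{0,b}}$ is trivial; interpolating at the intermediate exponent $3/8$ produces the positive gain $\theta_0 = \tfrac{1}{2}\bigl(1 - \tfrac{3}{8b}\bigr) > 0$.

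The main subtlety lies in item (vii). A direct application of (iii) at $b > 1/2$ gives only a negative power of $\delta$, so to obtain a genuine gain one must route the estimate through an intermediate exponent strictly below $1/2$, which is precisely what the Strichartz regularity $3/8$ provides. Apart from this and the invocation of Bourgain's $L^4$-Strichartz inequality, the entire lemma is routine $X^{\sigma,b}$ bookkeeping, and I would present it as such, referring the reader to \cite{Tao}[Section 2.6] for the details.
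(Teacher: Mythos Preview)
Your plan is essentially the paper's: parts (i)--(iv) are handled by conjugating out the free evolution and reducing to one-variable $H^b_t$ estimates, (v) is quoted from Bourgain, (vi) is duality, and (vii) is obtained by routing through $X^{0,3/8}$ and interpolating. For (iii) the paper argues directly via the fractional Leibniz rule of Kenig--Ponce--Vega rather than by interpolating between integer endpoints; your description of the interpolation is a bit thin (naive interpolation between $b'=0$ with constant bound and $b'=1$ with bound $\delta^{-1/2}$ yields only $\delta^{-b/2}$, not the sharper $\delta^{(1-2b)/2}$ in the statement), but the sharp form is in any case standard.

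There is a concrete slip in your argument for (vii). The claim that $\|\psi(t/\delta)f\|_{X^{0,b}}\lesssim\|f\|_{X^{0,b}}$ with constant independent of $\delta$ is not trivial; it is false for $b>1/2$. Take $f=\chi(t)\,\ee^{\ii t\Delta}\Phi$ with $\chi\in C_c^\infty$ identically $1$ near the origin: then $\|f\|_{X^{0,b}}\sim\|\Phi\|_{L^2}$, while for small $\delta$ one has $\psi(t/\delta)f=\psi(t/\delta)\,\ee^{\ii t\Delta}\Phi$, so that $\|\psi(t/\delta)f\|_{X^{0,b}}\sim\|\psi(\cdot/\delta)\|_{H^b_t}\|\Phi\|_{L^2}\sim\delta^{(1-2b)/2}\|\Phi\|_{L^2}\to\infty$. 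You in fact acknowledge this loss in the very next paragraph, so the ``trivial'' remark is inconsistent with your own observation. The repair is immediate: use (iii) at the $X^{0,b}$ endpoint and accept the factor $\delta^{(1-2b)/2}$. Interpolating this against your (correct, and sharper than the paper's) bound $\|\psi(t/\delta)f\|_{X^{0,0}}\lesssim\delta^{1/2}\|f\|_{X^{0,b}}$ at $\theta=1-\tfrac{3}{8b}$ gives
\[
\theta_0=\tfrac{\theta}{2}+\tfrac{(1-\theta)(1-2b)}{2}=\tfrac{1}{8}>0,
\]
so (vii) still goes through, just with a corrected value of $\theta_0$. This is exactly what the paper does, except that it uses $\delta^{1/4}$ rather than $\delta^{1/2}$ at the $X^{0,0}$ endpoint.
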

For completeness we present a self-contained proof of Lemma \ref{Xsb space lemma} in Appendix \ref{Xsb space appendix}. 

We also recall the following characterization of homogeneous Sobolev spaces on the torus.
\begin{lemma}
\label{Sobolev space torus}
For $\sigma \in (0,1)$ we have
\begin{equation}
\label{Sobolev space torus norm}
 \Bigg\|\frac{f(x)-f(y)}{[x-y]^{\sigma+\frac{1}{2}}}\Bigg\|_{L^2_{x,y}} \;\sim\; \|f\|_{\dot{H}^\sigma}\,.
\end{equation}
Here $\|f\|_{\dot{H}^\sigma} = \||\nabla|^{\sigma}f\|_{L^2}$ denotes the homogeneous $L^2$-based Sobolev (semi)norm of order $\sigma$.
\end{lemma}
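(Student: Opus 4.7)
The argument is a direct Fourier-space computation; the torus is locally Euclidean, so the structure mirrors the familiar $\R$-proof of the Gagliardo seminorm characterisation, with an easy rescaling replacing the exact dilation invariance. Throughout I interpret $[x-y]^{\sigma+1/2}$ inside the $L^2$ norm as $|[x-y]|^{\sigma+1/2}$, so the squared left-hand side of \eqref{Sobolev space torus norm} is $\iint_{\Lambda \times \Lambda} |f(x)-f(y)|^2 \, |[x-y]|^{-2\sigma-1} \, \dd x \, \dd y$.

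First I would change variables, setting $u \deq [x-y]$ and $v \deq y$. Translation invariance of $\Lambda = \R / \Z$ together with the periodicity of $f$ recasts the left-hand side of \eqref{Sobolev space torus norm} as
\begin{equation*}
\int_{\Lambda} \frac{\dd u}{|u|^{2\sigma+1}} \int_{\Lambda} |f(v+u) - f(v)|^2 \,\dd v\,.
\end{equation*}
Next, expanding $f(v) = \sum_{k \in \Z} \hat f(k) \, \ee^{2\pi \ii k v}$ and applying Parseval in $v$ gives
\begin{equation*}
\int_{\Lambda} |f(v+u) - f(v)|^2 \,\dd v \;=\; \sum_{k \in \Z} |\hat f(k)|^2 \, \big|\ee^{2\pi \ii k u} - 1\big|^2 \;=\; 4 \sum_{k \in \Z} |\hat f(k)|^2 \, \sin^2(\pi k u)\,.
\end{equation*}
Exchanging sum and integral by Tonelli's theorem, the claim reduces to showing
\begin{equation*}
I_\sigma(k) \;\deq\; \int_{-1/2}^{1/2} \frac{\sin^2(\pi k u)}{|u|^{2\sigma+1}} \, \dd u \;\sim\; |k|^{2\sigma}
\end{equation*}
uniformly for $k \in \Z \setminus \{0\}$, with constants depending only on $\sigma \in (0,1)$; the $k=0$ term vanishes on both sides.

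Finally, I would verify this one-variable estimate by the substitution $w = k u$, giving
\begin{equation*}
I_\sigma(k) \;=\; |k|^{2\sigma} \int_{-|k|/2}^{|k|/2} \frac{\sin^2(\pi w)}{|w|^{2\sigma+1}} \, \dd w\,.
\end{equation*}
Using $\sin^2(\pi w) \lesssim \min\{w^2, 1\}$, the integrand on the right is integrable at $0$ (since $\sigma < 1$) and at infinity (since $\sigma > 0$), so the integral is bounded above by the finite constant $\int_\R \sin^2(\pi w) \, |w|^{-2\sigma-1} \, \dd w$. It is also bounded below for $|k| \geq 1$ by the positive constant $\int_{-1/2}^{1/2} \sin^2(\pi w) \, |w|^{-2\sigma-1} \, \dd w$, since the domain of integration grows with $|k|$ and the integrand is nonnegative. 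Combining these two-sided bounds with the Parseval identity $\|f\|_{\dot H^\sigma}^2 \sim \sum_{k \in \Z} |k|^{2\sigma} |\hat f(k)|^2$ yields \eqref{Sobolev space torus norm}.

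The main obstacle is essentially bookkeeping rather than genuine analytic difficulty: one must correctly handle the identification of $\Lambda$ with $[-1/2, 1/2)$ via $[\cdot]$ when changing variables, and verify that the rescaled integral is bounded above and below by positive constants uniformly in $|k| \geq 1$. The restrictions $\sigma > 0$ and $\sigma < 1$ enter precisely in the convergence of the scalar integral at infinity and at the origin, respectively.
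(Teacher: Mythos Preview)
Your proof is correct and follows precisely the Plancherel-theorem approach that the paper cites (it defers the argument to \cite[Proposition 1.3]{Benyi_Oh} rather than giving its own proof). The change of variables, the Fourier expansion in $v$, and the rescaling $w = ku$ to reduce to a single scalar integral with the two-sided bound are exactly the standard steps, and you have correctly identified where the hypotheses $\sigma > 0$ and $\sigma < 1$ enter.
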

The quantity on the left-hand side of \eqref{Sobolev space torus norm} is the periodic analogue of the \emph{Sobolev-Slobodeckij norm}.
This is a general fact. A self-contained proof using the Plancherel theorem can be found in \cite[Proposition 1.3]{Benyi_Oh}. We now have all the tools to prove Proposition \ref{L2 convergence}.

\begin{proof}[Proof of Proposition \ref{L2 convergence}]
We note that, in the proof, we can formally take $\kappa=0$ for simplicity of notation.
Indeed, if we let $\tilde{u} \deq \ee^{\ii \kappa t}\,u$, then $\tilde{u}$ solves \eqref{NLS} with $\kappa=0$. Likewise $\tilde{u}^{\epsilon} \deq \ee^{\ii \kappa t}\,u^{\epsilon}$ solves \eqref{NLS_epsilon} with $\kappa=0$. 
Finally, we note that \eqref{L2 convergence claim} is equivalent to showing that for all $T>0$ we have
\begin{equation*}
\lim_{\epsilon \to 0} \|\tilde{u}^{\epsilon}-\tilde{u}\|_{L^{\infty}_{[-T,T]}\fra H} \;=\; 0\,.
\end{equation*}
Throughout the proof, we fix $T>0$ and consider $|t| \leq T$. In what follows, we assume $t \geq 0$. The negative times are treated by an analogous argument.

Before we proceed, we briefly recall the arguments from \cite[Section 2.6]{Soh} (which, in turn, are based on the arguments from \cite{B}) used to construct the local in time solutions to \eqref{NLS} and \eqref{NLS_epsilon} in $H^s$.  Note that in \cite{Soh}, the quintic NLS was considered. The arguments for the cubic NLS are analogous. In what follows, we outline the main idea and refer the interested reader to the aforementioned reference for more details.

We are looking for \emph{global mild solutions} $u$ to \eqref{NLS}--\eqref{NLS_epsilon}, i.e.\ we want $u$ and $u^{\epsilon}$ to solve 
\begin{align}
\label{Mild solution 1}
u(\cdot,t)&\;=\;
\ee^{\ii t \Delta} \phi_0-\ii \,\int_{0}^{t} \dd t' \, \ee^{\ii (t-t')\Delta} \,|u|^2u(t') 
\\
\label{Mild solution 2}
u^{\epsilon}(\cdot,t)&\;=\;
\ee^{\ii t \Delta} \phi_0-\ii \,\int_{0}^{t} \dd t' \, \ee^{\ii (t-t')\Delta} \,\big(w^\epsilon*|u^{\epsilon}|^2\big)\,u^{\epsilon}(t') 
\end{align}
for almost every $t$.
In what follows, we construct solutions of \eqref{Mild solution 1}-\eqref{Mild solution 2} by constructing mild solutions on a sequence of intervals of fixed length depending on the initial data. Putting these solutions together, we get $u$ and $u^{\epsilon}$.

Let $\chi,\psi \in C_c^\infty(\R)$ be functions such that
\begin{equation}
\label{function chi}
\chi(t)\;=\;
\begin{cases}
1 &\mbox{if }|t| \;\leq\;1\\
0 &\mbox{if }|t| \;>\;2
\end{cases}
\end{equation}
and 
\begin{equation}
\label{function psi}
\psi(t)\;=\;
\begin{cases}
1 &\mbox{if }|t| \;\leq\;2\\
0 &\mbox{if }|t| \;>\;4\,.
\end{cases}
\end{equation}
Given $\delta \in (0,1)$, we define
\begin{equation}
\label{chi_psi_rescaled}
\chi_\delta(t) \;\deq\; \chi\bigg(\frac{t}{\delta}\bigg)\,,\quad \psi_\delta(t) \;\deq\; \psi\bigg(\frac{t}{\delta}\bigg)\,.
\end{equation}
Let us fix $\delta \in (0,1)$ small which we determine later. For $t \in [0,T]$ we consider the map
\begin{multline}
\label{map L}
(Lv)(\cdot,t) \;\deq\; \chi_\delta(t)\, \ee^{\ii t \Delta} \phi_0-\ii \,\chi_\delta(t)\,\int_{0}^{t} \dd t' \, \ee^{\ii (t-t')\Delta} \,|v|^2v(t') 
\\
\;=\;
 \chi_\delta(t)\, \ee^{\ii t \Delta} \phi_0-\ii \,\chi_\delta(t)\,\int_{0}^{t} \dd t' \, \ee^{\ii (t-t')\Delta} \,|v_\delta |^2v_\delta(t')\,,
\end{multline}
where we define the operation
\begin{equation}
\label{v_delta}
v_\delta (x,t) \;\deq\; \psi_\delta(t) \, v(x,t)\,.
\end{equation}
In the last equality in \eqref{map L}, we used \eqref{function chi}--\eqref{chi_psi_rescaled}.
Applying Lemma \ref{Xsb space lemma} (ii)-(vii) and arguing as in the proof of \cite[(2.159)]{Soh} it follows that 
\begin{align}
\label{Lv Xsb}
\|Lv\|_{X^{s,b}} \;&\leq\;C_1\delta^{\frac{1-2b}{2}}\,\|\phi_0\|_{H^s}+C \,\delta^{r_0} \,\|v\|_{X^{0,b}}^2\,\|v\|_{X^{s,b}}
\\
\label{Lv X0b}
\|Lv\|_{X^{0,b}} \;&\leq\;C_1\delta^{\frac{1-2b}{2}}\,\|\phi_0\|_{\fra H}+C \,\delta^{r_0} \,\|v\|_{X^{0,b}}^3\,,
\end{align}
where $C_1>0$ is the constant from Lemma \ref{Xsb space lemma} (ii) corresponding to the cutoff in time given by $\chi_\delta$ and 
\begin{equation}
\label{r_0 definition}
r_0\;\deq\;\frac{1-2b}{2}+3\theta_0\;>\;0\,,
\end{equation}
for $\theta_0>0$ given by \eqref{theta_0 definition} below \footnote{Note that the exact value of $r_0$ is not relevant. The main point is that it is positive. This is ensured by taking $b$ sufficiently close to $1/2$.}.
Note that, from Lemma \ref{Xsb space lemma} (ii) we know that $C_1=C_1(\chi)$.

For clarity, we summarize the ideas of the proof of \eqref{Lv X0b}. The proof of \eqref{Lv Xsb} follows similarly using a duality argument and by applying the fractional Leibniz rule. The latter is rigorously justified by observing that the $X^{\sigma,b}$ norms are invariant under taking absolute values in the spacetime Fourier transform. 
For precise details on the latter point, we refer the reader to \cite[(2.147)--(2.153)]{Soh}.

The estimate for the linear term in \eqref{Lv X0b}  follows immediately from Lemma \ref{Xsb space lemma} (ii) with $\sigma=0$.
Note that, when we apply Lemma \ref{Xsb space lemma} (iv) with $\sigma=0$ for the Duhamel term on the right-hand side of \eqref{map L} we have $b-1<\frac{3}{8}$ and hence we can use Lemma \ref{Xsb space lemma} (vi), H\"{o}lder's inequality and Lemma \ref{Xsb space lemma} (v) to deduce that we have to estimate 
\begin{equation*}
\big\||v_\delta|^2\,v_\delta\big\|_{L^{4/3}_{t,x}} \;\leq\; \|v_\delta\|_{L^{4}_{t,x}}^3
\;\leq \;C\, \|v_\delta\|_{X^{0,3/8}}^3\,.
\end{equation*}
We then deduce the estimate for the Duhamel term in \eqref{Lv X0b} by using Lemma \ref{Lv X0b} (vii).

Analogously, with $r_0>0$ as in \eqref{r_0 definition}, we have
\begin{equation}
\label{X^{0,b} difference}
\|Lv_1-Lv_2\|_{X^{0,b}} \;\leq\;C \,\delta^{r_0} \,\big(\|v_1\|_{X^{0,b}}^2+\|v_2\|_{X^{0,b}}^2\big)\,\|v_1-v_2\|_{X^{0,b}}\,.
\end{equation}
In particular, it follows from \eqref{Lv Xsb}-\eqref{X^{0,b} difference} that $L$ is a contraction on $(\Gamma,\|\cdot\|_{X^{0,b}})$, for 
\begin{equation}
\label{set Gamma}
\Gamma \;\deq\; \Big\{v,\|v\|_{X^{s,b}} \;\leq\;2C_1\delta^{\frac{1-2b}{2}}\,\|\phi_0\|_{H^s}\,,\|v\|_{X^{0,b}} \;\leq\;2C_1\delta^{\frac{1-2b}{2}}\,\|\phi_0\|_{\fra H}\Big\}\,,
\end{equation}
where $\delta \in (0,1)$ is chosen to be \emph{sufficiently small depending on} $\|\phi_0\|_{\fra H}$. By arguing as in the proof of \cite[Proposition 2.3.2]{Soh} (whose proof, in turn, is based on that of \cite[Theorem 1.2.5]{Cazenave}), it follows that $(\Gamma,\|\cdot\|_{X^{0,b}})$ is a Banach space. Therefore, we obtain a unique fixed point of $L$ in $\Gamma$. We refer the reader to \cite[Section 2.5]{Soh} for more details.

Moreover, suppose that for some other $\tilde{\delta}>0$ the function $\tilde{v} \in X^{0,b}$ solves
\begin{equation*}
\tilde{v} \;=\; \chi_{\tilde{\delta}}(t)\, \ee^{\ii t \Delta} \phi_0-\ii \,\chi_{\tilde{\delta}}(t)\,\int_{0}^{t} \dd t' \, \ee^{\ii (t-t')\Delta} \,|\tilde{v}|^2\tilde{v}(t')\,.
\end{equation*}
We then want to argue that 
\begin{equation}
\label{uniqueness}
v\big|_{\Lambda \times [0,\hat{\delta}]}\;=\;\tilde{v}\big|_{\Lambda \times [0,\hat{\delta}]}\quad \mbox {for all} \quad \hat{\delta} \in \big[0,\min\{\delta,\tilde{\delta}\}\big]\,.
\end{equation}
In order to prove \eqref{uniqueness}, we need to work in \emph{local} $X^{\sigma,b}$ \emph{spaces}. Given $\sigma \in \R$ and a time interval $I$, we define
\begin{equation*}
\|f\|_{X^{\sigma,b}_I} \;\deq\; \inf \Big\{\|g\|_{X^{\sigma,b}}\,,\,g\big|_{\Lambda \times I}\;=\;f\big|_{\Lambda \times I} \Big\}\,.
\end{equation*}
In particular, we have that $v,\tilde{v} \in X^{0,b}_{[0,\hat{\delta}]}$.
Noting that for $t \in [0,\hat{\delta}]$ we have $\chi_{\delta}(t)=\chi_{\tilde{\delta}}(t)$, the same arguments used to show \eqref{X^{0,b} difference} imply that
\begin{equation*}
\|v-\tilde{v}\|_{X^{0,b}_{[0,\hat{\delta}]}} \;\leq\; C\,\hat{\delta}^{\,r_0} \, \Big(\|v\|_{X^{0,b}_{[0,\hat{\delta}]}}+\|\tilde{v}\|_{X^{0,b}_{[0,\hat{\delta}]}}\Big)^2\,\|v-\tilde{v}\|_{X^{0,b}_{[0,\hat{\delta}]}}
\;\leq\; C\,\hat{\delta}^{\,r_0} \, \Big(\|v\|_{X^{0,b}}+\|\tilde{v}\|_{X^{0,b}}\Big)^2\,\|v-\tilde{v}\|_{X^{0,b}_{[0,\hat{\delta}]}}
\,.
\end{equation*}
Here $r_0>0$ is given by \eqref{r_0 definition}. 
We hence deduce \eqref{uniqueness} for sufficiently small $\hat{\delta}$. By an additional iteration argument, we deduce \eqref{uniqueness} for the full range $\hat{\delta} \in \big[0,\min\{\delta,\tilde{\delta}\}\big]$.

Likewise, for $\epsilon>0$, we consider the map
\begin{equation}
\label{map L_epsilon}
(L^{\epsilon}v)(\cdot,t) \;\deq\; \chi_\delta(t)\, \ee^{\ii t \Delta} \phi_0-\ii \,\chi_\delta(t)\,\int_{0}^{t} \dd t' \, \ee^{\ii (t-t')\Delta} \,\big(w^\epsilon*|v_\delta|^2\big)v_\delta(t')\,,
\end{equation}
for $v_\delta$ given as in \eqref{v_delta}.
We note that, for all $k \in \N$, we have 
\begin{equation*}
\widehat{w^\epsilon}(k)= \int_{-\frac{1}{2}}^{\frac{1}{2}} \frac{1}{\epsilon} \,w\bigg(\frac{x}{\epsilon}\bigg) \,\ee^{-2\pi \ii k x} \,\dd x
= \int_{-\frac{1}{2\epsilon}}^{\frac{1}{2\epsilon}} w(y) \,\ee^{-2\pi \ii \epsilon k y} \,\dd y\,.
\end{equation*}
Therefore, by the assumptions on $w$, it follows that 
\begin{equation}
\label{w_epsilon^hat}
|\widehat{w^\epsilon}(k)| \;\leq\;C
\end{equation}
for some $C>0$ independent of $k,\epsilon$. 
Using the same arguments as for $L$ and applying \eqref{w_epsilon^hat}, it follows that $L^{\epsilon}$ defined in \eqref{map L_epsilon} has a unique fixed point $v^\epsilon  \in \Gamma$. Moreover, a statement analogous to \eqref{uniqueness} holds.

We then define $u,u^{\epsilon}$ on $[0,\delta]$ according to 
\begin{equation}
\label{Definition of u 1}
u\big|_{\Lambda \times [0,\delta]} \;\deq\; v \big|_{\Lambda \times [0,\delta]}\,,\quad u^{\epsilon}\big|_{\Lambda \times [0,\delta]} \;\deq\; v^\epsilon \big|_{\Lambda \times [0,\delta]}\,.
\end{equation}

We now iterate this construction. In doing so, we note that the increment $\delta \in (0,1)$ we chose above \emph{depends only on the $L^2$ norm of the initial data} and hence is the same at every step of the iteration. More precisely, for all $n \in \N$ with $n \leq (T-1)/\delta$, we construct $v_{(n)}$, and $v_{(n)}^{\epsilon}$ such that the following properties hold.
\begin{itemize}
\item[(i)] $v_{(n)}$ is a mild solution of the local NLS \eqref{localNLS1} with $\kappa=0$ in the sense of \eqref{Mild solution 1} on the time interval $\big[n\delta,(n+1)\delta\big]$ and we have
\begin{equation}
\label{v^n Xsb}
\|v_{(n)}\|_{X^{s,b}} \;\leq\; 2C_1\, \delta^{\frac{1-2b}{2}}\, \|v_{(n)}(n\delta)\|_{H^s}\,.
\end{equation}
\item[(ii)] $v_{(n)}^\epsilon$ is a mild solution of \eqref{NLS_epsilon} with $\kappa=0$ in the sense of \eqref{Mild solution 2} on the time interval $\big[n\delta,(n+1)\delta\big]$ and we have
\begin{equation}
\label{v^n_epsilon Xsb}
\|v_{(n)}^\epsilon\|_{X^{s,b}} \;\leq\; 2C_1\,\delta^{\frac{1-2b}{2}}\, \|v_{(n)}^\epsilon(n\delta)\|_{H^s}\,.
\end{equation}
\end{itemize}
We then generalize \eqref{Definition of u 1} by defining $u,u^{\epsilon}$ on $[n\delta,(n+1)\delta]$ according to 
\begin{equation}
\label{Definition of u}
u\big|_{\Lambda \times [n\delta,(n+1)\delta]} \;\deq\; v_{(n)} \big|_{\Lambda \times [n\delta,(n+1)\delta]}\,,\quad u^{\epsilon}\big|_{\Lambda \times [n\delta,(n+1)\delta]} \;\deq\; v_{(n)}^{\epsilon}\big|_{\Lambda \times [n\delta,(n+1)\delta]}\,.
\end{equation}
Note that, in this definition, $v_{(0)}=v$ and $v_{(0)}^{\epsilon}=v^\epsilon $.

We observe that by \eqref{uniqueness} and the analogous uniqueness statement for $L^\epsilon$, this construction does not depend on $\delta$ (as long as $\delta$ is chosen to be small enough, depending in $\|\phi_0\|_{\fra H}$, c.f.\  \eqref{Lv Xsb}--\eqref{set Gamma}). In particular, we can choose $\delta=\delta_0(\|\phi_0\|_{\fra H})$.
By \eqref{Definition of u}, Lemma \ref{Xsb space lemma} (i), \eqref{v^n Xsb}--\eqref{v^n_epsilon Xsb}, it follows that 
\begin{equation}
\label{H^s bound}
\begin{cases}
\|u\|_{L^\infty_{[0,T]} H^s_x}\;\leq\; C(\|\phi_0\|_{H^s},T)\\
\|u^{\epsilon}\|_{L^\infty_{[0,T]} H^s_x} \;\leq\; C(\|\phi_0\|_{H^s},T)\,,
\end{cases}
\end{equation}
for some finite quantity $C(\|\phi_0\|_{H^s},T)>0$. It is important to note that $C(\|\phi_0\|_{H^s},T)$ is independent of $\delta$. In other words, if we choose $\delta$ smaller, then the same bounds in \eqref{H^s bound} hold.

Using \eqref{Definition of u 1} and Lemma \ref{Xsb space lemma} (i), it follows that, for $\delta \in (0,1)$ chosen sufficiently small as earlier, we have
\begin{equation*}
\|u-u^{\epsilon}\|_{L^\infty_{[0,\delta]}L^2_x} \;=\;\|v-v^\epsilon \|_{L^\infty_{[0,\delta]}L^2_x} \;\leq\; C\,\|v-v^\epsilon \|_{X^{0,b}}\,.
\end{equation*}
By construction of $v$ and $v^\epsilon $, we obtain
\begin{multline*}
\|v-v^\epsilon\|_{X^{0,b}} \;\leq\; \bigg\|\chi_{\delta}(t)\,\int_{0}^{t} \dd t' \, \ee^{\ii (t-t')\Delta} \,\big(|v_\delta (t')|^2-w^\epsilon*|v_\delta (t')|^2\big)\,v_\delta(t')\bigg\|_{X^{0,b}}
\\
+\bigg\|\chi_{\delta}(t)\,\int_{0}^{t} \dd t' \, \ee^{\ii (t-t')\Delta}\,\Big\{ w^\epsilon *\,\big(|v_\delta(t')|^2-|v_\delta^\epsilon(t')|^2\big)\Big\} \,v_\delta(t')\bigg\|_{X^{0,b}}
\\
+ \bigg\|\chi_{\delta}(t)\,\int_{0}^{t} \dd t' \, \ee^{\ii (t-t')\Delta} \,\Big\{w^\epsilon * |v_\delta^\epsilon(t')|^2\Big\}\,\big(v_\delta(t')-v_\delta^\epsilon(t')\big)\bigg\|_{X^{0,b}}\,,
\end{multline*}
which by Lemma \ref{Xsb space lemma} (iv) is
\begin{multline}
\label{v-v_epsilon bound}
\;\leq\; C\,\delta^{\frac{1-2b}{2}}\,\Big\|\big(|v_\delta|^2-w^\epsilon*|v_\delta|^2\big)\,v_\delta\Big\|_{X^{0,b-1}}
+C\,\delta^{\frac{1-2b}{2}}\,\Big\|\Big\{w^\epsilon *\,\big(|v_\delta|^2-|v_\delta^\epsilon|^2\big)\Big\}\,v_\delta\Big\|_{X^{0,b-1}} 
\\
+ C\,\delta^{\frac{1-2b}{2}}\,\Big\|\Big\{w^\epsilon * |v_\delta^\epsilon|^2\Big\}\,\big(v_\delta-v_\delta^\epsilon\big)\Big\|_{X^{0,b-1}}\,.
\end{multline}
Note that, in the above expressions, the quantity $v_\delta^\epsilon$ is obtained from $v^\epsilon$ according to \eqref{v_delta}.
We now estimate each of the terms on the right-hand side of \eqref{v-v_epsilon bound} separately.

For the first term, we note that for fixed $x \in \Lambda$ we have
\begin{equation}
\label{triangle_inequality_bound}
\big||v_\delta(x)|^2-(w^\epsilon * |v_\delta|^2)(x)\big| \;\leq\; \int \dd y\, w^\epsilon(x-y)\,|v_\delta(x)-v_\delta(y)| \, \big(|v_\delta(x)|+|v_\delta(y)|\big)\,.
\end{equation}
Here, we used \eqref{w_epsilon} by which we obtain that
\begin{equation}
\label{integral w_epsilon}
\int \dd x\, w^\epsilon(x) \;=\;1\,.
\end{equation}
Moreover, we used the elementary inequality 
\begin{equation}
\label{difference of squares}
\big| |a_1|^2-|a_2|^2\big| \leq |a_1-a_2| \big(|a_1|+|a_2|\big)\,.
\end{equation}
We recall \eqref{definition of b} and use Lemma \ref{Xsb space lemma} (vi) to note that 
\begin{multline}
\label{v-v_epsilon 1 Step 1}
\Big\|\big(|v_\delta|^2-w^\epsilon*|v_\delta|^2\big)\,v_\delta\Big\|_{X^{0,b-1}} \;\leq\; \Big\|\big(|v_\delta|^2-w^\epsilon*|v_\delta|^2\big)\,v_\delta\Big\|_{X^{0,-3/8}}
\\
\;\leq\; C\,\Big\|\big(|v_\delta|^2-w^\epsilon*|v_\delta|^2\big)\,v_\delta\Big\|_{L^{4/3}_{t,x}} \,,
\end{multline}
which by \eqref{triangle_inequality_bound} is
\begin{equation}
\label{v-v_epsilon 1 Step 1 part 2}
\;\leq\; C\,\Big\| w^\epsilon(x-y)\,|v_\delta(x)-v_\delta(y)| \, |v_\delta(x)|\Big\|_{L^{4/3}_tL^{4/3}_xL^1_y}
+ C\,\Big\| w^\epsilon(x-y)\,|v_\delta(x)-v_\delta(y)| \, |v_\delta(y)|\Big\|_{L^{4/3}_tL^{4/3}_xL^1_y}\,.
\end{equation}
Note that, in order to apply \eqref{triangle_inequality_bound} in \eqref{v-v_epsilon 1 Step 1}, it is crucial to use that we are estimating the $L^{4/3}_{t,x}$ norm and not the $X^{0,b-1}$ norm. 

By H\"{o}lder's inequality in mixed-norm spaces and by the construction of $v_\delta$ in \eqref{v_delta}, the expression in \eqref{v-v_epsilon 1 Step 1 part 2} is
\begin{multline}
\label{v-v_epsilon 1*}
\;\leq\;\big\|[x-y]^{s+\frac{1}{2}}\,w^\epsilon(x-y)\big\|_{L^{\infty}_xL^2_y} \,\big\|\psi_\delta\big\|_{L^\infty_t} \,  \, 
\Bigg\|\frac{v(x)-v(y)}{[x-y]^{s+\frac{1}{2}}}\Bigg\|_{L^{\infty}_tL^2_{x,y}}\,\big\|\psi_\delta\big\|_{L^{4/3}_t} \, \big\|v(x)\big\|_{L^{\infty}_tL^4_x}
\\
+ \big\|[x-y]^{s+\frac{1}{2}}\,w^\epsilon(x-y)\big\|_{L^{4}_{x,y}} \,\big\|\psi_\delta\big\|_{L^\infty_t} \, \Bigg\|\frac{v(x)-v(y)}{[x-y]^{s+\frac{1}{2}}}\Bigg\|_{L^{\infty}_tL^2_{x,y}}\,\big\|\psi_\delta\big\|_{L^{4/3}_t} \,\big\|v(y)\big\|_{L^{\infty}_tL^4_y}\,.
\end{multline}

We now estimate \eqref{v-v_epsilon 1*}.
By \eqref{w_epsilon} and by the assumption that $w \in C_c^{\infty}(\R)$, we have that for all $1 \leq p <\infty$
\begin{multline}
\label{v-v_epsilon 1A}
\big\|[z]^{s+\frac{1}{2}}\,w^\epsilon(z)\big\|_{L^p_{z}}
\;=\;\frac{1}{\epsilon}\, \bigg(\int_0^1\,\dd z \,z^{(s+\frac{1}{2})p}\, \Big|w\Big(\frac{z}{\epsilon}\Big)\Big|^p \bigg)^{1/p}
\;=\;
\epsilon^{s+\frac{1}{p}-\frac{1}{2}} \, \bigg(\int_0^{1/\epsilon} \dd \tilde{z} \,\tilde{z}^{(s+\frac{1}{2})p}\,|w(\tilde{z})|^p \bigg)^{1/p}
\\
\;\leq\; C(s,p,w) \,\epsilon^{s+\frac{1}{p}-\frac{1}{2}}\,.
\end{multline}
Here we used the change of variables $\tilde{z}=z/\epsilon$.
We now apply \eqref{v-v_epsilon 1A} with $p=2$ and $p=4$ and deduce that
\begin{equation}
\label{v-v_epsilon 1A application}
\big\|[x-y]^{s+\frac{1}{2}}\,w^\epsilon(x-y)\big\|_{L^{\infty}_xL^2_y} \;\leq\; C \epsilon^s\,,\quad \big\|[x-y]^{s+\frac{1}{2}}\,w^\epsilon(x-y)\big\|_{L^{4}_{x,y}} \;\leq\;C \epsilon^{s-\frac{1}{4}}\,.
\end{equation}
For the second inequality in \eqref{v-v_epsilon 1A application}, we also used the compactness of $\Lambda$.
Furthermore, by Lemma \ref{Sobolev space torus}, Lemma \ref{Xsb space lemma} (i) and since $v \in \Gamma$ for the set $\Gamma$ defined as in \eqref{set Gamma}, it follows that
\begin{equation}
\label{v-v_epsilon 1B}
\Bigg\|\frac{v(x)-v(y)}{[x-y]^{s+\frac{1}{2}}}\Bigg\|_{L^{\infty}_tL^2_{x,y}} \;\leq\; C\, \|v\|_{L^{\infty}_tH^s_x} \;\leq\; C\, \|v\|_{X^{s,b}} \;\leq\;C\delta^{\frac{1-2b}{2}}\,\|\phi_0\|_{H^s}\,.
\end{equation}
Moreover, we note that, by H\"{o}lder's inequality, Sobolev embedding with $s \geq \frac{1}{4}$ and the same arguments as in \eqref{v-v_epsilon 1B}, we have
\begin{equation}
\label{v-v_epsilon 1C}
\|v\|_{L^\infty_t L^4_x} \;\leq\;C \,\|v\|_{L^\infty_tH^s_x}\;\leq\,C\delta^{\frac{1-2b}{2}}\,\|\phi_0\|_{H^s}\,.
\end{equation}
We use \eqref{v-v_epsilon 1A application}--\eqref{v-v_epsilon 1C}, as well as \eqref{function psi}--\eqref{chi_psi_rescaled} to deduce that the expression in \eqref{v-v_epsilon 1*} is 
\begin{equation}
\label{v-v_epsilon 1}
\;\leq\; C\,\delta^{\frac{3}{4}+(1-2b)}\, \|\phi_0\|_{H^s}^2 \, \epsilon^{s-1/4}\,.
\end{equation}

We now estimate the second term on the right-hand of \eqref{v-v_epsilon bound}. By applying Lemma \ref{Xsb space lemma} (vi) as in \eqref{v-v_epsilon 1 Step 1}, it follows that
\begin{multline*}
\Big\|\Big\{w^\epsilon *\,\big(|v_\delta|^2-|v_\delta^\epsilon|^2\big)\Big\}\,v_\delta\Big\|_{X^{0,b-1}} \;\leq\;C\, \Big\|\Big\{w^\epsilon *\,\big(|v_\delta|^2-|v_\delta^\epsilon|^2\big)\Big\}\,v_\delta\Big\|_{L^{4/3}_{t,x}}
\\
\;\leq\;C\, \Big\|\Big\{w^\epsilon *\,\Big(|v_\delta-v_\delta^\epsilon| \big(|v_\delta|+|v_\delta^\epsilon \big)\Big)\Big\}\,v_\delta\Big\|_{L^{4/3}_{t,x}}
\,. 
\end{multline*}
In the last inequality, we also used \eqref{difference of squares}. By H\"{o}lder's and Young's inequality, it follows that this expression is
\begin{multline}
\label{v-v_epsilon 2}
\;\leq\;\big\|v_\delta-v_\delta^\epsilon\big\|_{L^2_{t,x}} \, \Big(\big\|v_\delta\big\|_{L^8_{t,x}}+\big\|v_\delta^\epsilon\big\|_{L^8_{t,x}} \Big) \,\big\|v_\delta\big\|_{L^8_{t,x}}
\\
\;\leq\; C\,\delta^{\frac{3}{4}}\,
\|v-v^\epsilon\|_{L^\infty_t L^2_x} \, \big(\|v\|_{L^\infty_t L^8_x}+\|v^\epsilon\|_{L^\infty_t L^8_x} \big) \,\|v\|_{L^\infty_t L^8_x}
\\
\;\leq\; C\,\delta^{\frac{3}{4}}\,
\|v-v^\epsilon\|_{X^{0,b}} \, \big(\|v\|_{X^{s,b}}+\|v^\epsilon\|_{X^{s,b}} \big) \,\|v\|_{X^{s,b}} \;\leq\; C\, 
{\delta}^{\frac{3}{4}+(1-2b)}\,\|\phi_0\|_{H^s}^2\,\|v-v^\epsilon\|_{X^{0,b}}\,.
\end{multline}
Above we used Sobolev embedding with $s \geq \frac{3}{8}$, Lemma \ref{Xsb space lemma} (i), the construction of $v,v_\delta,v_\delta^\epsilon,v,v^\epsilon$, as well as $\|w^{\epsilon}\|_{L^1}=1$, which follows from \eqref{integral w_epsilon} since $w^{\epsilon} \geq 0$.

The third term on the right-hand side of \eqref{v-v_epsilon bound} is estimated in a similar way. Arguing as in \eqref{v-v_epsilon 1 Step 1}, we need to estimate
\begin{multline}
\label{v-v_epsilon 3}
\Big\|\Big\{w^\epsilon * |v_\delta^\epsilon|^2\Big\}\,\big(v_\delta-v_\delta^\epsilon\big)\Big\|_{L^{4/3}_{t,x}} \;\leq\; \big\|v_\delta^\epsilon\big\|_{L^8_{t,x}}^2\,\big\|v_\delta-v_\delta^\epsilon\big\|_{L^2_{t,x}} \;\leq\; C\, \delta^{\frac{3}{4}} \,\|v^\epsilon\|_{L^\infty_t L^8_x}^2 \,\|v-v^\epsilon\|_{L^\infty_tL^2_x}
\\
\;\leq\;
C\, \delta^{\frac{3}{4}} \,\|v^\epsilon\|_{X^{s,b}}^2 \,\|v-v^\epsilon\|_{X^{0,b}}  \;\leq\; C\, 
{\delta}^{\frac{3}{4}+(1-2b)}\,\|\phi_0\|_{H^s}^2\,\|v-v^\epsilon\|_{X^{0,b}}\,.
\end{multline}
Here, we again used H\"{o}lder's inequality, Young's inequality, Sobolev embedding with $s \geq \frac{3}{8}$, Lemma \ref{Xsb space lemma} (i) and the construction of $v_\delta,v_\delta^\epsilon,v^\epsilon,w^{\epsilon}$.
Substituting \eqref{v-v_epsilon 1}--\eqref{v-v_epsilon 3} into \eqref{v-v_epsilon bound}, it follows that
\begin{equation}
\label{v-v_epsilon bound 2}
\|v-v^\epsilon\|_{X^{0,b}} \;\leq\; C\,\delta^{\theta_1}\, \|\phi_0\|_{H^s}^2 \, \epsilon^{s-\frac{1}{4}}+C\, 
{\delta}^{\theta_1}\,\|\phi_0\|_{H^s}^2\,\|v-v^\epsilon\|_{X^{0,b}}\,,
\end{equation}
where 
\begin{equation*}
\theta_1 \;\deq\; \frac{3}{4}+\frac{3(1-2b)}{2}>0\,.
\end{equation*}
In particular, if we choose $\delta \equiv \delta(\|\phi_0\|_{H^s})>0$ possibly smaller than before so that the coefficient of $\|v-v^\epsilon\|_{X^{0,b}}$ on the right-hand side of \eqref{v-v_epsilon bound 2} is smaller than $1/2$, it follows that 
\begin{equation}
\label{v-v_epsilon bound 3}
\|v-v^\epsilon\|_{X^{0,b}} \;\leq\; C(\|\phi_0\|_{H^s})\,\epsilon^{s-\frac{1}{4}}\,.
\end{equation}
By analogous arguments, we obtain more generally that for all $n \in \N$ we have
\begin{multline}
\label{v_{(n)}-v_{(n)}_epsilon 1}
\|v_{(n)}-v_{(n)}^{\epsilon}\|_{X^{0,b}} \;\leq\; 
C\delta^{\frac{1-2b}{2}} \, \|v_{(n)}(n\delta)-v_{(n)}^{\epsilon}(n\delta)\|_{\fra H}
+
C\,\delta^{\theta_1}\, \|v_{(n)}(n\delta)\|_{H^s}^2 \, \epsilon^{s-\frac{1}{4}}
\\
+C\, 
{\delta}^{\theta_1}\,\big(\|v_{(n)}(n\delta)\|_{H^s} + \|v_{(n)}^\epsilon(n\delta)\|_{H^s} \big)^2\,\|v_{(n)}-v_{(n)}^\epsilon\|_{X^{0,b}}\,.
\end{multline}
Note that the first term on the right-hand side of \eqref{v_{(n)}-v_{(n)}_epsilon 1} appears because in general we consider different initial data $v_{(n)}(n\delta)$ and $v_{(n)}^{\epsilon}(n\delta)$.  We hence obtain the corresponding term on the right-hand side of \eqref{v_{(n)}-v_{(n)}_epsilon 1} by Lemma \ref{Xsb space lemma} (ii).

In particular, if $1 \leq n \leq (T-1)/\delta$, we obtain by Lemma \ref{Xsb space lemma} (i), \eqref{Definition of u}--\eqref{H^s bound} and \eqref{v_{(n)}-v_{(n)}_epsilon 1} that
\begin{multline*}
\|v_{(n)}-v_{(n)}^{\epsilon}\|_{X^{0,b}} \;\leq\; C\,\delta^{\frac{1-2b}{2}}\,\|v_{(n-1)}-v_{(n-1)}^{\epsilon}\|_{X^{0,b}}
\\
+C_1(\|\phi_0\|_{H^s},T)\,\epsilon^{s-\frac{1}{4}}+C_2(\|\phi_0\|_{H^s},T)\,\delta^{\theta_1}\,\|v_{(n)}-v_{(n)}^{\epsilon}\|_{X^{0,b}}\,.
\end{multline*}
Here we also assume that $\delta<1$. In particular, choosing $\delta\equiv \delta(\|\phi_0\|_{H^s},T)>0$ even smaller than before such that $C_2(\|\phi_0\|_{H^s},T)\,\delta^{\theta_1}<1/2$, it follows that 
\begin{equation}
\label{v_{(n)}-v_{(n)}_epsilon 2}
\|v_{(n)}-v_{(n)}^{\epsilon}\|_{X^{0,b}} \;\leq\; C(\|\phi_0\|_{H^s},T)\,\|v^{(n-1)}-v^{(n-1)}_\epsilon\|_{X^{0,b}}+C(\|\phi_0\|_{H^s},T)\,\epsilon^{s-\frac{1}{4}}\,,
\end{equation}
for all $1 \leq n \leq (T-1)/\delta$. We note that, by \eqref{H^s bound}, we can take
\begin{equation*}
\delta \;\equiv\; \delta \Bigg(\sup_{[0,T]} \|u(t)\|_{H^s}+\sup_{\epsilon >0} \sup_{[0,T]} \|u^{\epsilon}(t)\|_{H^s}\Bigg) \;=\; \delta(\|\phi_0\|_{H^s},T) \;>\;0\,
\end{equation*}
in \eqref{v_{(n)}-v_{(n)}_epsilon 2}.

Iterating \eqref{v_{(n)}-v_{(n)}_epsilon 2} and recalling \eqref{v-v_epsilon bound 3}, it follows that for all $0 \leq n \leq (T-1)/\delta$ we have
\begin{equation}
\label{v_{(n)}-v_{(n)}_epsilon conclusion}
\|v_{(n)}-v_{(n)}^{\epsilon}\|_{X^{0,b}}\;\leq\;C(\|\phi_0\|_{H^s},T)\,\epsilon^{s-\frac{1}{4}}\,.
\end{equation}
Using Lemma \ref{Xsb space lemma} (i), \eqref{Definition of u} and \eqref{v_{(n)}-v_{(n)}_epsilon conclusion}, it follows that 
\begin{equation*}
\|u-u^{\epsilon}\|_{L^{\infty}_{[0,T]}\fra H} \;\leq\; C(\|\phi_0\|_{H^s},T)\,\epsilon^{s-\frac{1}{4}}\,,
\end{equation*}
from where we deduce the claim since $s \geq \frac{3}{8}$.
\end{proof}

Before proceeding to the proof of Theorem \ref{Main Result_local_nonlinearity} we record the following elementary lemma.

\begin{lemma}
\label{diagonal_argument_zeta}
Let $(Z_k)_{k \in \N}$ be an increasing family of sets (i.e.\ $Z_k \subset Z_{k+1}$), and set $Z \deq \bigcup_{k \in \N}Z_k$.
For $\epsilon,\tau > 0$ let $f,f^\epsilon,f^\epsilon_\tau : Z \rightarrow \C$ be functions which satisfy the following properties.
\begin{enumerate}
\item
For each fixed $k \in \N$ and $\epsilon > 0$ we have 
$
\lim_{\tau \to \infty} f_\tau^\epsilon(\zeta) = f^\epsilon(\zeta)
$
uniformly in $\zeta \in Z_k$.
\item
For each fixed $k \in \N$ we have 
$\lim_{\epsilon \to 0} f^\epsilon(\zeta) = f(\zeta)$
uniformly in $\zeta \in Z_k$.
\end{enumerate}
Then there exists a sequence of positive numbers $(\epsilon_\tau)$, with $\lim_{\tau \to \infty} \epsilon_\tau = 0$, such that
\begin{equation*}
\lim_{\tau \to \infty} f_\tau^{\epsilon_\tau}(\zeta) \;=\; f(\zeta)\,,
\end{equation*}
for all $\zeta \in Z$.
\end{lemma}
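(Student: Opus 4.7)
The statement is a standard diagonal-type extraction, and the plan is to combine the two uniform convergence hypotheses in the order: first choose $\epsilon_k$ to control the $f^\epsilon$ to $f$ error on $Z_k$, then choose thresholds $\tau_k$ to control the $f^\epsilon_\tau$ to $f^\epsilon$ error on $Z_k$ with the already-fixed $\epsilon_k$, and finally glue these choices along a partition of the $\tau$-axis.

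More precisely, by hypothesis (2) applied to $k$, for each $k \in \N$ I can pick $\epsilon_k > 0$ with $\epsilon_k \leq 1/k$ such that
\begin{equation*}
\sup_{\zeta \in Z_k} \absb{f^{\epsilon_k}(\zeta) - f(\zeta)} \;\leq\; \frac{1}{k}\,.
\end{equation*}
Next, by hypothesis (1) applied to $k$ and $\epsilon = \epsilon_k$, I can pick a strictly increasing sequence $\tau_1 < \tau_2 < \cdots$ with $\tau_k \to \infty$ such that for all $\tau \geq \tau_k$,
\begin{equation*}
\sup_{\zeta \in Z_k} \absb{f_\tau^{\epsilon_k}(\zeta) - f^{\epsilon_k}(\zeta)} \;\leq\; \frac{1}{k}\,.
\end{equation*}
I then define $\epsilon_\tau \deq \epsilon_k$ for $\tau \in [\tau_k, \tau_{k+1})$ (and, say, $\epsilon_\tau \deq \epsilon_1$ for $\tau < \tau_1$). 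Since $\epsilon_k \leq 1/k \to 0$, we have $\lim_{\tau \to \infty} \epsilon_\tau = 0$.

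To verify pointwise convergence, fix $\zeta \in Z$ and let $k_0$ be such that $\zeta \in Z_{k_0}$; by monotonicity $\zeta \in Z_k$ for every $k \geq k_0$. For any $\tau \geq \tau_{k_0}$, letting $k \geq k_0$ be the unique index with $\tau \in [\tau_k, \tau_{k+1})$, the triangle inequality yields
\begin{equation*}
\absb{f_\tau^{\epsilon_\tau}(\zeta) - f(\zeta)} \;\leq\; \absb{f_\tau^{\epsilon_k}(\zeta) - f^{\epsilon_k}(\zeta)} + \absb{f^{\epsilon_k}(\zeta) - f(\zeta)} \;\leq\; \frac{2}{k}\,,
\end{equation*}
which tends to $0$ as $\tau \to \infty$ (since $k \to \infty$). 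This proves the claim.

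There is no real obstacle here; the only thing to be careful about is the order of quantifiers, namely that $\epsilon_k$ must be fixed \emph{before} choosing $\tau_k$, so that hypothesis (1) can be invoked with that specific $\epsilon_k$. The monotonicity $Z_k \subset Z_{k+1}$ is used precisely to ensure that once $\zeta$ lies in some $Z_{k_0}$, the uniform estimates on $Z_k$ for $k \geq k_0$ automatically apply to $\zeta$.
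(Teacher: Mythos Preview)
Your proof is correct and is essentially a fleshed-out version of the paper's own argument, which is a two-line sketch invoking ``a diagonal argument'' twice (first for each fixed $k$, then across $k$). Your single explicit step-function construction of $\epsilon_\tau$ is arguably cleaner than the paper's nested extraction, but the underlying idea is the same.
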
 

\begin{proof} Assumptions (i) and (ii) combined with a diagonal argument imply that, for a fixed $k \in \N$, there exists a sequence of positive numbers $(\epsilon^k_\tau)$, with $\lim_{\tau \to \infty} \epsilon^k_\tau = 0$, such that $\lim_{\tau \to \infty} f_\tau^{\epsilon^k_\tau}(\zeta) = f(\zeta)$, uniformly in $\zeta \in Z_k$. Using a further diagonal argument, we extract a diagonal sequence $(\epsilon_\tau)$ from 
$(\epsilon^k_\tau)$ such that
$\lim_{\tau \to \infty} f_\tau^{\epsilon_\tau}(\zeta) = f(\zeta),$
for all $\zeta \in Z$.
\end{proof}

\begin{proof}[Proof of Theorem \ref{Main Result_local_nonlinearity}]
We shall apply Lemma \ref{diagonal_argument_zeta} for the following choices of the sets $Z_{k}$ and $Z$
\begin{align*}
Z &\;\deq\; \Big\{(m,t_1,\ldots,t_m,p_1,\ldots,p_m,\xi^1,\ldots,\xi^m)\,:\, m \in \N, \, t_i \in \R, \, p_i \in \N, \, \xi^i \in \cal L (\fra H^{(p_i)})\Big\}
\\
Z_k &\;\deq\; \Big\{(m,t_1,\ldots,t_m,p_1,\ldots,p_m,\xi^1,\ldots,\xi^m) \in Z \,:\, m \leq k, \, \abs{t_i} \leq k, \, p_i \leq k, \, \norm{\xi^i} \leq k\Big\}
\end{align*}
and the functions
\begin{equation*}
f^\epsilon_\sharp(\zeta) \;\deq\; \rho^\epsilon_\sharp \pb{\Psi^{t_1,\epsilon}_{\sharp} \,\Theta_{\sharp}(\xi^1)\,\cdots \, 
\Psi^{t_m,\epsilon}_{\sharp} \,\Theta_{\sharp}(\xi^m)}\,, \qquad
f(\zeta) \;\deq\; \rho \pb{\Psi^{t_1} \,\Theta(\xi^1)\,\cdots \, 
\Psi^{t_m} \,\Theta(\xi^m)}\,,
\end{equation*}
where $\sharp$ stands for either nothing or $\tau$.

By Theorem \ref{Main Result}, here used for the choice $w=w^\epsilon$, Lemma \ref{diagonal_argument_zeta}, and Remark \ref{uniformity_remark}, it suffices to show that, for fixed $k \in \N$, 
\begin{equation}
\label{local_nonlinearity_claim}
\lim_{\epsilon \to 0} \rho^\epsilon \Big(\Psi^{t_1,\epsilon} \,\Theta(\xi^1)\,\cdots \, 
\Psi^{t_m,\epsilon} \,\Theta(\xi^m)\Big)  \;=\;
\rho \Big(\Psi^{t_1} \,\Theta(\xi^1)\,\cdots \, 
\Psi^{t_m} \,\Theta(\xi^m)\Big)\,,
\end{equation}
uniformly in the parameters
\begin{equation}
\label{uniformity_k_epsilon}
m \;\leq\; k\,, \,\abs{t_i} \;\leq\; k\,, \, p_i \;\leq\; k\,,\,\|\xi^i\| \;\leq\;k\,,\, i=1,\dots, m.
\end{equation}

In the sequel, we set  $w=w^\epsilon$ and we add a superscript $\epsilon$ to any quantity defined in terms of $w$ to indicate that in its definition $w$ is replaced by $w^{\epsilon}$. For instance, we write the expectation $\rho^\epsilon(X)$ as in \eqref{rho_frac}, the classical interaction $\cal W^\epsilon$ 
defined as in \eqref{classical interaction} and $\Psi^{t,\epsilon} \,\Theta(\xi)$ given as in Definition \ref{Classical_time_evolution}, all with this modification. 
In addition, we write the classical interaction $\cal W$ defined as in \eqref{classical interaction} with $w$ formally set to equal the delta function. With these conventions we define
\begin{equation*} 
\tilde \rho_z^\epsilon(X) \;\deq\; \int X \, \ee^{- z \cal W^\epsilon} \, \dd \mu\,,\quad \tilde \rho_z(X) \;\deq\; \int X \, \ee^{- z \cal W} \, \dd \mu
\end{equation*}
for a random variable $X$ and $\re z \geq 0$. In particular, we have
\begin{equation} \label{rho_quotient_1}
\rho^\epsilon(X)\;=\;\frac{\tilde \rho_1^\epsilon(X)}{\tilde \rho_1^\epsilon(1)}\,,\quad \rho(X)\;=\;\frac{\tilde \rho_1(X)}{\tilde \rho_1(1)}\,.
\end{equation}
Let us first observe that 
\begin{equation}
\label{W convergence}
\lim_{\epsilon \to 0} \cal W^\epsilon \;=\; \cal W \quad \mbox{almost surely.}
\end{equation}
Indeed, using \eqref{integral w_epsilon}--\eqref{difference of squares}, we obtain
\begin{equation*}
\big|2\,(\cal W-\cal W^\epsilon)\big|
\;\leq\; \int \dd x\, \dd y\, w^\epsilon(x-y) \, \big|\phi(x)-\phi(y)\big| \, \big(|\phi(x)|+|\phi(y)|\big)\,|\phi(x)|^2\,.
\end{equation*}
Set $s = \frac{3}{8}$. Note that, since $s<\frac{1}{2}$, the free classical field $\phi$ defined in \eqref{classical_free_field}  is in $H^s(\Lambda)$ almost surely.
We now apply H\"{o}lder's inequality in mixed norm spaces similarly as in \eqref{v-v_epsilon 1*} to deduce that this expression is
\begin{multline}
\label{W convergence 1}
\;\leq\; \big\|[x-y]^{s+\frac{1}{2}}\,w^\epsilon(x-y)\big\|_{L^{\infty}_xL^2_y} \, 
\Bigg\|\frac{\phi(x)-\phi(y)}{[x-y]^{s+\frac{1}{2}}}\Bigg\|_{L^2_{x,y}}\,\|\phi(x)\|_{L^6_x}^3
\\
+ \big\|[x-y]^{s+\frac{1}{2}}\,w^\epsilon(x-y)\big\|_{L^{\infty}_xL^4_y} \, 
\Bigg\|\frac{\phi(x)-\phi(y)}{[x-y]^{s+\frac{1}{2}}}\Bigg\|_{L^2_{x,y}}\,\|\phi(y)\|_{L^4_y} \,\|\phi(x)\|_{L^4_x}^2\;\leq\; C\,\epsilon^{s-\frac{1}{4}} \, \|\phi\|_{H^s}^4\,.
\end{multline}
Here we used \eqref{v-v_epsilon 1A application}, Lemma \ref{Sobolev space torus} and Sobolev embedding with $s \geq \frac{1}{4}$.
The claim \eqref{W convergence} now follows from \eqref{W convergence 1} since $\phi \in H^s(\Lambda)$ almost surely.

Since $\cal W^\epsilon,\cal W \geq 0$, it follows from \eqref{W convergence} and the dominated convergence theorem that
\begin{equation} \label{rho_quotient_2}
\lim_{\epsilon \to 0} \tilde{\rho}_1^\epsilon(1) \;=\; \tilde{\rho}_1(1)\,.
\end{equation}
In particular, by \eqref{rho_quotient_1} and \eqref{rho_quotient_2} we deduce that \eqref{local_nonlinearity_claim} is equivalent to showing that
\begin{equation}
\label{local_nonlinearity_claim2}
\lim_{\epsilon \to 0}\tilde{\rho}_1^\epsilon \Big(\Psi^{t_1,\epsilon} \,\Theta(\xi^1)\,\cdots \, 
\Psi^{t_m,\epsilon} \,\Theta(\xi^m)\Big)  \;=\;
\tilde{\rho}_1 \Big(\Psi^{t_1} \,\Theta(\xi^1)\,\cdots \, 
\Psi^{t_m} \,\Theta(\xi^m)\Big)\,,
\end{equation}
uniformly in \eqref{uniformity_k_epsilon}.
In order to prove \eqref{local_nonlinearity_claim2}, we note that, by construction of $\Psi^{t,\epsilon},\Psi^t$ and Proposition \ref{L2 convergence}, we have that, for $\xi \in \cal L(\fra H^{(p)})$, 
\begin{equation}
\label{local_nonlinearity_auxiliary_claim}
\lim_{\epsilon \to 0} \Psi^{t,\epsilon} \,\Theta(\xi) \;=\; \Psi^t \,\Theta(\xi) \quad \mbox{in } \fra H \mbox{ almost surely.}
\end{equation}
The convergence in \eqref{local_nonlinearity_auxiliary_claim} is uniform in $\abs{t} \leq k, p_i \leq k, \|\xi^i\| \leq k$.
Indeed, we write
\begin{equation*}
\Psi^{t,\epsilon} \,\Theta(\xi) \;=\; \Big\langle (S_t^{\epsilon} \phi)^{\otimes k}, \xi \,(S_t^{\epsilon} \phi)^{\otimes k}\Big\rangle_{\fra H^{\otimes k}}\,,\quad  \Psi^{t} \,\Theta(\xi) \;=\; \Big\langle (S_t \phi)^{\otimes k}, \xi \,(S_t \phi)^{\otimes k}\Big\rangle_{\fra H^{\otimes k}}\,,
\end{equation*}
where $S_{t}^{\epsilon}$ and $S_t$ denote the flow maps of \eqref{NLS_epsilon} and \eqref{NLS} respectively. We consider the initial data $\phi_0$ given by the free classical field $\phi$ \eqref{classical_free_field}.
Let us recall that $\phi \in H^s \subset \fra H$ almost surely.
Proposition \ref{L2 convergence} then implies that $\lim_{\epsilon \to 0} (S_t^{\epsilon} \phi)^{\otimes k} = (S_t \phi)^{\otimes k}$ in $\fra H^{\otimes k}$ almost surely. We deduce \eqref{local_nonlinearity_auxiliary_claim} since $\xi \in \cal L(\fra H^{(p)})$.

In particular, from \eqref{W convergence} and \eqref{local_nonlinearity_auxiliary_claim} it follows that 
\begin{equation}
\label{local_nonlinearity_claim2A}
\lim_{\epsilon \to 0} \Psi^{t_1,\epsilon} \,\Theta(\xi^1)\,\cdots \, 
\Psi^{t_m,\epsilon} \,\Theta(\xi^m)\,\ee^{-\cal W^\epsilon} \;=\;
\Psi^{t_1} \,\Theta(\xi^1)\,\cdots \, 
\Psi^{t_m} \,\Theta(\xi^m)\,\ee^{-\cal W} \quad \mbox{almost surely.}
\end{equation}
Furthermore, by conservation of mass for \eqref{NLS_epsilon} and since $\cal W^\epsilon \geq 0$ by construction, it follows that for all $\epsilon>0$ we have
\begin{equation}
\label{local_nonlinearity_claim2B}
\Big|\Psi^{t_1,\epsilon} \,\Theta(\xi^1)\,\cdots \, 
\Psi^{t_m,\epsilon} \,\Theta(\xi^m)\,\ee^{-\cal W^\epsilon}\Big| \;\leq\; \|\xi^1\| \, \cdots \, \|\xi^m\| \,\|\phi\|_{\fra H}^{\,2(p_1+\cdots+p_m)} \in L^1(\dd \mu)\,.
\end{equation}
We now deduce \eqref{local_nonlinearity_claim2} from \eqref{local_nonlinearity_claim2A}--\eqref{local_nonlinearity_claim2B} and the dominated convergence theorem.
\end{proof}

\appendix

\section{$X^{\sigma,b}$ spaces: proof of Lemma \ref{Xsb space lemma}.}
\label{Xsb space appendix}

In this appendix we present the proof of Lemma \ref{Xsb space lemma}.
We emphasize that this is done for the convenience of the reader and that it is not an original contribution of the paper.

\begin{proof}[Proof of Lemma \ref{Xsb space lemma}.]
We recall the definition of $b$ given in \eqref{definition of b}. 

We first prove part (i). The proof is analogous to the proof the Sobolev embedding $H^b_t \hookrightarrow L^{\infty}_t$. We use the Fourier inversion formula in the time variable and write
\begin{equation}
\label{Fourier_inversion_t}
\hat{f}(k,t) \;=\; \int_{-\infty}^{\infty} \dd \eta \,\tilde{f}(k,\eta)\,\ee^{2\pi \ii \eta t}\,.
\end{equation}
In \eqref{Fourier_inversion_t}, 
\begin{equation*}
\hat{f}(k,t) \;=\;\int_{\Lambda} \dd x\, f(x,t)\,\ee^{-2\pi \ii k x}
\end{equation*}
denotes the Fourier transform in the space variable.
In particular, using the Cauchy-Schwarz inequality in $\eta$ in \eqref{Fourier_inversion_t} and recalling that $b>1/2$, it follows that
\begin{equation}
\label{hat_f_bound}
\big|\hat{f}(k,t)\big| \;\leq\; C(b) \Bigg(\int_{-\infty}^{\infty} \dd \eta\,|\tilde{f}(k,\eta)|^2 \, \big(1+|\eta+2\pi k^2|\big)^{2b}\Bigg)^{1/2}\,.
\end{equation}
Claim (i) follows from \eqref{hat_f_bound} and Definition \ref{X^{sigma,b}}.

Claims analogous to (ii)-(iv) were proved for $X^{\sigma,b}$ corresponding to the Airy equation in the non-periodic setting \cite[Lemmas 3.1--3.3]{Kenig_Ponce_Vega}. The bounds for the Schr\"{o}dinger equation follow in the same way, since we are estimating integrals in the Fourier variable $\eta$ dual to time. For completeness, we give the proofs of (ii)-(iv).

We proceed with the proof of (ii). By density, it suffices to consider $\Phi \in \cal S(\Lambda_x)$. Let us note that, for fixed $x \in \Lambda$
\begin{equation*}
\psi(t/\delta)\,\ee^{\ii t \Delta}\Phi \;=\;\psi(t/\delta) \,\sum_k \ee^{2\pi i kx-4\pi^2 \ii k^2 t} \,\hat{\Phi}(k)\,,
\end{equation*}
from where we deduce that
\begin{equation*}
\big(\psi(t/\delta) \,\ee^{\ii t \Delta}\Phi\big) \,\widetilde{}\,(k,\tau)\;=\;\delta \,\hat{\psi}\Big(\delta(\eta+2\pi k^2)\Big) \,\hat{\Phi}(k)\,.
\end{equation*}
Hence
\begin{multline}
\label{free_evolution_Xsb_bound}
\big\|\psi(t/\delta)\,\ee^{\ii t \Delta}\Phi \big\|_{X^{\sigma,b}}^2
\;=\; \sum_k \big(1+|2\pi k|\big)^{2\sigma}\,|\hat{\Phi}(k)|^2 \, \bigg[\delta^2 \, \int_{-\infty}^{\infty} \dd \eta \,  \,\Big|\hat{\psi}\Big(\delta(\eta+2\pi k^2)\Big)\Big|^2 \,\big(1+|\eta+ 2\pi k^2|\big)^{2b}\bigg]
\\
\;=\;\|\Phi\|_{H^{\sigma}}^2 \,  \bigg[\delta^2 \, \int_{-\infty}^{\infty} \dd \eta \,  \,\big|\hat{\psi}(\delta \eta)\big|^2 \,\big(1+|\eta|\big)^{2b}\bigg]\,.
\end{multline}
By scaling we obtain that the following estimates hold.
\begin{equation}
\label{free_evolution_Xsb_bound1}
\delta^2 \, \int_{-\infty}^{\infty} \dd \eta \,  \,\big|\hat{\psi}(\delta \eta)\big|^2 \;\leq\; C(\psi)\,.
\end{equation}
\begin{equation}
\label{free_evolution_Xsb_bound2}
\delta^2 \, \int_{-\infty}^{\infty} \dd \eta \,  \,\big|\hat{\psi}(\delta \eta)\big|^2 \,|\eta|^{2b}\;\leq\; C(b,\psi)\,\delta^{1-2b}\,.
\end{equation}
Claim (ii) follows by substituting \eqref{free_evolution_Xsb_bound1}--\eqref{free_evolution_Xsb_bound2} into \eqref{free_evolution_Xsb_bound} and using
\begin{equation*}
\big(1+|\eta|\big)^{2b} \;\leq\; C(b)\big(1+|\eta|^{2b}\big)\,.
\end{equation*}
(In the sequel, we use the latter elementary inequality repeatedly without explicit mention).

We now prove (iii). Let us note that 
\begin{equation}
\label{Xsb_Lemma_iii1}
\big(\psi(t/\delta)f\big)\,\,\widetilde{}\,\,(k,\eta)\;=\;\tilde{f}(k,\eta) *_{\eta}\Big(\delta \,\hat{\psi}(\delta \cdot) \Big)\,
\end{equation}
where $*_\eta$ denotes convolution in $\eta$.
From \eqref{Xsb_Lemma_iii1} and Definition \ref{X^{sigma,b}}, it follows that (iii) is equivalent to showing that for all $h=h(t)$ and $a \in \R$ we have
\begin{equation}
\label{Xsb_Lemma_iii2}
\int_{-\infty}^{\infty} \dd \eta \, \Big| \hat{h} *_\eta \Big(\delta \hat{\psi}(\delta \cdot) \Big)(\eta)\Big|^2\,\big(1+|\eta+a|\big)^{2b}
\;\leq\;C(b,\psi) \, \delta^{1-2b} \, \int_{-\infty}^{\infty} \dd \eta \, |\hat{h}(\eta)|^2\,\big(1+|\eta+a|\big)^{2b}\,.
\end{equation}
By Young's inequality, it follows that
\begin{equation}
\label{Xsb_Lemma_iii3}
\int_{-\infty}^{\infty} \dd \eta \, \Big| \hat{h} *_\eta \Big(\delta \hat{\psi}(\delta \cdot) \Big)(\eta)\Big|^2 \;\leq\; C(\psi) \int_{-\infty}^{\infty} \dd \eta \, |\hat{h}(\eta)|^2\,.
\end{equation}
Moreover, we write
\begin{multline}
\label{Xsb_Lemma_iii4}
\int_{-\infty}^{\infty} \dd \eta \,  \Big| \hat{h} *_\eta \Big(\delta \hat{\psi}(\delta \cdot) \Big)(\eta)\Big|^2\,|\eta+a|^{2b} \;=\;\int_{-\infty}^{\infty} \dd \eta \,  \Big| \hat{h} *_\eta \Big(\delta \hat{\psi}(\delta \cdot) \Big)(\eta-a)\Big|^2\,|\eta|^{2b}
\\
\;=\; C(b) \int_{-\infty}^{\infty} \dd t \,\Big| |\partial|^b \Big(\ee^{2\pi i a t} \,h(t)\,\psi(\delta^{-1}\,t)  \Big) \Big|^2 \;=\; C(b) \Big\||\partial|^b \Big(\ee^{2\pi a t}\,h\,\psi(\delta^{-}\cdot)\Big)\Big\|_{L^2_t}^2\,.
\end{multline}
Here we use the notation $|\partial|^b$ for the fractional differentiation operator given by 
\begin{equation*}
(|\partial|^b g)\,\,\widehat{}\,\,(\eta)\;=\;|2 \pi \eta|^b \,\hat{g}(\eta)\,.
\end{equation*}
We now refer to the result of \cite[Theorem A.12]{Kenig_Ponce_Vega1} (c.f.\ also \cite[Theorem 2.8]{Kenig_Ponce_Vega}) which states that for all $\alpha \in (0,1)$ and $p \in (1,\infty)$ we have
\begin{equation}
\label{Fractional_Leibniz_rule_bound1}
\big\||\partial|^{\alpha}(fg)-f\,|\partial|^{\alpha}g\big\|_{L^p} \;\leq C(\alpha,p) \,\|g\|_{L^{\infty}} \, \big\||\partial|^{\alpha}f\big\|_{L^p}\,.
\end{equation}
Taking $\alpha=b$, $p=2$, $f=\ee^{2\pi \ii a t} \, h$ and $g=\psi(\delta^{-1}\cdot)$ in \eqref{Fractional_Leibniz_rule_bound1} we obtain
\begin{multline}
\label{Fractional_Leibniz_rule_bound}
\Big\||\partial|^b \Big(\ee^{2\pi \ii a t} \, h \, \psi(\delta^{-1}\cdot)\Big)-\ee^{2\pi \ii a t} \,h\,|\partial|^b \Big(\psi(\delta^{-1}\cdot)\Big)\Big\|_{L^2_t} \;\leq\;C(b) \|\psi(\delta^{-1}\cdot)\|_{L^\infty_t} \, \big\||\partial|^b (\ee^{2\pi \ii a t} h)\big\|_{L^2_t} 
\\
\;\leq\; C(b,\psi) \, \big\||\partial|^b (\ee^{2\pi \ii a t} h)\big\|_{L^2_t}\,.
\end{multline}
By Plancherel's theorem we have
\begin{equation}
\label{Xsb_Lemma_iii5}
\big\||\partial|^b (\ee^{2\pi \ii a t} h)\big\|_{L^2_t}^2 \;=\;C(b) \int_{-\infty}^{\infty} \dd \eta \, |\hat{h}(\eta-a)|^2\, |\eta|^{2b}  \;=\;C(b) \int_{-\infty}^{\infty} \dd \eta \, |\hat{h}(\eta)|^2\, |\eta+a|^{2b}\,.
\end{equation}
From \eqref{Xsb_Lemma_iii3}--\eqref{Xsb_Lemma_iii4} and \eqref{Fractional_Leibniz_rule_bound}--\eqref{Xsb_Lemma_iii5}, we deduce that \eqref{Xsb_Lemma_iii2} follows if we show that 
\begin{equation}
\label{Xsb_Lemma_iii6}
I \;\deq\; \Big\|\ee^{2\pi \ii a t} \,h\,|\partial|^b \Big(\psi(\delta^{-1}\cdot)\Big)\Big\|_{L^2_t}
\;\leq\;C(b,\psi) \, \delta^{\frac{1-2b}{2}} \, \bigg(\int_{-\infty}^{\infty} \dd \eta \, |\hat{h}(\eta)|^2\,\big(1+|\eta+a|\big)^{2b}\bigg)^{1/2}\,.
\end{equation}
Applying H\"{o}lder's inequality and Sobolev embedding with $b>\frac{1}{2}$, it follows that 
\begin{multline}
\label{Xsb_Lemma_iii7}
I \;\leq\; \big\|\ee^{2\pi \ii a t}h\big\|_{L^{\infty}_t} \, \Big\||\partial|^b \Big(\psi(\delta^{-1}\cdot)\Big)\Big\|_{L^2_t} \;\leq\; C(b) \, \big\|\ee^{2\pi \ii a t}h\big\|_{H^{b}_t} \, \Big\||\partial|^b \Big(\psi(\delta^{-1}\cdot)\Big)\Big\|_{L^2_t}
\\
\;=\; C(b)\, \bigg(\int_{-\infty}^{\infty} \dd \eta \, |\hat{h}(\eta)|^2\,\big(1+|\eta+a|\big)^{2b}\bigg)^{1/2}\, \Big\||\partial|^b \Big(\psi(\delta^{-1}\cdot)\Big)\Big\|_{L^2_t} \,.
\end{multline}
By scaling, we compute 
\begin{equation}
\label{Xsb_Lemma_iii8}
\Big\||\partial|^b \Big(\psi(\delta^{-1}\cdot)\Big)\Big\|_{L^2_t}
\;=\; C(b)\, \delta^{\frac{1-2b}{2}}\,\bigg(\int_{-\infty}^{\infty} \dd \eta\, |\eta|^{2b} \, |\hat{\psi}(\eta)|^2\bigg)^{1/2} \;\leq\; C(b,\psi)\,\delta^{\frac{1-2b}{2}}\,.
\end{equation}
By \eqref{Xsb_Lemma_iii7}--\eqref{Xsb_Lemma_iii8}, we deduce \eqref{Xsb_Lemma_iii6}, which in turn implies \eqref{Xsb_Lemma_iii2}. The claim (iii) now follows.

We now prove (iv). By density, it suffices to consider $f \in \cal S(\Lambda_x \times \mathbb{R}_t)$. 
We write
\begin{equation}
\label{J_definition}
J \;\deq\; \psi(t/\delta)\,\int_0^t \dd t'\,\ee^{\ii (t-t') \Delta}\,f(t') 
\;=\;\psi(t/\delta)\,\int_0^t \dd t'\,\int_{-\infty}^{\infty} \dd \eta\, \sum_k \tilde{f}(k,\eta) \,\ee^{2\pi \ii k x} \, \ee^{-4\pi^2 \ii k^2 t} \, \ee^{2\pi \ii \eta t'(\eta+2\pi k^2)}\,.
\end{equation} 
By the assumptions on $f$ we can interchange the orders of integration so that we first integrate in $t'$. Evaluating the $t'$ integral, it follows that
\begin{equation}
\label{I_1+I_2}
J\;=\;\psi(t/\delta)\,\int_{-\infty}^{\infty} \dd \eta\, \sum_k \tilde{f}(k,\eta) \,\ee^{2\pi \ii k x} \, \frac{\ee^{2\pi \ii \eta t}-\ee^{-4\pi^2 \ii k^2t}}{2\pi \ii (\eta+2\pi k^2)}\;=\;I_1+I_2\,,
\end{equation}
where
\begin{equation}
\label{I_1}
I_1 \;\deq\;\psi(t/\delta)\,\int_{-\infty}^{\infty} \dd \eta\, \sum_k \tilde{f}(k,\eta) \,\ee^{2\pi \ii k x} \, \Xi(\eta+2\pi k^2)\,\frac{\ee^{2\pi \ii \eta t}-\ee^{-4\pi^2 \ii k^2t}}{2\pi \ii (\eta+2\pi k^2)}
\end{equation}
\begin{equation}
\label{I_2}
I_2 \;\deq\;\psi(t/\delta)\,\int_{-\infty}^{\infty} \dd \eta\, \sum_k \tilde{f}(k,\eta) \,\ee^{2\pi \ii k x} \, \Big(1-\Xi(\eta+2\pi k^2)\Big)\,\frac{\ee^{2\pi \ii \eta t}-\ee^{-4\pi^2 \ii k^2t}}{2\pi \ii (\eta+2\pi k^2)}\,,
\end{equation}
for a function $\Xi \in C_c^{\infty}(\mathbb{R})$ such that 

\begin{equation}
\label{assumptions on Xi}
\Xi\;=\;1\,\quad \mbox{for} \quad |y| \;\leq\; 1/2\,, \quad \mbox{and} \quad 
\Xi\;=\;0\,\quad \mbox{for} \quad |y| \;>\; 1\,.
\end{equation}

We first consider $I_1$. By writing a Taylor expansion for the factor 
\begin{equation*}
\frac{\ee^{2\pi \ii \eta t}-\ee^{-4\pi^2 \ii k^2t}}{2\pi \ii (\eta+2\pi k^2)} \;=\;\ee^{-4\pi^2\ii k^2 t} \,\frac{\ee^{2\pi \ii t (\eta+2\pi k^2)}-1}{2\pi \ii (\eta+2\pi k^2)}
\end{equation*}
in the integrand of \eqref{I_1}, we have
\begin{equation}
\label{I_1A}
I_1\;=\;\sum_{l=1}^{\infty} \frac{(2\pi \ii)^{l-1}\,t^l}{l!} \, \psi(t/\delta) \, \int_{-\infty}^{\infty} \dd \eta\,\sum_k \tilde{f}(k,\eta)\,\ee^{2\pi \ii k x} \, \Xi(\eta+2\pi k^2)\,\ee^{-4\pi^2 \ii k^2 t}\, (\eta+2\pi k^2)^{l-1}\,.
\end{equation}
Here, we can justify taking the sum in $l$ outside of the integral in $\eta$ and the sum in $k$ using the assumption that $f \in \cal S(\Lambda_x \times \mathbb{R}_t)$ as before. Setting 
\begin{equation}
\label{psi_l_definition}
\psi_l(y) \;\deq\; y^l \psi(y)\,
\end{equation}
and using the Fourier representation of $\ee^{\ii t \Delta}$, we can rewrite \eqref{I_1A} as
\begin{equation}
\label{I_1B}
I_1 \;=\; \sum_{l=1}^{\infty}  \frac{(2\pi \ii)^{l-1}\,\delta^l}{l!} \, \psi_l(t/\delta) \,\bigg[\ee^{\ii t \Delta} \, \bigg( \int_{-\infty}^{\infty} \dd \eta\,\sum_k \tilde{f}(k,\eta)\,\ee^{2\pi \ii k x} \, \Xi(\eta+2\pi k^2)\,(\eta+2\pi k^2)^{l-1}\bigg)\bigg]\,.
\end{equation}
Using the triangle inequality and \eqref{free_evolution_Xsb_bound}, we obtain from \eqref{I_1B} that
\begin{multline}
\label{I_1C}
\|I_1\|_{X^{\sigma,b}} \;\leq\; \sum_{l=1}^{\infty}  \Bigg(\frac{(2\pi)^{l-1}\,\delta^l}{l!} \,  \bigg[\delta^2 \, \int_{-\infty}^{\infty} \dd \eta \,  \,\big|\hat{\psi}_l(\delta \eta)\big|^2 \,(1+|\eta|)^{2b}\bigg]^{1/2} 
\\
\times
\bigg\|\bigg( \int_{-\infty}^{\infty} \dd \eta\,\sum_k \tilde{f}(k,\eta)\,\ee^{2\pi \ii k x} \, \Xi(\eta+2\pi k^2)\,(\eta+2\pi k^2)^{l-1}\bigg)\bigg\|_{H^{\sigma}}\Bigg)\,.
\end{multline}
By scaling, we compute
\begin{equation}
\label{psi_l_A}
\delta^2 \, \int_{-\infty}^{\infty} \dd \eta \,  \,\big|\hat{\psi}_l(\delta \eta)\big|^2 \;=\;\delta \|\psi_l\|_{L^2}^2\,.
\end{equation}
\begin{equation}
\label{psi_l_B}
\delta^2 \, \int_{-\infty}^{\infty} \dd \eta \,  \,\big|\hat{\psi}_l(\delta \eta)\big|^2\,|\eta|^{2b} \;=\;\delta^{1-2b} \|\psi_l\|_{\dot{H}^b}^2\,.
\end{equation}
In particular, from \eqref{psi_l_A}--\eqref{psi_l_B}, we deduce that 
\begin{multline}
\label{psi_l}
\delta^2 \, \int_{-\infty}^{\infty} \dd \eta \,  \,\big|\hat{\psi}_l(\delta \eta)\big|^2 \,(1+|\eta|)^{2b} \;\leq\; C(b) \,\delta^{1-2b}\, \Big(\|\psi_l\|_{L^2}^2+\|\psi_l\|_{\dot{H}^b}^2\Big) \;\leq\; C(b) \,\delta^{1-2b}\, \Big(\|\psi_l\|_{L^2}^2+\|\psi_l\|_{\dot{H}^1}^2\Big)
\\
\;\leq\;
C(b) \,\delta^{1-2b}\, \Big(\|y^l \psi\|_{L^2}^2+\|y^l \psi'\|_{L^2}^2 + \|ly^{l-1}\psi\|_{L^2}^2\Big)\;\leq\; C(b) \,\delta^{1-2b}\,\big(C(\psi)\big)^l \,.
\end{multline} 
Above we used \eqref{psi_l_definition} and the assumption that $\psi \in C_c^{\infty}(\mathbb{R})$.

Moreover, we have 
\begin{multline*}
\bigg\|\bigg( \int_{-\infty}^{\infty} \dd \eta\,\sum_k \tilde{f}(k,\eta)\,\ee^{2\pi \ii k x} \, \Xi(\eta+2\pi k^2)\,(\eta+2\pi k^2)^{l-1}\bigg)\bigg\|_{H^{\sigma}}^2
\\
\;=\; \sum_k (1+|2 \pi k|)^{2\sigma}\,\Bigg| \int_{-\infty}^{\infty} \dd \eta\,\sum_k \tilde{f}(k,\eta)\,\Xi(\eta+2\pi k^2)\,(\eta+2\pi k^2)^{l-1}\Bigg|^2 
\\
\;\leq\; C(\Xi)\,\sum_k (1+|2\pi k|)^{2\sigma} \, \Bigg(\int_{|\eta+2\pi k^2| \;\leq\;1} \dd \eta\,|\tilde{f}(k,\eta)|\Bigg)^2\,,
\end{multline*}
uniformly in $l$. In the above line, we used the assumptions \eqref{assumptions on Xi} on the function $\Xi$. In particular, the above expression is 
\begin{multline}
\label{I_1D}
\;\leq\; C(\Xi)\,\sum_k (1+|2 \pi k|)^{2\sigma} \, \Bigg(\int_{|\eta+2\pi k^2| \;\leq\;1} \dd \eta\,\frac{|\tilde{f}(k,\eta)|}{1+|\eta+2\pi k^2|}\Bigg)^2 
\\
\;\leq\; C(\Xi)\,\sum_k (1+|2 \pi k|)^{2\sigma} \, \Bigg(\int_{-\infty}^{\infty} \dd \eta\,\frac{|\tilde{f}(k,\eta)|}{(1+|\eta+2\pi k^2|)^{1-b}} \, \frac{1}{(1+|\eta+2\pi k^2|)^b}\Bigg)^2
\\
\;\leq\; C(b,\Xi) \,\Bigg(\int_{-\infty}^{\infty} \dd \eta\,\sum_k  (1+|2 \pi k|)^{2\sigma} \,(1+|\eta+2\pi k^2|)^{2(b-1)}\,|\tilde{f}(k,\eta)|^2 \Bigg) \;=\;C(b,\Xi)\,\|f\|_{X^{s,b-1}}^2\,.
\end{multline}
In the last inequality we used the Cauchy-Schwarz inequality in $\eta$ and the assumption that $b>\frac{1}{2}$. Combining \eqref{I_1C}, \eqref{psi_l}--\eqref{I_1D}, it follows that 
\begin{equation}
\label{I_1_bound}
\|I_1\|_{X^{\sigma,b}} \;\leq\; C(b,\psi) \, \delta^{\frac{1-2b}{2}}\,\|f\|_{X^{\sigma,b-1}}\,.
\end{equation}
(Since $\Xi$ is chosen as an arbitrary $C_c^\infty$ function satisfying \eqref{assumptions on Xi}, we do not keep track of the dependence of the implied constant on this function).

We now consider $I_2$ as defined in \eqref{I_2}. We write
\begin{equation}
\label{I_21-I_22}
I_2 \;=\;I_{2,1}-I_{2,2}\,,
\end{equation}
where 
\begin{equation*}
I_{2,1} \;\deq\; \psi(t/\delta)\,\int_{-\infty}^{\infty} \dd \eta\, \sum_k \tilde{f}(k,\eta) \,\ee^{2\pi \ii k x} \, \Big(1-\Xi(\eta+2\pi k^2)\Big)\,\frac{\ee^{2\pi \ii \eta t}}{2\pi \ii (\eta+2\pi k^2)}\,.
\end{equation*}
\begin{equation*}
I_{2,2} \;\deq\; \psi(t/\delta)\,\int_{-\infty}^{\infty} \dd \eta\, \sum_k \tilde{f}(k,\eta) \,\ee^{2\pi \ii k x} \, \Big(1-\Xi(\eta+2\pi k^2)\Big)\,\frac{\ee^{-4\pi^2 \ii k^2t}}{2\pi \ii (\eta+2\pi k^2)}\,.
\end{equation*}
We first estimate $I_{2,1}$. By claim (iii), we have
\begin{equation}
\label{I_21_boundA}
\|I_{2,1}\|_{X^{\sigma,b}} \;\leq\;C(b,\psi)\,\delta^{\frac{1-2b}{2}}\,\bigg\|\int_{-\infty}^{\infty} \dd \eta\, \sum_k \tilde{f}(k,\eta)\,\frac{\big(1-\Xi(\eta+2\pi k^2)\big)}{\eta+2\pi k^2} \,\ee^{2\pi \ii k x + 2\pi \ii \eta t} \bigg\|_{X^{\sigma,b}}\,.
\end{equation}
Note that, by the assumptions \eqref{assumptions on Xi} on $\Xi$ we have
\begin{equation}
\label{I_21_boundB}
\bigg|\tilde{f}(k,\eta)\,\frac{\big(1-\Xi(\eta+2\pi k^2)\big)}{\eta+2\pi k^2}\bigg| \;\leq\; C(\Xi) \, \frac{|\tilde{f}(k,\eta)|}{\big(1+|\eta+2\pi k^2|\big)}\,.
\end{equation}
Combining \eqref{I_21_boundA}--\eqref{I_21_boundB} and recalling the definition of $\|\cdot\|_{X^{\sigma,b}}$ we deduce that
\begin{multline}
\label{I_21_bound}
\|I_{2,1}\|_{X^{\sigma,b}} \;\leq\;C(b,\psi,\Xi) \, \delta^{\frac{1-2b}{2}}\,
\Big\|\big(1+|2\pi k|\big)^{\sigma}\,\big(1+|\eta+2\pi k^2|\big)^{b-1}\, |\tilde{f}(k,\eta)|\Big\|_{L^2_{\eta} l^2_k} 
\\
\;=\;C(b,\psi,\Xi) \, \delta^{\frac{1-2b}{2}}\, \|f\|_{X^{\sigma,b-1}}\,.
\end{multline}
In particular, in the last line, we used that the $X^{\sigma,b}$ norm depends only on the absolute value of the spacetime Fourier transform.

We now estimate $I_{2,2}$. Let us note that
\begin{equation}
\label{I_22_A}
I_{2,2}\;=\;\psi(t/\delta)\,\ee^{\ii t \Delta} \, \Bigg[\sum_k \Bigg(\int_{-\infty}^{\infty} \dd \eta\,\tilde{f}(k,\eta)  \, \frac{\big(1-\Xi(\eta+2\pi k^2)\big)}{2\pi \ii (\eta+2\pi k^2)} \Bigg)\,\ee^{2\pi \ii k x}\Bigg]\,.
\end{equation}
From \eqref{I_22_A} and part (ii) we obtain
\begin{equation*}
\|I_{2,2}\|_{X^{\sigma,b}} \;\leq\; C(b,\psi)\,\delta^{\frac{1-2b}{2}}\,\Bigg\|\sum_k \Bigg(\int_{-\infty}^{\infty} \dd \eta\,\tilde{f}(k,\eta)  \, \frac{\big(1-\Xi(\eta+2\pi k^2)\big)}{2\pi \ii (\eta+2\pi k^2)} \Bigg)\,\ee^{2\pi \ii k x}\Bigg\|_{H^{\sigma}},
\end{equation*}
which, by using the definition of $\|\cdot\|_{H^\sigma}$ and \eqref{I_21_boundB} is
\begin{equation}
\label{I_22_B}
\;\leq\; C(b,\psi,\Xi) \, \delta^{\frac{1-2b}{2}}\,\Bigg[\sum_k \big(1+ |2\pi k|\big)^{2\sigma} \,\Bigg(\int_{-\infty}^{\infty} \dd \eta\,\frac{|\tilde{f}(k,\eta)|}{\big(1+|\eta+2\pi k^2|\big)} \Bigg)^2 \Bigg]^{1/2}\,.
\end{equation}
By writing 
\begin{equation*}
\frac{1}{\big(1+|\eta+2\pi k^2|\big)}\;=\;\frac{1}{\big(1+|\eta+2\pi k^2|\big)^{1-b}}\,\frac{1}{\big(1+|\eta+2\pi k^2|\big)^b}
\end{equation*}
in the integrand in \eqref{I_22_B} and applying the Cauchy-Schwarz inequality in $\eta$ analogously as in the proof of \eqref{I_1D} above, it follows that 
\begin{multline}
\label{I_22_bound}
\|I_{2,2}\|_{X^{\sigma,b}} \;\leq\; C(b,\psi,\Xi) \, \delta^{\frac{1-2b}{2}}\,\Bigg[\sum_k \big(1+ |2\pi k|\big)^{2\sigma} \,\int_{-\infty}^{\infty} \dd \eta\,\frac{|\tilde{f}(k,\eta)|^2}{\big(1+|\eta+2\pi k^2|\big)^{2(1-b)}}  \Bigg]^{1/2}
\\
\;=\; C(b,\psi,\Xi) \, \delta^{\frac{1-2b}{2}}\, \|f\|_{X^{\sigma,b-1}}\,.
\end{multline}
From \eqref{I_21-I_22}, \eqref{I_21_bound} and \eqref{I_22_bound}, we obtain
\begin{equation}
\label{I_2_bound}
\|I_2\|_{X^{\sigma,b}} \;\leq\; C(b,\psi) \, \delta^{\frac{1-2b}{2}}\,\|f\|_{X^{\sigma,b-1}}\,.
\end{equation}
(As in \eqref{I_1_bound}, we do not emphasize the $\Xi$-dependence in the implied constant).
Claim (iv) now follows from \eqref{J_definition}--\eqref{I_1+I_2}, \eqref{I_1_bound} and \eqref{I_2_bound}.

Claim (v) is proved in \cite[Proposition 2.6]{Bourgain_1993}. For an alternative proof, see also \cite[Proposition 2.13]{Tao}. Claim (vi) follows from part (v) by duality. 

Finally, we prove claim (vii).
By part (v), it suffices to prove 
\begin{equation}
\label{Xsb space lemma Claim 7}
\big\|\psi(t/\delta)f\big\|_{X^{0,3/8}} \;\leq\;C(b)\,\delta^{\theta_0}\,\|f\|_{X^{0,b}}\,
\end{equation}
for appropriately chosen $\theta_0>0$.
In order to prove claim \eqref{Xsb space lemma Claim 7} we first note that 
\begin{equation}
\label{Xsb interpolation 1}
X^{0,\frac{1}{4}+} \hookrightarrow L^4_t L^2_x\,,
\end{equation}
 which follows by using $X^{0,0}=L^2_{t,x}$, part (i) with $\sigma=0$ and interpolation. By an additional interpolation step, it follows that
\begin{equation}
\label{Xsb interpolation 2}
\big\|\psi(t/\delta)f\big\|_{X^{0,3/8}} \;\leq\; \big\|\psi(t/\delta)f\big\|_{X^{0,0}}^\theta \, \big\|\psi(t/\delta)f\big\|_{X^{0,b}}^{1-\theta} \quad \mbox{for} \quad \theta \deq \frac{b-\frac{3}{8}}{b} \;=\;\frac{1}{4}+\,.
\end{equation}
We have, by H\"{o}lder's inequality and \eqref{Xsb interpolation 1}
\begin{equation}
\label{Xsb interpolation 3}
\big\|\psi(t/\delta)f\big\|_{X^{0,0}} \;=\;\big\|\psi(t/\delta)f\big\|_{L^2_{t,x}} \;\leq\; \|\psi(t/\delta)\|_{L^4_t} \, \|f\|_{L^4_tL^2_x} \;\leq\; C\delta^{\frac{1}{4}} \,\|f\|_{X^{0,\frac{1}{4}+}} \;\leq\; C\delta^{\frac{1}{4}} \,\|f\|_{X^{0,b}}\,. 
\end{equation}
From \eqref{Xsb interpolation 2}-\eqref{Xsb interpolation 3} and part (iii), we obtain claim (vii) with 
\begin{equation}
\label{theta_0 definition}
\theta_0\;\deq\;\frac{\theta}{4}+(1-\theta)\frac{1-2b}{2} \;>\;0\,.
\end{equation}
\end{proof}

\end{document}